\newtheorem{lemma}{Lemma}
\newtheorem{theorem}{Theorem}
\DeclareMathOperator{\ex}{\mathbb{E}}
\DeclareMathOperator{\tr}{tr}
\DeclareMathOperator{\diag}{diag}
\DeclareMathOperator{\rank}{rank}
\DeclareMathOperator{\convhull}{co}
\begin{document}

\title{Theory and Experiment for Wireless-Powered Sensor Networks: How to Keep Sensors Alive}

\author{Kae Won Choi, \IEEEmembership{Senior Member, IEEE}, Phisca Aditya Rosyady, Lorenz Ginting, Arif Abdul Aziz, Dedi Setiawan, and Dong In Kim, \IEEEmembership{Senior Member, IEEE}
\thanks{K.~W.~Choi, A.~A.~Aziz, and D.~I.~Kim are with the School of Information and Communication Engineering, Sungkyunkwan University (SKKU), Suwon, Korea (email: kaewon.choi@gmail.com, arif.abdul.aziz92@gmail.com, and dikim@skku.ac.kr).}
\thanks{P.~A.~Rosyady and L.~Ginting are with the Dept.~of Computer Science and Engineering, Seoul National University of Science and Technology, Korea (email: adityaphisca@gmail.com and lorenzgins@gmail.com).}
\thanks{D.~Setiawan is with the Convergence Institute of Biomedical Engineering \& Biomaterials, Seoul National University of Science and Technology, Korea (email: morethanubabe@gmail.com).}
\thanks{This work was supported by the National Research Foundation of Korea (NRF) grant funded by the Korean government (MSIP) (2014R1A5A1011478).}
}

\maketitle

\begin{abstract}
In this paper, we investigate a multi-node multi-antenna wireless-powered sensor networks (WPSN) comprised of one power beacon and multiple sensor nodes.
We have implemented a real-life multi-node multi-antenna WPSN testbed that operates in real time.
We propose a beam-splitting beamforming technique that enables a power beacon to split microwave energy beams towards multiple nodes for simultaneous charging.
We experimentally demonstrate that the beam-splitting beamforming technique achieves the Pareto optimality.
For perpetual operation of the sensor nodes, we adopt an energy neutral control algorithm that keeps a sensor node alive by balancing the harvested and consumed power.
The joint beam-splitting and energy neutral control algorithm is designed by means of the Lyapunov optimization technique.
By experiments, we have shown that the proposed algorithm can successfully keep all sensor nodes alive by optimally splitting energy beams towards multiple sensor nodes.
\end{abstract}

\begin{IEEEkeywords}
RF energy transfer, energy beamforming, WPSN, energy neutral operation, stored energy evolution, energy harvesting
\end{IEEEkeywords}

\section{Introduction}\label{section:introduction}

The RF energy transfer is a type of wireless power transfer (WPT) techniques, which makes use of electromagnetic radiation for far-field power transfer \cite{Huang:2015}.
Even though the amount of delivered power steeply deteriorates over distances due to path loss, the RF energy transfer technique can transfer energy enough to power up a sensor node which requires small energy for its operation \cite{Xie:2013}.

In this paper, we study a wireless-powered sensor network (WPSN) with one power beacon that utilizes multiple transmit antennas to power up multiple sensor nodes simultaneously, as described in Fig.~\ref{fig:model}.
We have conducted experiment-driven researches on the WPSN in our previous works \cite{Choi:2016_2} and \cite{Choi:2017}.
In \cite{Choi:2016_2}, we have provided a comprehensive system model of the WPSN with a single transmit antenna based on the experimental results on a real-life testbed.
In another work \cite{Choi:2017}, we have studied the multi-antenna WPSN, in which the power beacon is able to steer the microwave energy beam towards specific directions.
In \cite{Choi:2017}, we have proposed a channel estimation algorithm for energy beamforming and an adaptive duty cycling algorithm for energy neutral operation.
In addition, the proposed algorithms have been implemented on a full-fledged WPSN testbed, and were verified by extensive experiments.

\begin{figure}
	\centering
    \includegraphics[width=8cm, bb=1in 0.2in 11.3in 9.2in]{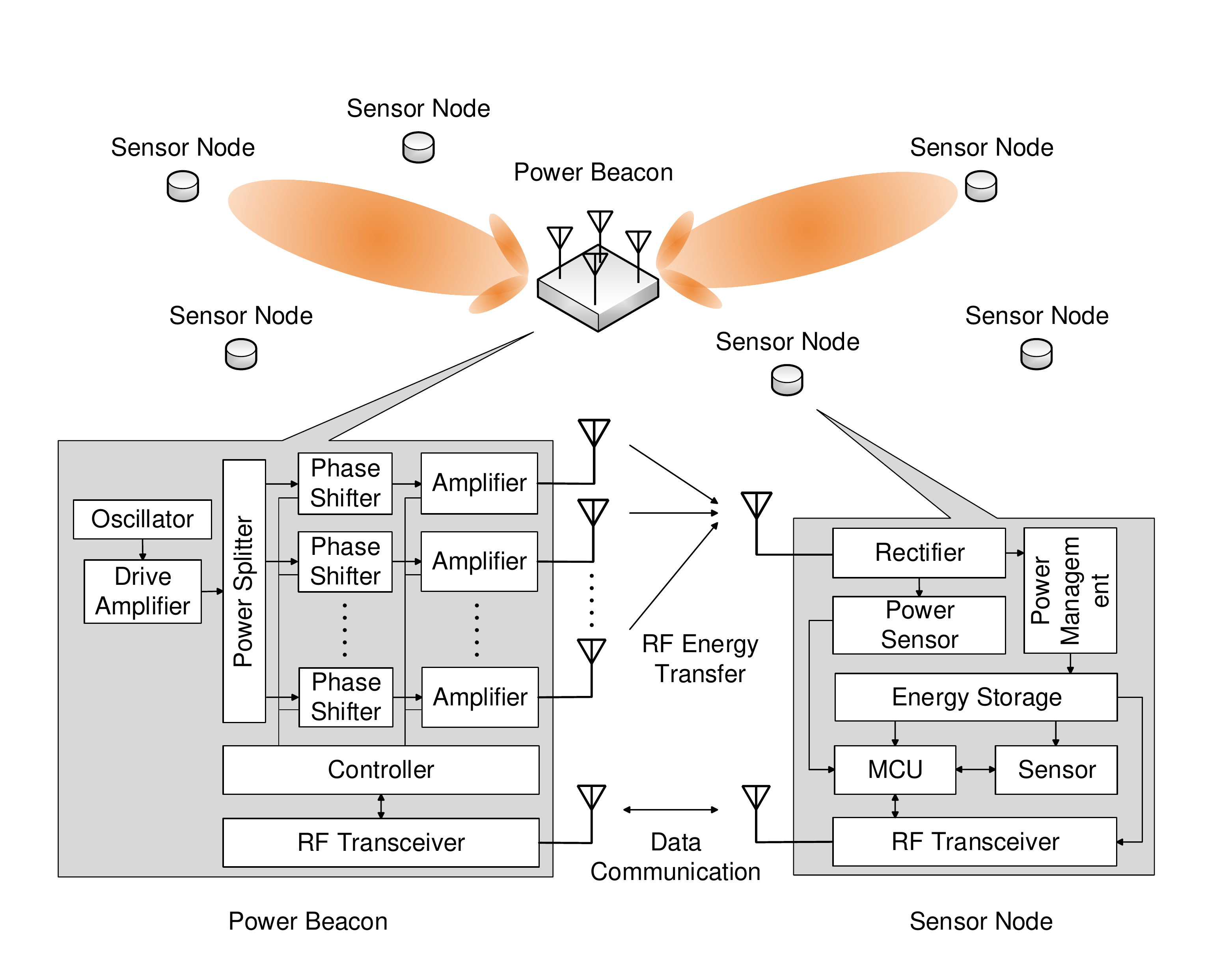}
    \caption{Multi-node multi-antenna WPSN model.}
    \label{fig:model}
\end{figure}

This paper extends our previous work \cite{Choi:2017} for the WPSN with only a single sensor node to a more generalized WPSN model accommodating multiple sensor nodes.
In the multi-node context, it is a challenging task to simultaneously charge multiple sensor nodes by dynamically forming multiple microwave energy beams.
In addition, the beamforming method should be designed tightly coupled with the energy neutral control algorithm that keeps a sensor node alive by balancing the harvested and consumed power.
Therefore, we focus on designing a joint beam-splitting and energy neutral control algorithm to support multiple sensor nodes in this paper.

We have built a multi-node multi-antenna WPSN testbed that operates in real time, by assembling commercial off-the-shelf (COTS) devices.
In this testbed, the sensor nodes operate without any other power source but the RF energy transferred from the power beacon.
Considering that the practical WPT model is not well established yet, it is of paramount importance to experimentally investigate the characteristics of the multi-node multi-antenna WPSN.
This WPSN testbed has facilitated the establishment of the realistic WPSN system model, which has helped us to identify core problems that needs to be solved in the multi-node multi-antenna WPSN design.

The WPT for simultaneously charging multiple devices has been studied by some previous works \cite{Liu:2014,Shi:2014,Son:2014,Khandaker:2014,Wang:2016,Bi:2016, Waters:2015,Mishra:20152}.
The authors of \cite{Liu:2014} have researched the multi-node multi-antenna wireless-powered communication network (WPCN), in which the information is transmitted in the uplink by using the energy transferred in the downlink.
In \cite{Shi:2014,Son:2014,Khandaker:2014}, a multiuser multiple-input single-output (MISO) downlink system is investigated for the simultaneous wireless information and power transfer (SWIPT).
Simultaneously charging multiple sensor nodes by the directional WPT in large-scale sensor networks is studied in \cite{Wang:2016}.
In \cite{Bi:2016}, a distributed transmit power allocation method is designed when multiple energy transmission and reception nodes coexist in the network.
However, all these researches \cite{Liu:2014,Shi:2014,Son:2014,Khandaker:2014,Wang:2016,Bi:2016} are theoretical works lacking experimental validation.

Experiment-driven works for the WPT with multiple nodes are very scarce.
For example, a multiple device charging WPT system is proposed and implemented in \cite{Waters:2015}, but the WPT in consideration is not the RF energy transfer but the near-field WPT via magnetic resonance.
In \cite{Mishra:20152}, the multi-hop and multi-path energy transfer is experimentally studied, but multiple device charging with a multi-antenna transmitter is not considered.
Therefore, the experimental research on the multi-node multi-antenna WPSN is required for demonstrating the viability and usefulness of such type of WPSNs.

In this paper, we first investigate the achievable receive power region of multiple sensor nodes when the linear and circular antenna arrays are used at the power beacon.
We present the actual receive power region experimentally derived in the testbed.
To achieve one point out of the receive power region, we can use time-sharing and beam-splitting beamforming techniques.
In the time-sharing beamforming technique, the power beacon concentrates an energy beam to a single node at a time, and multiple nodes are charged in a time-sharing manner.
On the other hand, the beam-splitting beamforming technique splits energy beams towards multiple nodes for charging multiple nodes at the same time.
In this paper, we propose the algorithms for realizing the time-sharing and beam-splitting beamforming techniques.
As shown in our previous work \cite{Choi:2015}, the Pareto frontier of the receive power region can be achieved by the beam-splitting beamforming technique.
We show how much gain can be achieved by using the beam-splitting beamforming technique over the time-sharing beamforming technique.

In order to realize a perpetual operation of a sensor node with ambient energy harvesting, many researchers have applied a duty cycle algorithm that autonomously adjusts wake-up and sleep cycles for controlling the power consumption.
For example, in \cite{Hsu:2006Oct}, the duty cycle algorithm takes into account the expected energy availability based on the history.
In this paper, we implement the duty cycle concept in the form of an awake frame ratio of each sensor node, and the energy neutral control algorithm controls the awake frame ratios based on stored energy levels of all sensor nodes.
Differently from the sensor networks with ambient energy harvesting, the WPSN has an intentional power source that provides RF energy by beamforming.
Therefore, the energy neutral control algorithm in the WPSN should be jointly designed with the beamforming algorithm.

We use the Lyapunov optimization technique \cite{Neely:2006} to design the optimal joint beam-splitting and energy neutral control algorithm.
We first formulate an optimization problem in which the sum of the utilities of all sensor nodes are maximized under the condition that the energy neutral operation is guaranteed.
Here, the utility is defined as a concave function of the awake frame ratio.
For dynamically solving this optimization problem, we define a quadratic Lyapunov function and design the algorithm that decides both beamforming weights and awake frame ratios minimizing the drift-plus-penalty function in each time frame.
We have conducted extensive experiments to test the performance of the proposed algorithm on the testbed.
The experimental results show that the proposed algorithm can achieve the optimality while maintaining the stored energies of all sensor nodes by adaptively controlling the energy beams and awake frame ratios.

The rest of paper is organized as follows.
We present the system model of the multi-node multi-antenna WPSN in Section \ref{section:model}.
We investigate the time-sharing and beam-splitting beamforming techniques in Section \ref{section:beamsplit}.
The joint beam-splitting and energy neutral control method is proposed in Section \ref{section:jointmethod}.
Section \ref{section:result} presents the experimental results, and Section \ref{section:conclusion} concludes the paper.

\section{System Model}\label{section:model}

\subsection{Multi-Node Multi-Antenna Wireless-Powered Sensor Network Model}

We consider a wireless-powered sensor network (WPSN) with one power beacon and multiple sensor nodes as described in Fig.~\ref{fig:model}.
The power beacon is connected to a power grid, and wirelessly supplies electrical energy to all the sensor nodes in the WPSN.
The power beacon is equipped with $N$ transmit antennas for RF energy transfer to adaptively focus energy beam towards the sensor nodes.
Henceforth, the $n$th transmit antenna will be called antenna $n$ ($=1,\ldots,N$).
There are $K$ sensor nodes, each of which has one receive antenna for harvesting the RF energy from the power beacon.
The $k$th sensor node will be called node $k$ ($=1,\ldots,K$).
A sensor node relies only on the power supplied by the power beacon without any other power source.

The detailed model of the power beacon and the sensor nodes in this paper is very similar to those for the single node WPSN model in our previous work \cite{Choi:2017}.
Therefore, we will briefly explain those models in this paper, and \cite{Choi:2017} can be referred to for more detailed model description.

In the power beacon, there are $N$ RF chains consisting of a phase shifter, a variable gain amplifier, and a transmit antenna.
The controller at the power beacon is able to decide the phase and magnitude of the RF energy signal transmitted from each transmit antenna by controlling the phase shifter and variable gain amplifier.
The RF transceiver at the power beacon is used to communicate with the sensor nodes.
A low-power communication technology is used for the RF transceiver (e.g., IEEE 802.15.4).
We assume that the frequency band for the communication is different from that for the RF energy transfer so that the RF energy transfer does not interfere with the communication.

A sensor node receives the RF energy signal from the power beacon via the receive antenna.
The rectifier converts the received RF energy signal to the DC power.
The harvested DC power is consumed by the active circuit components in the sensor node, and the remaining DC power is stored in the energy storage for future use.
The active circuit components drain energy from the energy storage in the case that the power consumption is higher than the harvested power.
The active circuit components include the micro controller unit (MCU) and the RF transceiver.
The MCU is a processor that controls the sensor node, and the RF transceiver is used for the communication with the power beacon.
A sensor node is capable of measuring the receive power of the RF energy signal and the amount of the stored energy in the energy storage.

\subsection{RF Energy Transfer Model}

We now set up the mathematical model of the energy beamforming for the RF energy transfer.
Let $w_n$ denote the beamforming weight of antenna $n$ in the power beacon, and ${\bold w} = (w_1,\ldots,w_N)^T$ denote the beamforming weight vector.
Then, the transmit power of antenna $n$ is $|w_n|^2$, and the total transmit power of the power beacon is ${\bold w}^H {\bold w}$.
We impose the following two types of the constraints on the transmit power.
The first transmit power constraint is a per-antenna power constraint such that
\begin{align}\label{eq:perantpow}
|w_n|^2\le P_\text{ant}, \text{ for all }n=1,\ldots,N,
\end{align}
where $P_\text{ant}$ is the maximum per-antenna transmit power.
This per-antenna power constraint is imposed because of the power limit of the amplifier in each RF chain.
The second transmit power constraint is a total power constraint such that
\begin{align}\label{eq:totpow}
{\bold w}^H {\bold w}\le P_\text{tot}.
\end{align}
where $P_\text{tot}$ is the maximum total transmit power.
The total power can be limited to meet the radio regulation on the radiated microwave power or to minimize the interference to other microwave devices.
Note that only the per-antenna power constraints are in effect if $P_\text{tot} \ge NP_\text{ant}$, and only the total power constraint is in effect if $P_\text{tot} \le P_\text{ant}$.

The channel gain from antenna $n$ to the receive antenna of node $k$ is denoted by $h_{k,n}$, and the channel gain vector of node $k$ is given by ${\bold h}_k=(h_{k,1},\ldots,h_{k,N})^T$.
In addition, the channel gain matrix is defined as ${\bold H}=({\bold h}_1,\ldots,{\bold h}_K)^T$.
The receive RF energy signal at node $k$ is
\begin{align}\label{eq:rxsig}
y_k = {\bold h}_k^T {\bold w}.
\end{align}
From \eqref{eq:rxsig}, the receive power at node $k$ is calculated as
\begin{align}\label{eq:rxpow}
\begin{split}
r_k &= |y_k|^2 = |{\bold h}_k^T {\bold w}|^2 = \tr({\bold w}^H{\bold h}_k^*{\bold h}_k^T {\bold w})\\
&= \tr({\bold h}_k^*{\bold h}_k^T{\bold w}{\bold w}^H)=\tr({\bold G}_k {\bold S}),
\end{split}
\end{align}
where ${\bold G}_k = {\bold h}_k^*{\bold h}_k^T$ and ${\bold S} = {\bold w}{\bold w}^H$.
Note that ${\bold G}_k$ and ${\bold S}$ are positive semidefinite matrices with rank one.
The receive power vector is defined as ${\bold r} = (r_1,\ldots,r_K)^T$.

\subsection{Channel Model for Linear and Circular Antenna Arrays}

In this subsection, we investigate the channel models when the power beacon is equipped with the linear and circular antenna arrays \cite{Balanis:2005}.
We use a spherical coordinate system to represent the location of nodes and antennas.
Let ${\boldsymbol \psi}^\text{no}_k = (d^\text{no}_k, \theta^\text{no}_k, \phi^\text{no}_k)$ denote a vector representing the location of node $k$ in a spherical coordinate system.
The radius, the elevation, and the azimuth of node $k$ are denoted by $d^\text{no}_k$, $\theta^\text{no}_k$, and $\phi^\text{no}_k$, respectively.
The reference point of the antenna array is located at the center of the spherical coordinate system.
Therefore, $d^\text{no}_k$ is equal to the distance from the reference point of the antenna array to node $k$.
Let ${\boldsymbol \psi}^\text{ant}_n = (d^\text{ant}_n, \theta^\text{ant}_n, \phi^\text{ant}_n)$ denote the vector representing the location of antenna $n$ in the spherical coordinate system.
In the case of the linear antenna array, the location of antenna $n$ is given as follows when $\zeta$ denotes an antenna spacing.
If $n<(N+1)/2$, we have $d^\text{ant}_n = ((N-1)/2-(n-1))\zeta$, $\theta^\text{ant}_n = \pi /2$, and $\phi^\text{ant}_n = \pi$.
If $n\ge (N+1)/2$, we have $d^\text{ant}_n = (n-1-(N-1)/2)\zeta$, $\theta^\text{ant}_n = \pi /2$, and $\phi^\text{ant}_n = 0$.
On the other hand, in the case of the circular antenna array, we have $d^\text{ant}_n = \xi$, $\theta^\text{ant}_n = \pi /2$, and $\phi^\text{ant}_n = (2\pi/N) n$, where $\xi$ is the radius of the circular antenna array.

From Friis equation and the antenna array equation, the receive power at node $k$ is given as
\begin{align}\label{eq:arrayrxpow}
r_k = \bigg(\frac{\lambda}{4\pi d^\text{no}_k}\bigg)^2 g_t(\theta^\text{no}_k,\phi^\text{no}_k) g_r |f(\theta^\text{no}_k,\phi^\text{no}_k)|^2,
\end{align}
where $\lambda$ is the wavelength, $g_t(\theta,\phi)$ is the antenna gain of a single transmit antenna element towards elevation $\theta$ and azimuth $\phi$, and $g_r$ is the antenna gain of the receive antenna at a sensor node.
We assume that the maximum gain direction of the receive antenna at each sensor node faces towards the antenna array, and hence the antenna gain of the receive antenna (i.e., $g_r$) is a constant.
In \eqref{eq:arrayrxpow}, $f(\theta,\phi)$ is an array factor towards elevation $\theta$ and azimuth $\phi$, which is given by
\begin{align}\label{eq:arrayfactor}
f(\theta,\phi) = \sum_{n=1}^N w_n\cdot \exp\big(j(2\pi/\lambda)\big\langle{\boldsymbol \psi}^\text{ant}_n,\widehat{\boldsymbol \psi}\big\rangle\big),
\end{align}
where $\widehat{\boldsymbol \psi} = (1,\theta,\phi)$ is a vector with a unit radius in a spherical coordinate system, `$\langle\cdot,\cdot\rangle$' is the inner product, and $j$ is an imaginary unit.

From \eqref{eq:rxpow}, \eqref{eq:arrayrxpow}, and \eqref{eq:arrayfactor}, the channel gain $h_{k,n}$ is
\begin{align}
\begin{split}
h_{k,n}& = \frac{\lambda}{4\pi d^\text{no}_k} \sqrt{g_t(\theta^\text{no}_k,\phi^\text{no}_k) g_r} \exp\big(j(2\pi/\lambda)\big\langle{\boldsymbol \psi}^\text{ant}_n,\widehat{\boldsymbol \psi}^\text{no}_k\big\rangle\big)\\
& = \frac{\lambda}{4\pi d^\text{no}_k} \sqrt{g_t(\theta^\text{no}_k,\phi^\text{no}_k) g_r} \\
&\qquad\times \exp(j(2\pi/\lambda) d^\text{ant}_n \sin(\theta^\text{no}_k)\cos(\phi^\text{no}_k-\phi^\text{ant}_n)),
\end{split}
\end{align}
where $\widehat{\boldsymbol \psi}^\text{no}_k = (1,\theta^\text{no}_k,\phi^\text{no}_k)$ and the elevation of antenna $n$ is $\theta^\text{ant}_n = \pi/2$ for $n=1,\ldots,N$.
When the linear antenna array is used, the channel gain is given as
\begin{align}
\begin{split}
&h_{k,n} = \frac{\lambda}{4\pi d^\text{no}_k} \sqrt{g_t(\theta^\text{no}_k,\phi^\text{no}_k) g_r} \\
&\quad \times \exp(j(2\pi/\lambda) (n-1-(N-1)/2)\zeta \sin(\theta^\text{no}_k)\cos(\phi^\text{no}_k)).
\end{split}
\end{align}
The channel gain in the case of the circular antenna array is
\begin{align}
\begin{split}
h_{k,n} = &\frac{\lambda}{4\pi d^\text{no}_k} \sqrt{g_t(\theta^\text{no}_k,\phi^\text{no}_k) g_r} \\
&\times \exp(j(2\pi/\lambda) \xi \sin(\theta^\text{no}_k)\cos(\phi^\text{no}_k-(2\pi/N) n)).
\end{split}
\end{align}

\subsection{Sensor Node Power Model}

In this subsection, we present the sensor node power model that describes how each sensor node harvests, consumes, and stores energy.
Since this sensor node power model is almost identical to that in our previous work \cite{Choi:2017}, only a brief explanation will be given here.

As in \cite{Choi:2017}, we consider the sensor node circuit model where the energy harvesting part, the energy storage part, and the energy consuming part are connected in parallel.
We assume that the energy storage part is a supercapacitor.
Then, the stored energy in the supercapacitor at node $k$ is $E_k = CV_k^2/2$, where $V_k$ is the voltage of the supercapacitor at node $k$ and $C$ is the capacitance of the supercapacitor.
Since all parts are connected in parallel, the voltages across the energy harvesting and energy consuming parts are the same as the supercapacitor voltage $V_k$.

First, we explain the energy harvesting part.
The energy harvesting part includes a rectifier that converts the received RF energy signal to the DC power.
Let $\rho_k$ denote the harvested power that is the amount of the rectified DC power at node $k$.
Due to the imperfection in the rectifier circuit, some amount of the power is lost in the course of the RF-to-DC conversion.
The energy harvesting efficiency is defined as the ratio of the harvested power to the receive power.
Although the energy harvesting efficiency is typically a function of the receive power \cite{Choi:2017}, we assume that the energy harvesting efficiency is a constant for simplicity.
This assumption is valid if the energy harvesting circuit is designed in such a way that the energy harvesting efficiency is flat over the receive power of interest.
Then, the harvested power at node $k$ is
\begin{align}\label{eq:rhok}
\rho_k = \eta\cdot r_k,
\end{align}
where $\eta$ is the energy harvesting efficiency.

Second, we explain the energy consuming part that consists of the MCU and RF transceiver.
To save the energy, a sensor node can be put into one of the following four modes: idle, active, receive, and transmit modes.
The set of all possible modes is ${\mathcal M}=\{\text{idle},\text{act},\text{rx},\text{tx}\}$, and each mode is indexed by $m\in {\mathcal M}$.
In the idle mode, both the MCU and RF transceiver are inactive, and very small power is consumed.
In the active mode, only the MCU is activated and the RF transceiver is inactive.
The RF transceiver is ready to receive data from the power beacon in the receive mode, and the RF transceiver sends data to the power beacon in the transmit mode.
Let $m_k$ denote the mode of node $k$.
Let $\delta(m,E)$ denote the consumed power when a sensor node is in mode $m$ and the stored energy is $E$.
According to \cite{Choi:2017}, the consumed power is given by
\begin{align}\label{eq:deltame}
\delta(m,E) = \frac{2}{C\cdot\zeta(m)} E + \sqrt{\frac{2}{C}}\cdot\xi(m)\sqrt{E},
\end{align}
where $\zeta(m)$ is the resistance of the constant resistance load in mode $m$ and $\xi(m)$ is the current of the constant current load in mode $m$.

Now, we explain the energy storage part.
The stored energy changes over time $t$ depending on the harvested and consumed power according to the following equation.
\begin{align}\label{eq:contevol}
\frac{\mathrm{d}E_k}{\mathrm{d}t} = \eta\cdot r_k - \delta(m_k,E_k) - \delta_\text{leak}(E_k).
\end{align}
The stored energy in the supercapacitor slowly leaks by itself.
In \eqref{eq:contevol}, $\delta_\text{leak}(E) = 2E/(CR_\text{leak})$ is the leakage power from the supercapacitor, where $R_\text{leak}$ is the leakage resistance,
Since the capacity of the energy storage is limited, the stored energy cannot exceed the maximum stored energy $E_\text{max}$.
The sensor node is blacked out if the stored energy goes below the minimum stored energy $E_\text{min}$.
Therefore, the stored energy should be kept above the minimum stored energy (i.e., $E_k\ge E_\text{min}$) for continuous operation.

\subsection{Testbed Implementation}

We have implemented a multi-node multi-antenna WPSN testbed with one power beacon and multiple sensor nodes as shown in Fig.~\ref{fig:testbedpic}.
This testbed is a multi-node extension of that in our previous work \cite{Choi:2017}.

\begin{figure}
    \centering
    \subfigure[Antenna array]{
        \label{fig:testbedpic1}\includegraphics[width=1.9cm, bb=0in 0in 5.3in 10in] {./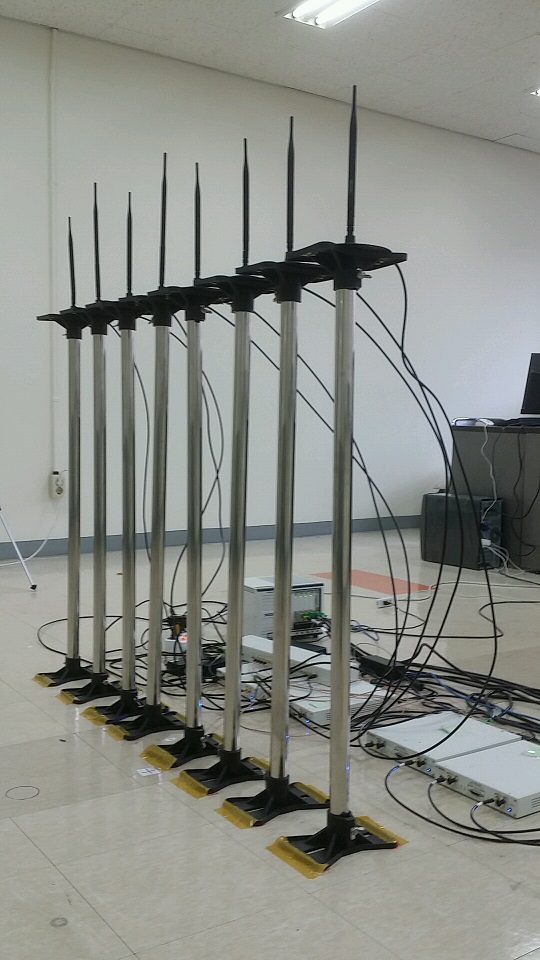}
        }~
    \subfigure[Testbed measurement setting]{
        \label{fig:testbedpic2}\includegraphics[width=7.1cm, bb=0in 0in 25in 12.5in] {./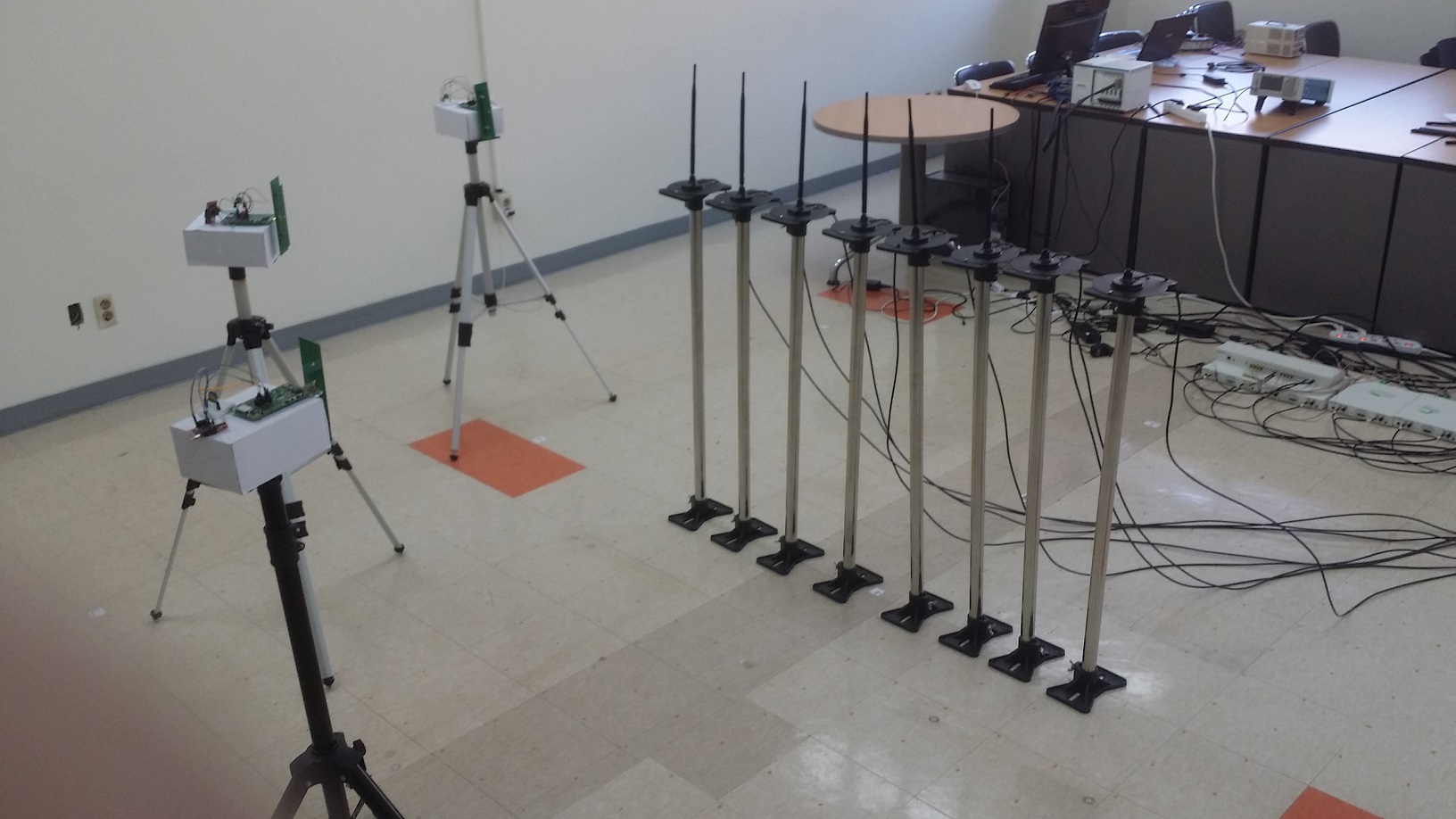}
        }
    \caption{Testbed}
    \label{fig:testbedpic}
\end{figure}

The power beacon consists of Universal Software Radio Peripherals (USRPs) and eight dipole antennas.
The laptop computer, which runs a LabVIEW software, generates eight streams of complex beamforming weights, and delivers them to the USRPs.
The USRPs convert the beamforming weights to the RF energy signal with the frequency of 920 MHz.
The maximum per-antenna transmit power is 140 mW.
This RF signal is wirelessly transmitted to the sensor nodes by means of the dipole antennas arranged in a linear or a circular antenna array.
We use a clock distributor, i.e. OctoClock, for the purpose of time and frequency synchronization between multiple USRPs.

Each sensor node consists of an energy harvesting board, a sensor board, and an energy storage device that are connected in parallel.
The received RF energy signal is converted to DC power by the rectifier inside the energy harvesting board (i.e., Powercast P1110 evaluation board).
The rectified DC power is used to charge the energy storage device (i.e., Samxon DDL series supercapacitor).
The sensor board is a Zolertia Z1 mote with a Contiki operating system as a software platform.
The active components of a Zolertia Z1 mote include the MCU (i.e., TI MSP430) and the RF transceiver (i.e., CC2420).
The CC2420 is an IEEE 802.15.4-compliant RF transceiver that operates on the 2.4 GHz frequency band.

\begin{figure}
	\centering
    \includegraphics[width=8cm, bb=2.2in 8.1in 7.3in 11.5in]{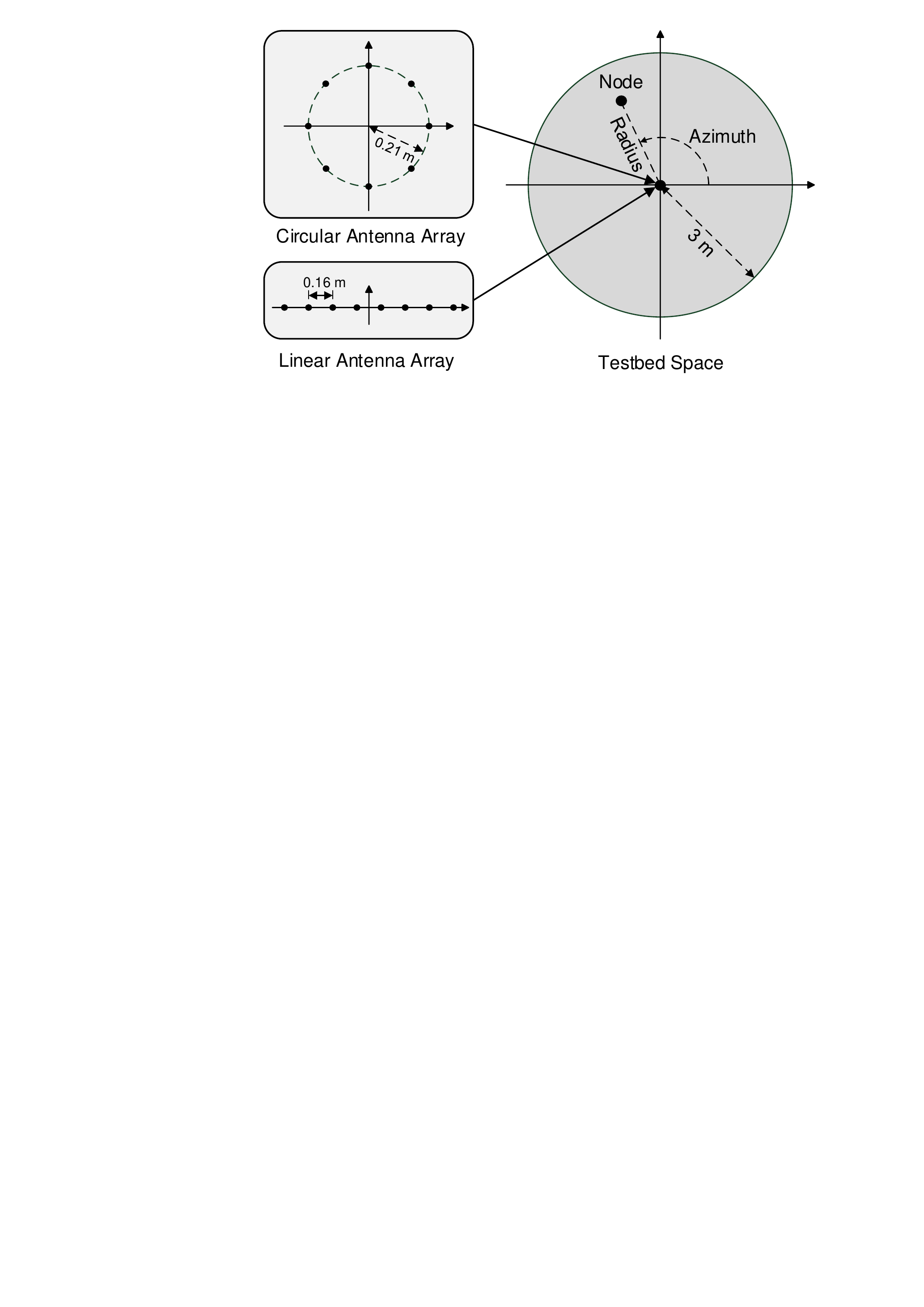}
    \caption{Testbed configuration.}
    \label{fig:testbed}
\end{figure}

The spatial dimension of the testbed space is a circular area with a 3 m radius as shown in Fig.~\ref{fig:testbed}.
The reference point of the antenna array of the power beacon is centered in this testbed space.
We use eight antenna elements in the antenna array.
If the linear antenna array is used, the antenna spacing is 0.16 m, which is the half-wavelength at the frequency of 920 MHz.
If the circular antenna array is used, the antenna elements are placed on a circle with the radius of 0.21 m.
Each sensor node is placed on a horizontal plane, and therefore the elevation of node $k$ is fixed to $\theta^\text{no}_k=\pi/2$.
Thus, the spherical coordinate of node $k$ is represented by $(d^\text{no}_k,\phi^\text{no}_k)$, where $d^\text{no}_k$ is the radius and $\phi^\text{no}_k$ is the azimuth of node $k$.

\section{Beam-Splitting and Time-Sharing Beamforming Techniques for Simultaneously Charging Multiple Sensor Nodes}\label{section:beamsplit}

\subsection{Receive Power Region and Beamforming Techniques}

When the channel gains are given, we can obtain various receive power vectors by controlling the beamforming weight vector.
Let us define the instantaneous receive power region as the set of all possible receive power vectors obtained by varying the beamforming weight vector under the transmit power constraints.
Then, the instantaneous receive power region is
\begin{align}\label{eq:instrxpowregion}
\begin{split}
{\mathcal R} = &\{{\bold r}=(r_1,\ldots,r_K)^T|\\
&\qquad r_k = |{\bold h}_k^T {\bold w}|^2\text{ for all $k=1,\ldots,K$},\\
&\qquad |w_n|^2\le P_\text{ant} \text{ for all }n=1,\ldots,N,\\
&\qquad {\bold w}^H {\bold w}\le P_\text{tot}\}.
\end{split}
\end{align}
If the power beacon alternates between different beamforming weight vectors, we can average the receive power vectors obtained by those beamforming weight vectors.
Such average receive power vectors constitute the average receive power region.
The average receive power region is the convex hull of the receive power region such that
\begin{align}\label{eq:avgrpr}
\overline{\mathcal R} = \convhull({\mathcal R}),
\end{align}
where $\convhull({\mathcal X})$ denotes the convex hull of ${\mathcal X}$.

The beamforming technique picks up one operating point out of the average receive power region $\overline{\mathcal R}$ so that the RF power can be properly distributed among sensor nodes.
In this paper, we consider two types of beamforming techniques, one is the time-sharing beamforming technique in Section \ref{section:timesharing} and the other is the beam-splitting beamforming technique in Section \ref{section:beamsplitting}.

\subsection{Time-Sharing Beamforming Technique}\label{section:timesharing}

The time-sharing beamforming technique generates a single energy beam, which is focused only on a single sensor node, at a time.
The power beacon alternately uses these single microwave beams over time in order to to supply energy to multiple sensor nodes.
Let ${\bold w}^{\text{TS},k}=(w^{\text{TS},k}_1,\ldots,w^{\text{TS},k}_N)^T$ denote the beamforming weight vector that focuses a single energy beam towards node $k$.

Since the beamforming weight vector ${\bold w}^{\text{TS},k}$ maximizes the receive power of node $k$, we can calculate ${\bold w}^{\text{TS},k}$ by solving the following optimization problem:
\begin{align}
&\text{maximize}\qquad |{\bold h}_k^T {\bold w}|^2\label{eq:optts1}\\
&\text{subject to}\qquad |w_n|^2\le P_\text{ant} \text{ for all }n=1,\ldots,N,\label{eq:optts2}\\
&\qquad\qquad\qquad{\bold w}^H {\bold w}\le P_\text{tot}.\label{eq:optts3}
\end{align}

The solution to the optimization problem in \eqref{eq:optts1}--\eqref{eq:optts3} is the phase conjugate of the channel gain vector.
A channel gain is represented by its magnitude and phase as $h_{k,n} = |h_{k,n}|\exp(j\angle h_{k,n})$.
Let $x_n$ and $\omega_n$ denote the magnitude and the phase of the beamforming weight $w_n$, respectively.
Then, we have $w_n = x_n \exp(j\omega_n)$.
We can rewrite the optimization target \eqref{eq:optts1} as $|\sum_{n=1}^N |h_{k,n}|x_n\exp(j(\angle h_{k,n}+\omega_n))|^2$, the per-antenna power constraint \eqref{eq:optts2} as $x_n^2\le P_\text{ant}$ for all $n=1,\ldots,N$, and the total power constraint \eqref{eq:optts3} as $\sum_{n=1}^N x_n^2\le P_\text{tot}$.
Since $\omega_n$ is not included in the constraints, the optimization target is maximized when $\omega_n = -\angle h_{k,n}$ for all $n=1,\ldots,N$.

Now, the optimization problem in \eqref{eq:optts1}--\eqref{eq:optts3} is reformulated as
\begin{align}
&\text{maximize}\qquad \mbox{$\sum_{n=1}^N$} |h_{k,n}|x_n\label{eq:optts4}\\
&\text{subject to}\qquad x_n^2\le P_\text{ant} \text{ for all }n=1,\ldots,N,\label{eq:optts5}\\
&\qquad\qquad\qquad\mbox{$\sum_{n=1}^N$} x_n^2\le P_\text{tot}.\label{eq:optts6}
\end{align}
The optimization problem in \eqref{eq:optts4}--\eqref{eq:optts6} is actually a water-filling problem.
Let us release the total power constraint \eqref{eq:optts6} by the Lagrange multiplier $\lambda\ge 0$.
Then, the optimization target \eqref{eq:optts4} becomes $\sum_{n=1}^N |h_{k,n}|x_n + \lambda (P_\text{tot} - \sum_{n=1}^N x_n^2)$.
The maximizer of this optimization target with the per-antenna power constraint \eqref{eq:optts5} is
\begin{align}
x_n(\lambda) = \min\big\{|h_{k,n}|/(2\lambda),\sqrt{P_\text{ant}}\big\},
\end{align}
for $n=1,\ldots,N$.
Let $\lambda^*$ denote the dual optimal $\lambda$, and let $P(\lambda)$ denote the total power given $\lambda$ such that $P(\lambda)=\sum_{n=1}^N x_n(\lambda)^2$.
For finding $\lambda^*$, we can increase $\lambda$ until the total power constraint is met (i.e., $P(\lambda) \le P_\text{tot}$).
Let us sort $|h_{k,n}|$ for $n=1,\ldots,N$ in an increasing order, and let $n(i)$ denote the index of the antenna with the $i$th smallest $|h_{k,n}|$.
Let us define $\lambda_i = |h_{k,n(i)}|/(2\sqrt{P_\text{ant}})$ for $i=1,\ldots,N$ and $\lambda_0 = 0$.
Then, we have $P(\lambda_i) = \sum_{l=1}^{i-1} |h_{k,n(l)}|^2/(2\lambda_i)^2 + (N-i+1)P_\text{ant}$.
Since $P(\lambda_i)$ is decreasing with $i$, we can define $i^*$ as the smallest $i$ that satisfies $P(\lambda_i)\le P_\text{tot}$ over $i=1,\ldots,N$.
If there is no such $i$, we have $i^* = N+1$.
Then, we have $\lambda_{i^*-1} < \lambda^* \le \lambda_{i^*}$ and $P(\lambda) = \sum_{l=1}^{i^*-1} |h_{k,n(l)}|^2/(2\lambda)^2 + (N-i^*+1)P_\text{ant}$ for $\lambda_{i^*-1} < \lambda \le \lambda_{i^*}$.
Since $P(\lambda^*)=P_\text{tot}$, we can calculate $\lambda^*$ as
\begin{align}
\lambda^* = \frac{1}{2}\sqrt{\frac{\sum_{l=1}^{i^*-1} |h_{k,n(l)}|^2}{P_\text{tot}-(N-i^*+1)P_\text{ant}}}.
\end{align}

Finally, the solution to the optimization problem in \eqref{eq:optts1}--\eqref{eq:optts3} is calculated as
\begin{align}
w_n^{\text{TS},k} = x_n(\lambda^*) \exp(-j\angle h_{k,n}),
\end{align}
for all $n=1,\ldots,N$.
This solution can be simplified below, in the special case that only the per-antenna power constraints are in effect (i.e., $P_\text{tot} \ge NP_\text{ant}$).
\begin{align}
w_n^{\text{TS},k} = \frac{h_{k,n}^*}{|h_{k,n}|} \sqrt{P_\text{ant}}.
\end{align}
On the other hand, if only the total power constraint is in effect (i.e., $P_\text{tot} \le P_\text{ant}$), we can simplify the optimal solution to
\begin{align}
w_n^{\text{TS},k} = \frac{h_{k,n}^*}{\|{\bold h}_k\|_2}\sqrt{P_\text{tot}}.
\end{align}

Let $r^{\text{TS},i}_k$ denote the receive power at node $k$ when ${\bold w}^{\text{TS},i}$ is used.
Then, we have
\begin{align}\label{eq:rxts}
r^{\text{TS},i}_k=|{\bold h}_k^T {\bold w}^{\text{TS},i}|^2.
\end{align}
We also define the receive power vector when ${\bold w}^{\text{TS},i}$ is used as ${\bold r}^{\text{TS},i}=(r^{\text{TS},i}_1,\ldots,r^{\text{TS},i}_K)^T$.

In the time-sharing beamforming technique, the power beacon alternates between the beamforming weight vectors ${\bold w}^{\text{TS},i}$ for $i=1,\ldots,K$ in a predefined time-sharing proportion.
Let $\tau_i$ denote the time-sharing proportion during which ${\bold w}^{\text{TS},i}$ is used in the power beacon.
It is satisfied that $\sum_{i=1}^K \tau_i = 1$ and $\tau_i \ge 0$ for all $i=1,\ldots,K$.
Then, the average receive power vector of the time-sharing beamforming technique is given by
\begin{align}\label{eq:rtsavg}
{\bold r}^\text{TS,avg} = \mbox{$\sum_{i=1}^K$} \tau_i {\bold r}^{\text{TS},i}.
\end{align}

\subsection{Beam-Splitting Beamforming Technique}\label{section:beamsplitting}

The beam-splitting beamforming technique splits the energy beam towards more than one nodes at the same time to charge multiple sensor nodes.
The beam-splitting beamforming technique makes use of the beamforming weight vectors that achieve the Pareto optimal points in the average receive power region $\overline{\mathcal R}$ in \eqref{eq:avgrpr}.
The Pareto frontier, denoted by $\overline{\mathcal R}^\text{PF}$, is defined as the set of all Pareto optimal points in $\overline{\mathcal R}$.
That is,
\begin{align}
\overline{\mathcal R}^\text{PF} = \{{\bold x}\in \overline{\mathcal R}\ |\ \text{there is no ${\bold r}\in \overline{\mathcal R}$ such that ${\bold x}\prec {\bold r}$}\},
\end{align}
where '$\prec$' is an element-wise inequality.

Since $\overline{\mathcal R}$ is a convex set, the elements in the Pareto frontier of $\overline{\mathcal R}$ can be obtained by finding the maximum weighted sum of the components of a receive power vector in $\overline{\mathcal R}$.
The optimization problem to find such receive power vectors is
\begin{align}
&\text{maximize}\qquad {\boldsymbol \alpha}^T {\bold x}\label{eq:optweight1}\\
&\text{subject to}\qquad {\bold x}\in \overline{\mathcal R},\label{eq:optweight2}
\end{align}
where ${\boldsymbol \alpha} = (\alpha_1,\ldots,\alpha_K)^T$ is a receive power weight vector.
The receive power weight vector ${\boldsymbol \alpha}$ should satisfy that $\sum_{k=1}^K \alpha_k \le 1$ and $\alpha_k \ge 0$ for all $k=1,\ldots,K$.

Recall that ${\bold S}$ is a positive semidefinite matrix with rank one, which is defined as ${\bold S}={\bold w}{\bold w}^H$.
Then, we can rewrite the receive power at node $k$ as $\tr({\bold G}_k {\bold S})$, the total transmit power as $\tr({\bold S})$, and the transmit power of antenna $n$ as $\tr({\bold B}_n{\bold S})$.
Here, we define an $N\times N$ matrix ${\bold B}_n$, of which only the $(n,n)$th element is one and all other elements are zero.
Then, the optimization problem in \eqref{eq:optweight1} and \eqref{eq:optweight2} is equivalent to the following optimization problem:
\begin{align}
&\text{maximize}\qquad \mbox{$\sum_{k=1}^K$} \alpha_k \tr({\bold G}_k {\bold S})\label{eq:optmulti1}\\
&\text{subject to}\qquad \tr({\bold B}_n{\bold S}) \le P_\text{ant},\text{ for $n=1,\ldots,N$},\label{eq:optmulti2}\\
&\qquad\qquad\qquad \tr({\bold S})\le P_\text{tot},\label{eq:optmulti3}\\
&\qquad\qquad\qquad {\bold S}\succeq {\bold 0},\label{eq:optmulti4}\\
&\qquad\qquad\qquad \rank({\bold S})=1\label{eq:optmulti5},
\end{align}
where ${\bold S}\succeq {\bold 0}$ means that ${\bold S}$ is a positive semidefinite matrix and $\rank({\bold S})$ is the rank of ${\bold S}$.
Let ${\bold S}^\text{BS}({\boldsymbol \alpha}) = {\bold w}^\text{BS}({\boldsymbol \alpha}){\bold w}^\text{BS}({\boldsymbol \alpha})^H$ denote the optimal solution of \eqref{eq:optmulti1}--\eqref{eq:optmulti5}.
The receive power of node $k$ with the optimal beamforming weight vector ${\bold w}^\text{BS}({\boldsymbol \alpha})$ is given by
\begin{align}
r_k^\text{BS}({\boldsymbol \alpha})=|{\bold h}_k^T {\bold w}^\text{BS}({\boldsymbol \alpha})|^2.
\end{align}
The receive power vector with ${\bold w}^\text{BS}({\boldsymbol \alpha})$ is given by ${\bold r}^\text{BS}({\boldsymbol \alpha}) = (r_1^\text{BS}({\boldsymbol \alpha}),\ldots,r_K^\text{BS}({\boldsymbol \alpha}))^T$.

The optimization problem \eqref{eq:optmulti1}--\eqref{eq:optmulti5} can be solved in a closed form when the per-antenna power constraints are not in effect (i.e., $P_\text{tot} \le P_\text{ant}$).
We first solve the optimization problem without the per-antenna power constraints in \eqref{eq:optmulti2}.
The objective function \eqref{eq:optmulti1} can be rewritten as $\tr({\bold V}({\boldsymbol \alpha}){\bold S})$, where
\begin{align}
{\bold V}({\boldsymbol \alpha}) = \mbox{$\sum_{k=1}^K$} \alpha_k {\bold G}_k.
\end{align}
The eigenvalue decomposition of ${\bold V}({\boldsymbol \alpha})$ is ${\bold V}({\boldsymbol \alpha}) = {\bold U}({\boldsymbol \alpha})^H {\bold Z}({\boldsymbol \alpha}) {\bold U}({\boldsymbol \alpha})$, where ${\bold U}({\boldsymbol \alpha})$ is a unitary matrix such that ${\bold U}({\boldsymbol \alpha}) = ({\bold u}_1({\boldsymbol \alpha}),\ldots,{\bold u}_N({\boldsymbol \alpha}))^T$ and ${\bold Z}({\boldsymbol \alpha})$ is a diagonal matrix such that ${\bold Z}({\boldsymbol \alpha}) = \diag(z_1({\boldsymbol \alpha}),\ldots,z_N({\boldsymbol \alpha}))$.
The diagonal elements in ${\bold Z}({\boldsymbol \alpha})$ is sorted in a descending order.
Therefore, $z_1({\boldsymbol \alpha})$ is the principal eigenvalue and ${\bold u}_1({\boldsymbol \alpha})$ is the principal eigenvector of ${\bold V}({\boldsymbol \alpha})$.
In \cite{Choi:2015}, it is shown that the optimal beamforming weight vector without the per-antenna power constraints is given by
\begin{align}\label{eq:wbs1}
{\bold w}^\text{BS}({\boldsymbol \alpha}) = \sqrt{P_\text{tot}}\cdot {\bold u}_1^*({\boldsymbol \alpha}).
\end{align}

Now, we consider a general optimization problem with both per-antenna and total power constraints.
The optimization problem in \eqref{eq:optmulti1}--\eqref{eq:optmulti5} is actually a quadratically constrained quadratic problem (QCQP).
The QCQP can be approximately solved by the semidefinite relaxation (SDR) \cite{Luo:2010}.
For the SDR, the rank-one constraint in \eqref{eq:optmulti5} is removed, and the resulting semidefinite programming (SDP) is solved by a convex optimization technique such as the interior-point method.
Then, the rank-one approximation of the solution from the interior-point method can be derived by calculating the principal eigenvector of the solution.
However, the complexity of the SDR can be high due to the interior-point method when many transmit antennas are used.

Since our target is to develop an algorithm which is deployed in a real-time testbed, we propose a low-complexity approximate method for solving the optimization problem \eqref{eq:optmulti1}--\eqref{eq:optmulti5} rather than using the SDR.
By using the eigenvalues and eigenvectors of ${\bold V}({\boldsymbol \alpha})$, we can rewrite the objective function \eqref{eq:optmulti1} as $\sum_{n=1}^N z_n({\boldsymbol \alpha})|{\bold u}_n({\boldsymbol \alpha})^T {\bold w}|^2$.
To simplify this objective function, we can only consider the term with the largest eigenvalue.
Then, the simplified objective function is $z_1({\boldsymbol \alpha})|{\bold u}_1({\boldsymbol \alpha})^T {\bold w}|^2$.
By using this objective function, we can formulate the optimization problem as
\begin{align}
&\text{maximize}\qquad |{\bold u}_1({\boldsymbol \alpha})^T {\bold w}|^2\label{eq:optbs1}\\
&\text{subject to}\qquad |w_n|^2\le P_\text{ant} \text{ for all }n=1,\ldots,N,\label{eq:optbs2}\\
&\qquad\qquad\qquad{\bold w}^H {\bold w}\le P_\text{tot}.\label{eq:optbs3}
\end{align}
This optimization problem is identical to the optimization problem in \eqref{eq:optts1}--\eqref{eq:optts3} for the time-sharing beamforming technique if we replace ${\bold u}_1({\boldsymbol \alpha})$ with ${\bold h}_k$.
Therefore, we can apply the same optimization method to solve \eqref{eq:optbs1}--\eqref{eq:optbs3} as the one to solve \eqref{eq:optts1}--\eqref{eq:optts3} in Section \ref{section:timesharing}.

\subsection{Comparison between Time-Sharing and Beam-Splitting Beamforming Techniques}

In this subsection, we compare the performance of the time-sharing and beam-splitting beamforming techniques.
We propose a metric called a beam-splitting gain to quantify the gain of using the beam-splitting beamforming technique over the time-sharing beamforming technique.
The beam-splitting gain is defined as the ratio of the maximum weighted sum of the components of the receive power vectors achieved by the beam-splitting and time-sharing beamforming techniques, that is,
\begin{align}
\Gamma = \frac{{\boldsymbol \beta}^T{\bold r}^\text{BS}({\boldsymbol \beta})}{\max_{i=1,\ldots,K}{\boldsymbol \beta}^T{\bold r}^{\text{TS},i}},
\end{align}
where ${\boldsymbol \beta} = (\beta_1,\ldots,\beta_K)^T$ is some receive power weight vector.

We choose ${\boldsymbol \beta}$ such that a hyperplane ${\boldsymbol \beta}^T {\bold x} = 1$ includes all ${\bold r}^{\text{TS},i}$ for $i=1,\ldots,K$ (i.e., ${\boldsymbol \beta}^T {\bold r}^{\text{TS},i} = 1$ for all $i=1,\ldots,K$).
Then, we can calculate ${\boldsymbol \beta}$ as
\begin{align}\label{eq:beta}
{\boldsymbol \beta} = ({\bold r}^{\text{TS}})^{-1}{\boldsymbol 1},
\end{align}
where ${\bold r}^{\text{TS}} = ({\bold r}^{\text{TS},1},\ldots,{\bold r}^{\text{TS},K})^T$ and ${\bold 1}$ is a vector of all ones.
For such ${\boldsymbol \beta}$ in \eqref{eq:beta}, the beam-splitting gain is reduced to
\begin{align}
\Gamma = {\boldsymbol \beta}^T{\bold r}^\text{BS}({\boldsymbol \beta}).
\end{align}
The beam-splitting gain is a function of the channel gain matrix ${\bold H}$, and it is equal to or higher than one.

\subsection{Experimental Results for Beamforming Techniques}

In this subsection, we present experimental results regarding the beamforming techniques.
Figs.~\ref{fig:pr2dcirc}--\ref{fig:pr3dlin} show the receive power region and the Pareto frontier, experimentally derived in our testbed.
The receive power region, which is the instantaneous one in \eqref{eq:instrxpowregion}, is depicted as a cloud of receive power vectors obtained by using random beamforming weights.
The Pareto frontier is obtained by the receive power vectors of the beam-splitting beamforming technique (i.e., ${\bold r}^\text{BS}({\boldsymbol \alpha})$).
To derive various points in the Pareto frontier, we use random ${\boldsymbol \alpha}$'s for the beam-splitting beamforming technique.
In addition, these figures also show the receive power vectors of the time-sharing beamforming technique (i.e., ${\bold r}^{\text{TS},i}$).

We have tested the beamforming techniques in various parameter and environment settings.
Two nodes are tested in Figs.~\ref{fig:pr2dcirc} and \ref{fig:pr2dlin} while three nodes are tested in Figs.~\ref{fig:pr3dcirc} and \ref{fig:pr3dlin}.
We use a circular antenna array in Figs.~\ref{fig:pr2dcirc} and \ref{fig:pr3dcirc} and a linear antenna array in Figs.~\ref{fig:pr2dlin} and \ref{fig:pr3dlin}.
We also vary the azimuth of each node (i.e., $\phi^\text{no}_k$) and the maximum total transmit power (i.e., $P_\text{tot}$).
In the caption below each figure, we specify the azimuth of each node and the maximum total transmit power in sequence.
The distance from the center of the antenna array is set to $d^\text{no}_k = 2$ m for all nodes.

In all Figs.~\ref{fig:pr2dcirc}--\ref{fig:pr3dlin}, we can clearly see that the Pareto frontier obtained by the proposed beam-splitting beamforming technique correctly forms the upper bound of the receive power region.
In addition, we can see that each time-sharing receive power vector maximizes the receive power of its corresponding node.
Therefore, these figures demonstrate that the proposed beamforming techniques work well in a real-life testbed with various antenna arrays, node locations, and transmit power constraints.

The shape of the receive power region greatly depends on the locations of the nodes.
As seen in all Figs.~\ref{fig:pr2dcirc}--\ref{fig:pr3dlin}, the receive powers of all nodes are highly correlated when the nodes are placed in the similar direction (e.g., 0$^\circ$ and 10$^\circ$ in the two node case).
We also observe that the beam-splitting beamforming technique is able to achieve better receive power vectors than the time-sharing beamforming technique does in some scenarios (e.g., Fig.~\ref{fig:pr2dcirc10deg2m0_5}, \ref{fig:pr2dcirc10deg2m1}, \ref{fig:pr2dlin10deg2m0_5}, and \ref{fig:pr2dlin10deg2m1}).

\begin{figure}
    \centering
    \subfigure[0$^\circ$, 10$^\circ$, 560 mW]{
        ~~\label{fig:pr2dcirc10deg2m0_5}\includegraphics[width=4.0cm, bb=1.5in 0.3in 9.9in 7.4in] {./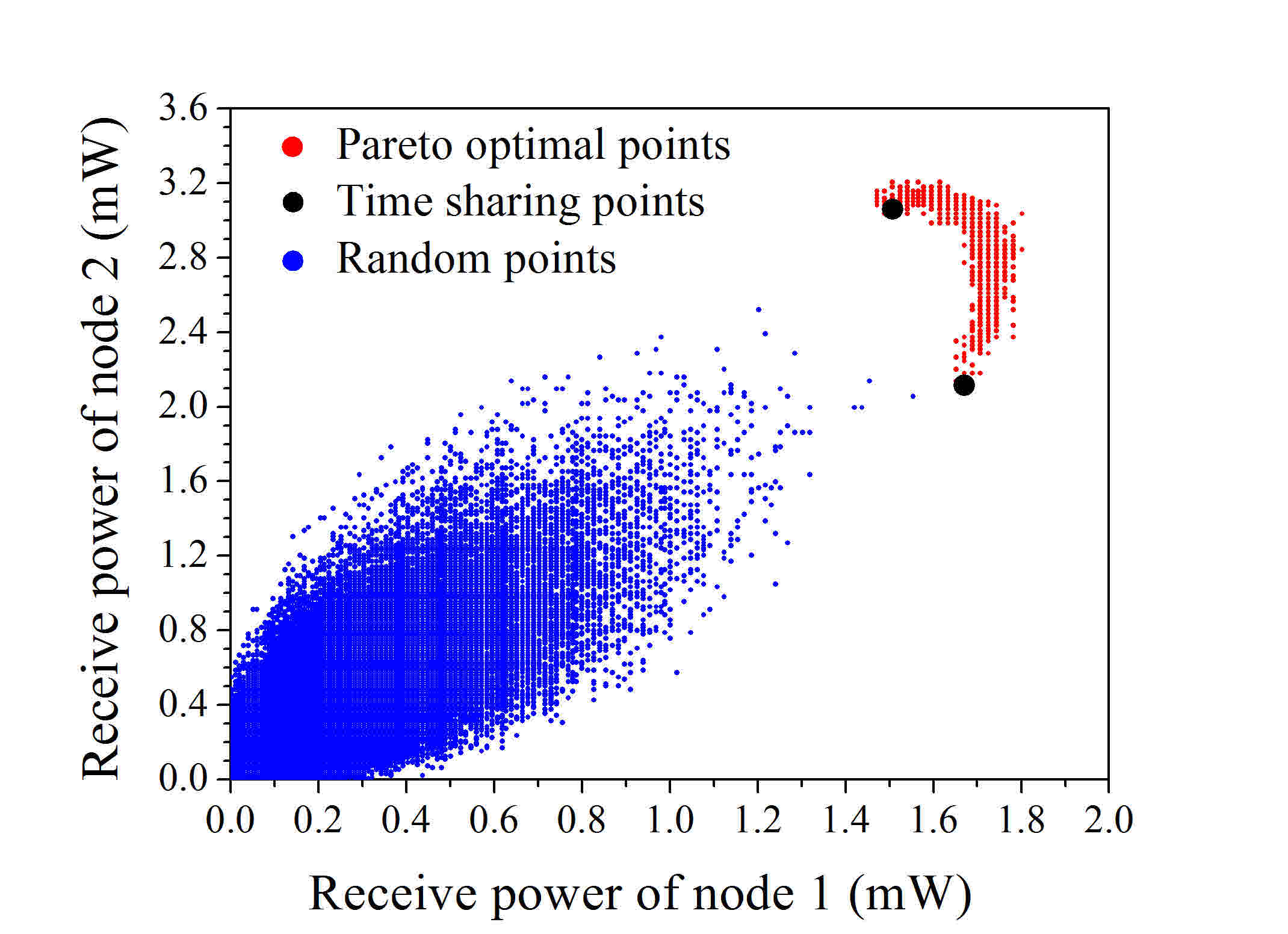}
        }~~~
    \subfigure[0$^\circ$, 10$^\circ$, 1120 mW]{
        \label{fig:pr2dcirc10deg2m1}\includegraphics[width=4.0cm, bb=1.5in 0.3in 9.8in 7.4in] {./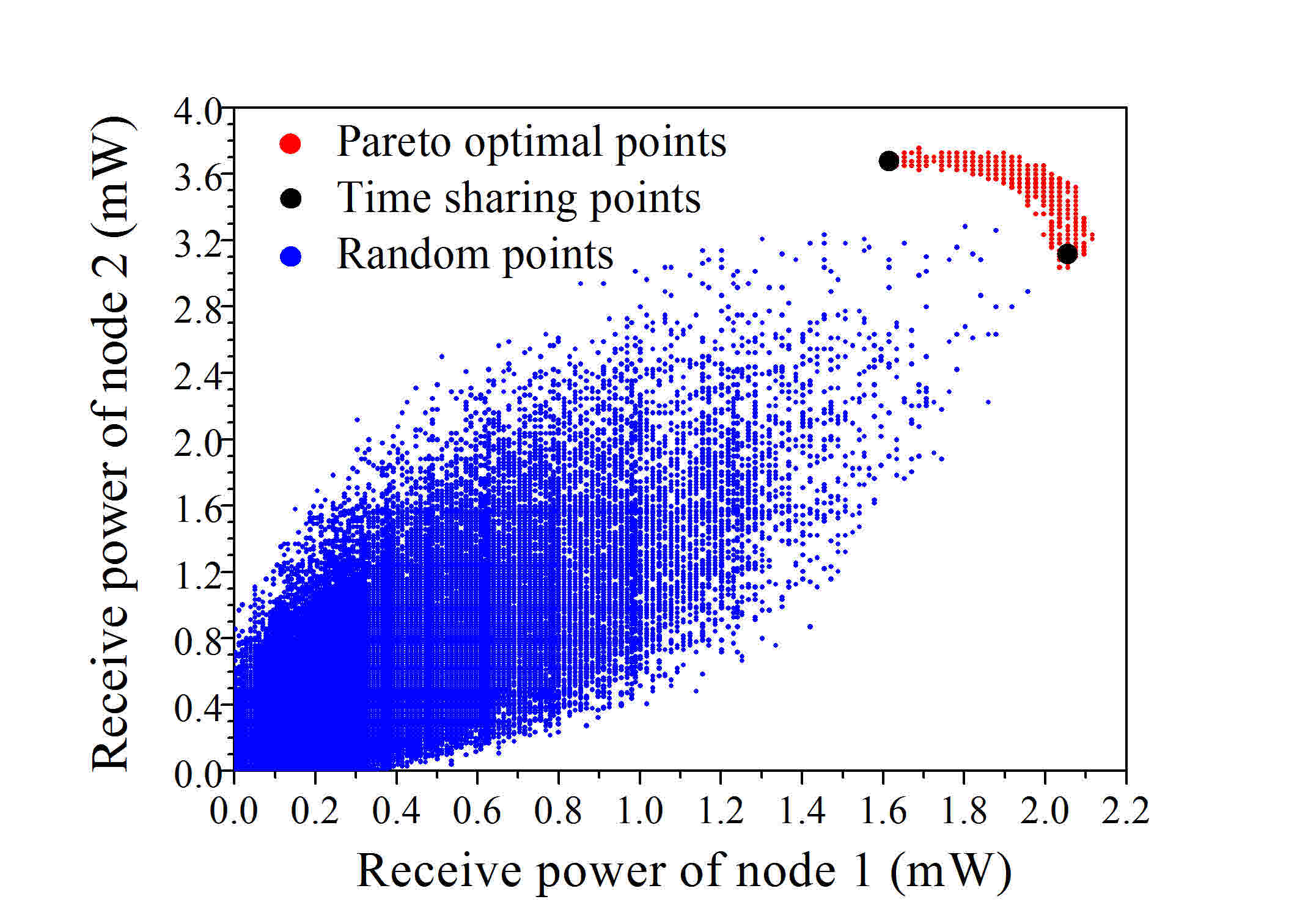}
        }\\
    \subfigure[0$^\circ$, 90$^\circ$, 560 mW]{
        ~~\label{fig:pr2dcirc90deg2m0_5}\includegraphics[width=4.0cm, bb=1.5in 0.3in 9.8in 7.4in] {./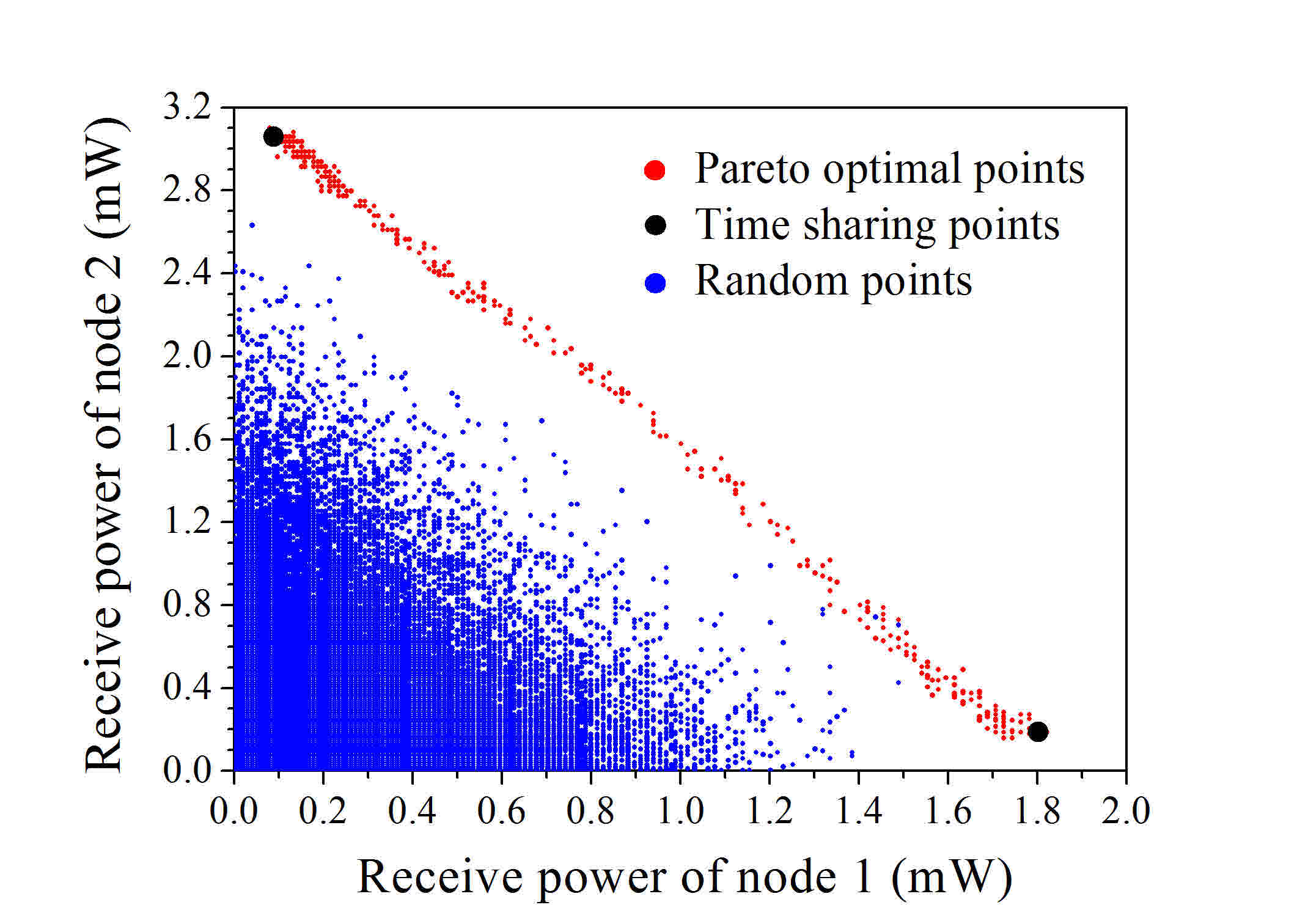}
        }~~~
    \subfigure[0$^\circ$, 90$^\circ$, 1120 mW]{
        \label{fig:pr2dcirc90deg2m1}\includegraphics[width=4.0cm, bb=1.5in 0.3in 9.8in 7.4in] {./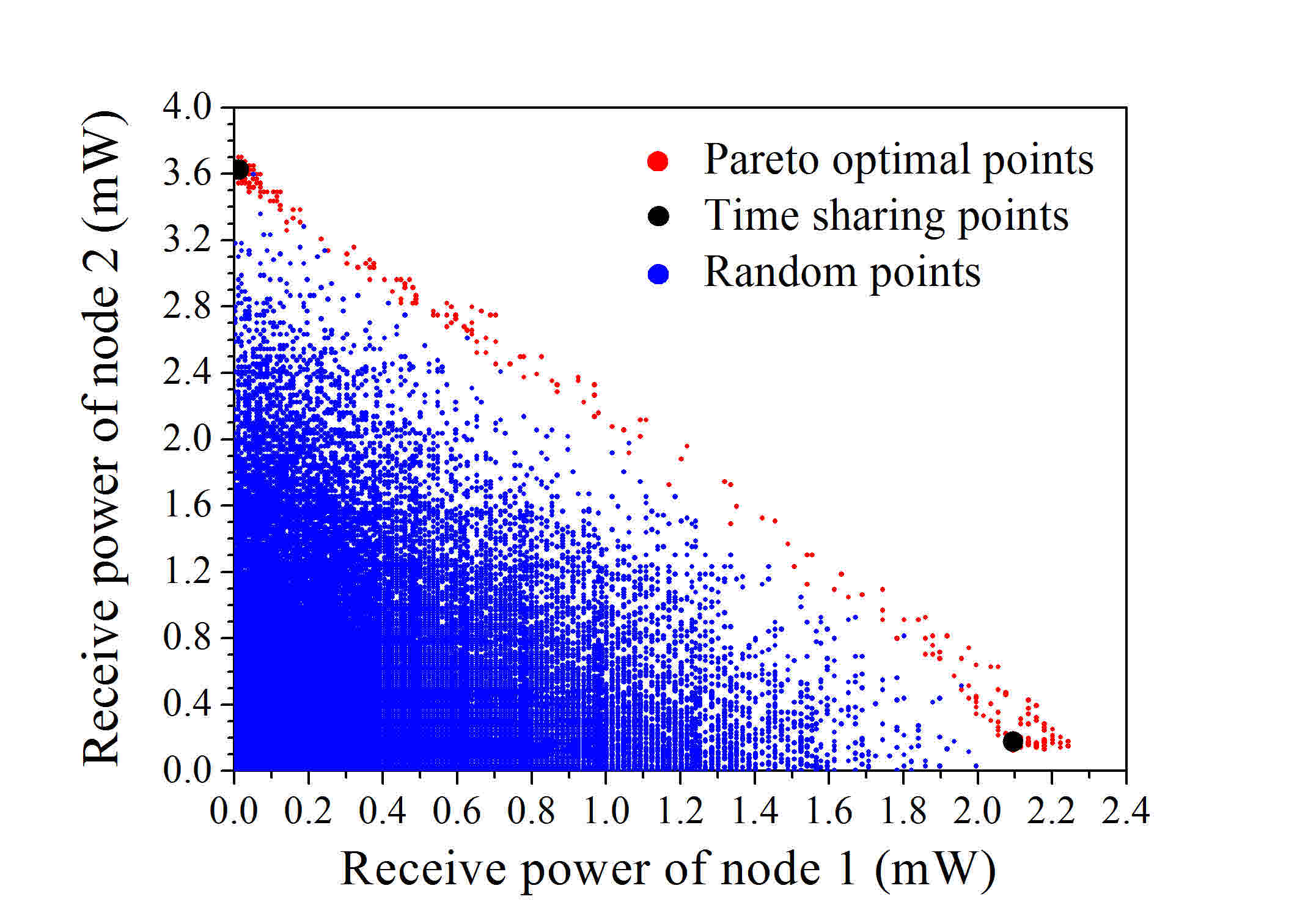}
        }
    \caption{Receive power region and Pareto frontier of two nodes with a circular antenna array.}
    \label{fig:pr2dcirc}
\end{figure}

\begin{figure}
    \centering
    \subfigure[0$^\circ$, 10$^\circ$, 560 mW]{
        ~~\label{fig:pr2dlin10deg2m0_5}\includegraphics[width=4.0cm, bb=1.5in 0.3in 9.9in 7.4in] {./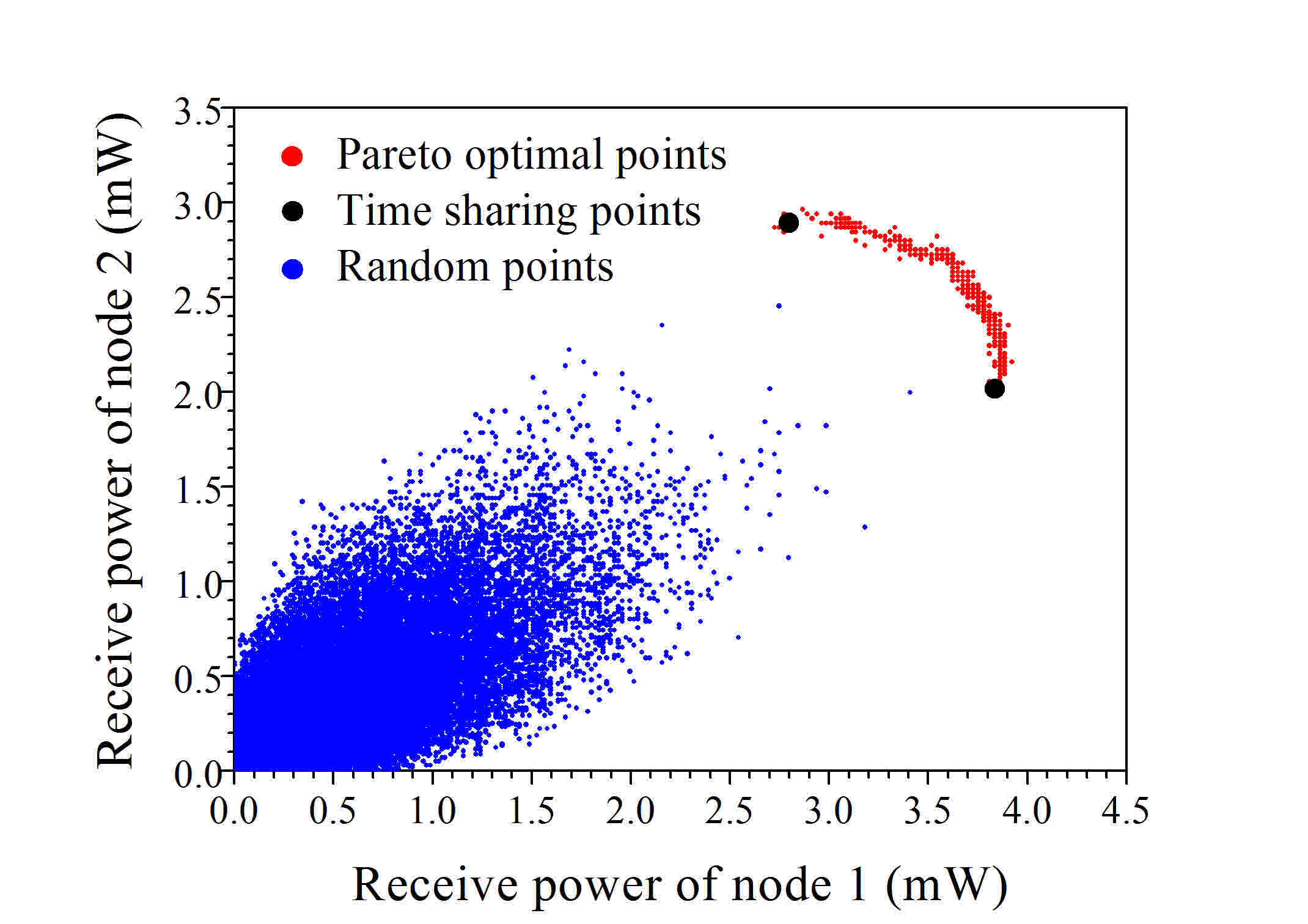}
        }~~~
    \subfigure[0$^\circ$, 10$^\circ$, 1120 mW]{
        \label{fig:pr2dlin10deg2m1}\includegraphics[width=4.0cm, bb=1.5in 0.3in 9.8in 7.4in] {./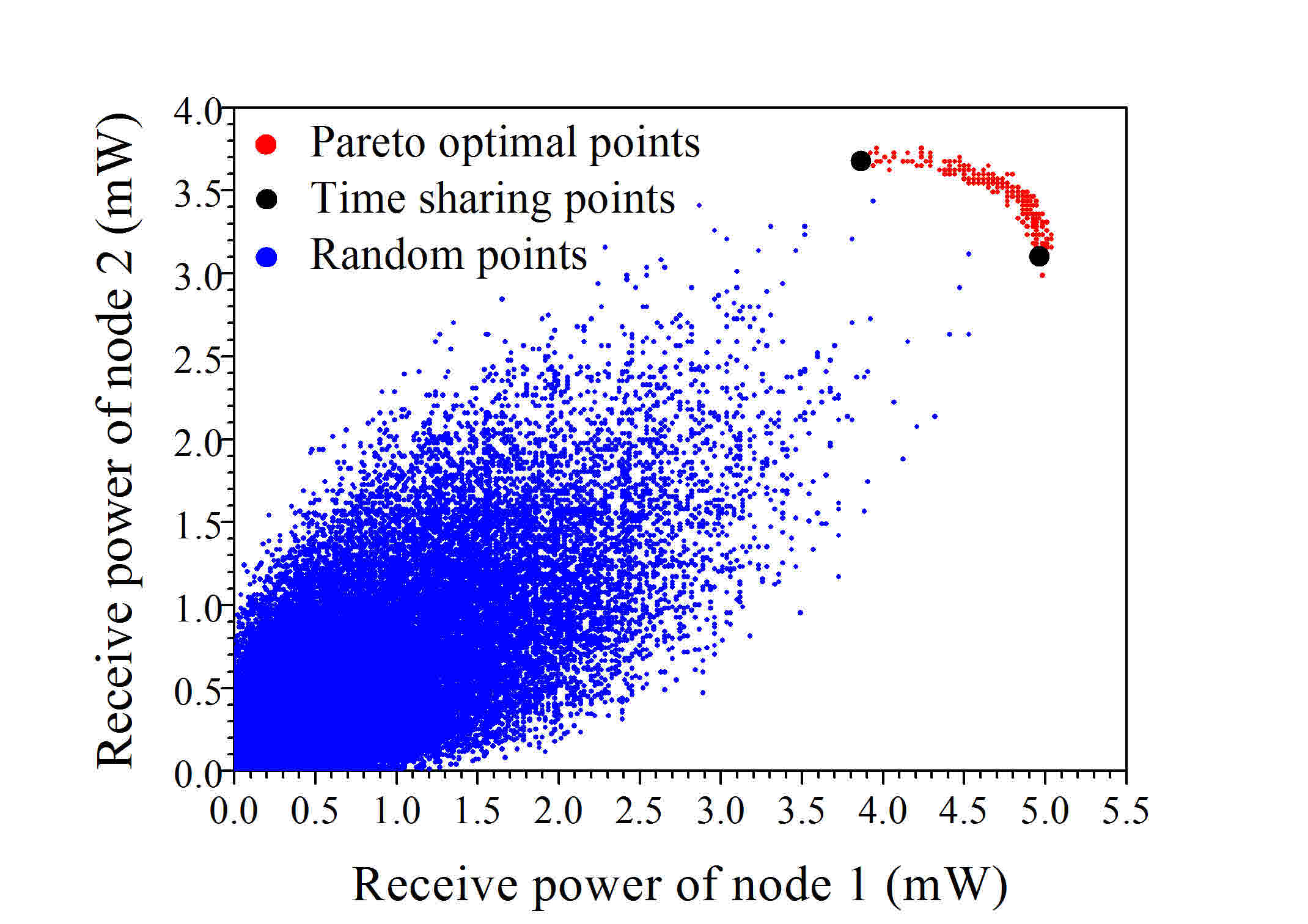}
        }\\
    \subfigure[0$^\circ$, 90$^\circ$, 560 mW]{
        ~~\label{fig:pr2dlin90deg2m0_5}\includegraphics[width=4.0cm, bb=1.5in 0.3in 9.8in 7.4in] {./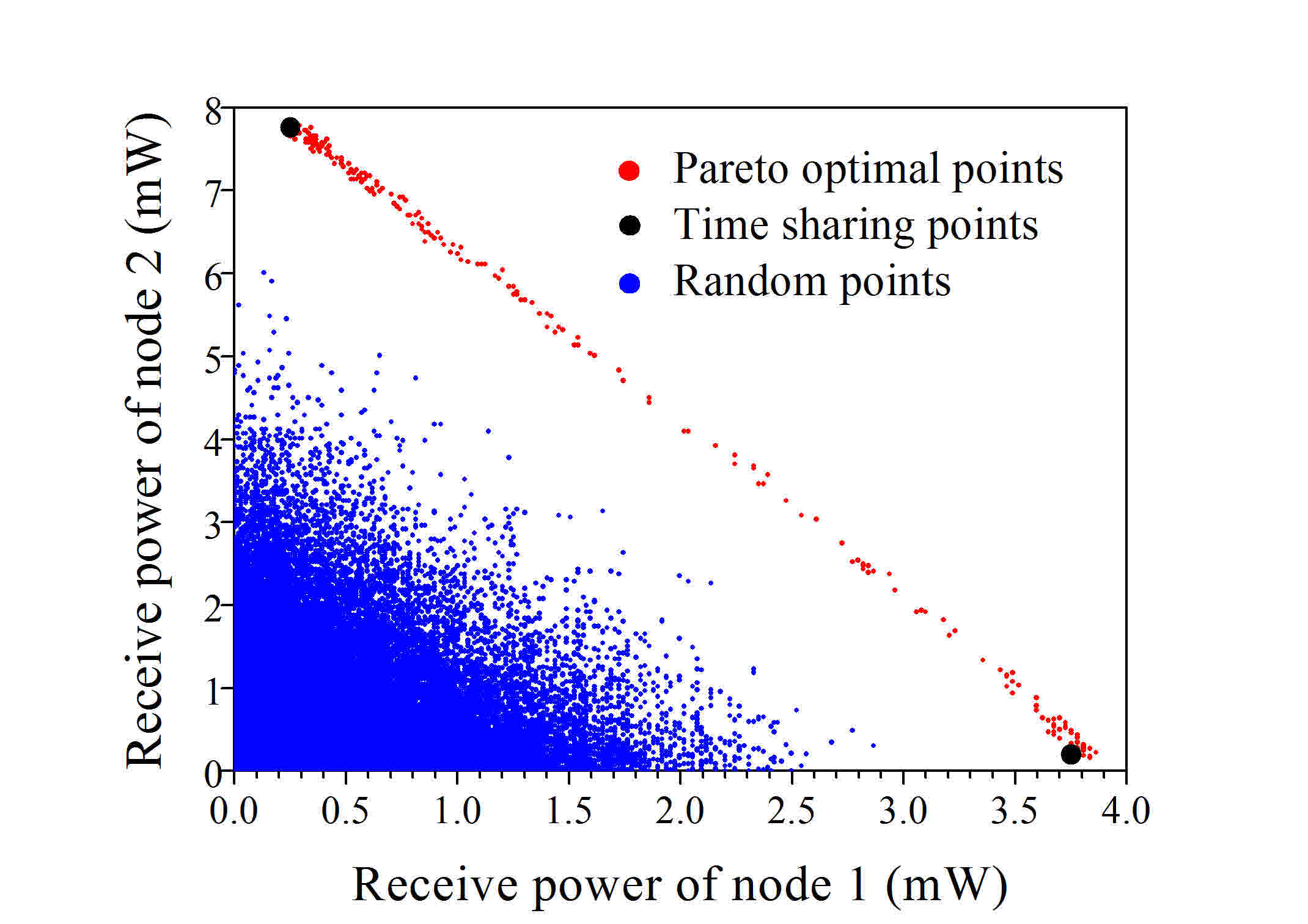}
        }~~~
    \subfigure[0$^\circ$, 90$^\circ$, 1120 mW]{
        \label{fig:pr2dlin90deg2m1}\includegraphics[width=4.0cm, bb=1.5in 0.3in 9.8in 7.4in] {./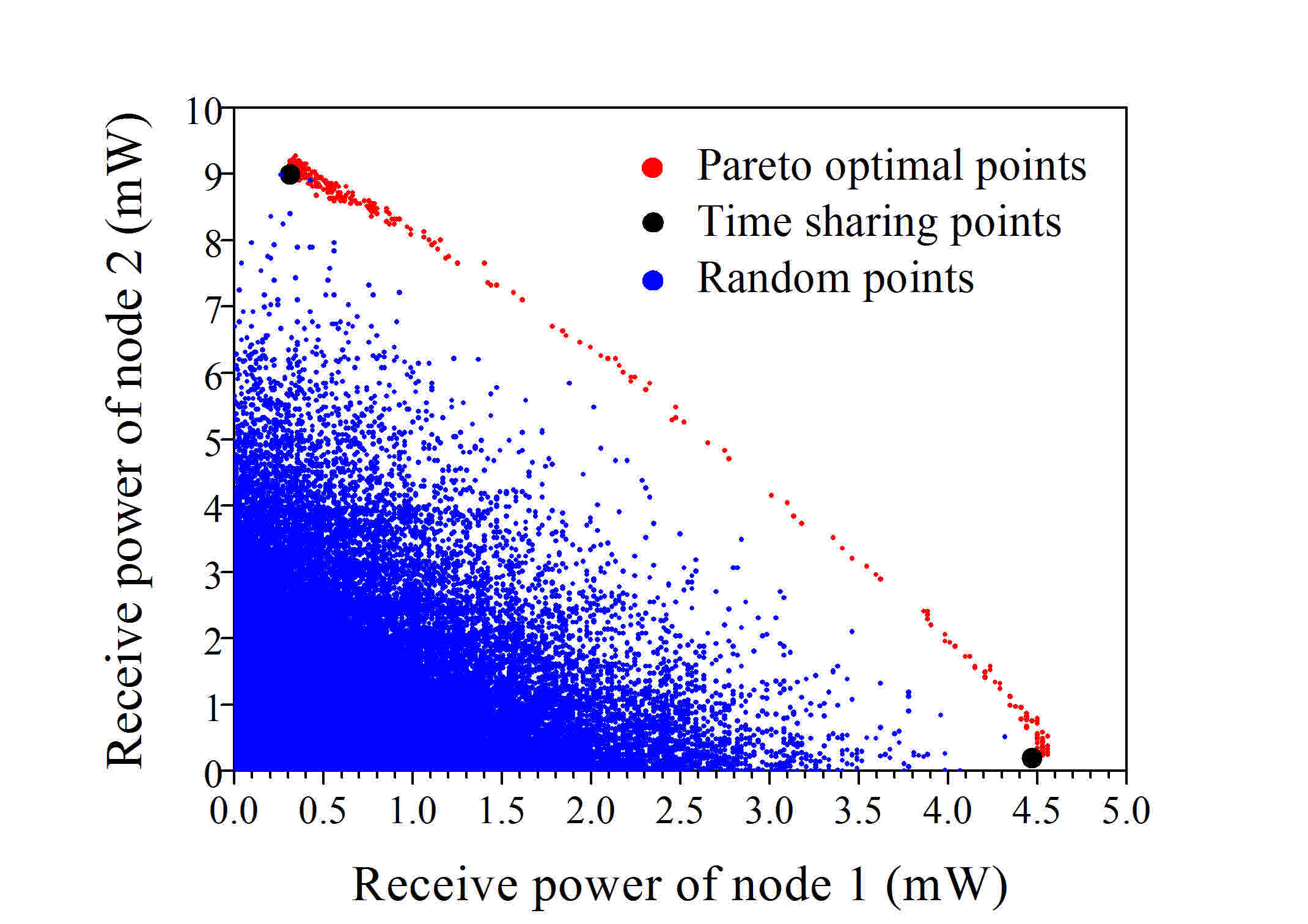}
        }
    \caption{Receive power region and Pareto frontier of two nodes with a linear antenna array.}
    \label{fig:pr2dlin}
\end{figure}

\begin{figure}
    \centering
    \subfigure[0$^\circ$, 10$^\circ$, 20$^\circ$, 560 mW]{
        ~~\label{fig:pr3dcirc10deg2m0_5}\includegraphics[width=4.0cm, bb=1.5in 0.3in 9in 7.4in] {./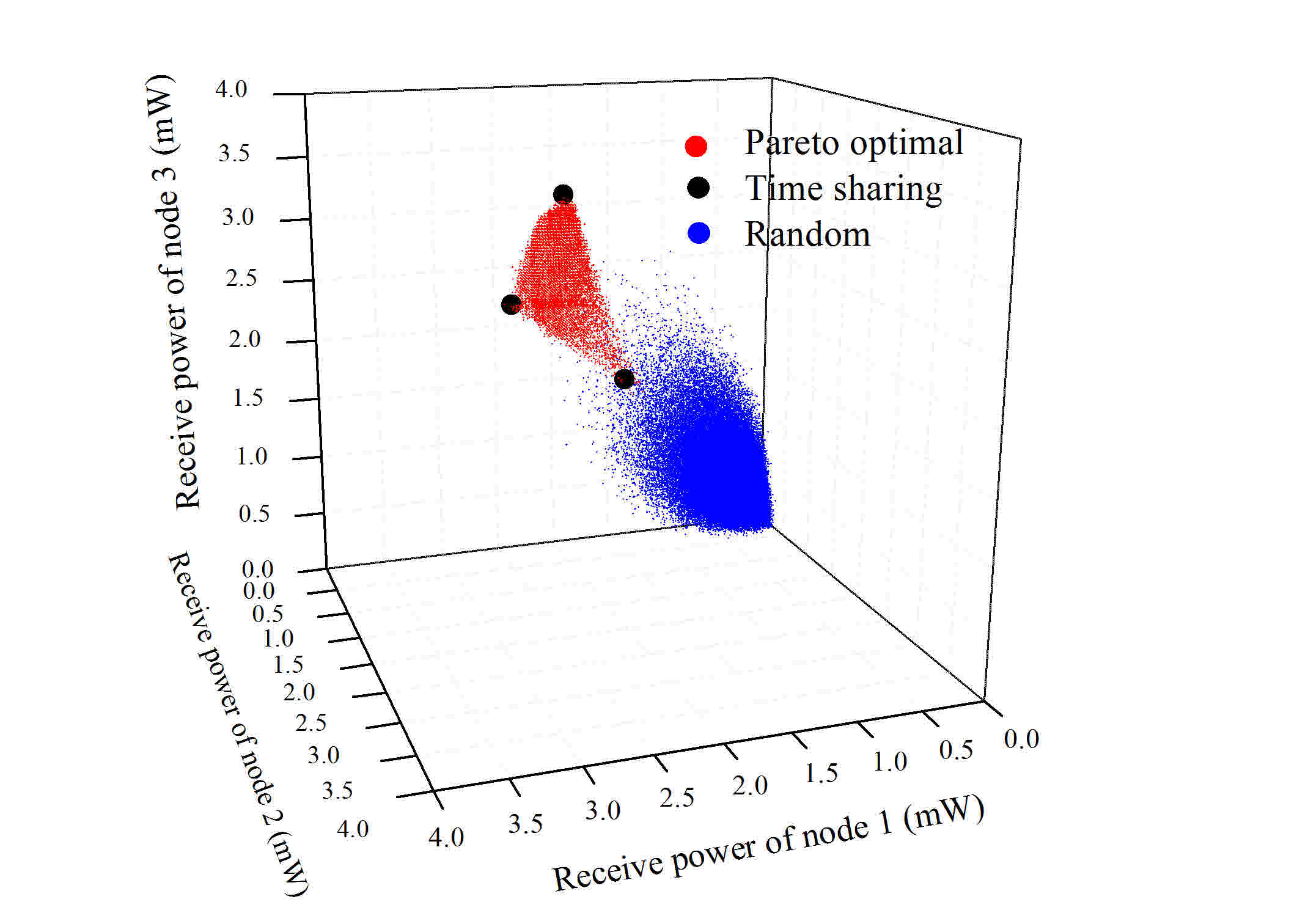}
        }~~~
    \subfigure[0$^\circ$, 10$^\circ$, 20$^\circ$, 1120 mW]{
        \label{fig:pr3dcirc10deg2m1}\includegraphics[width=4.0cm, bb=1.5in 0.3in 9in 7.4in] {./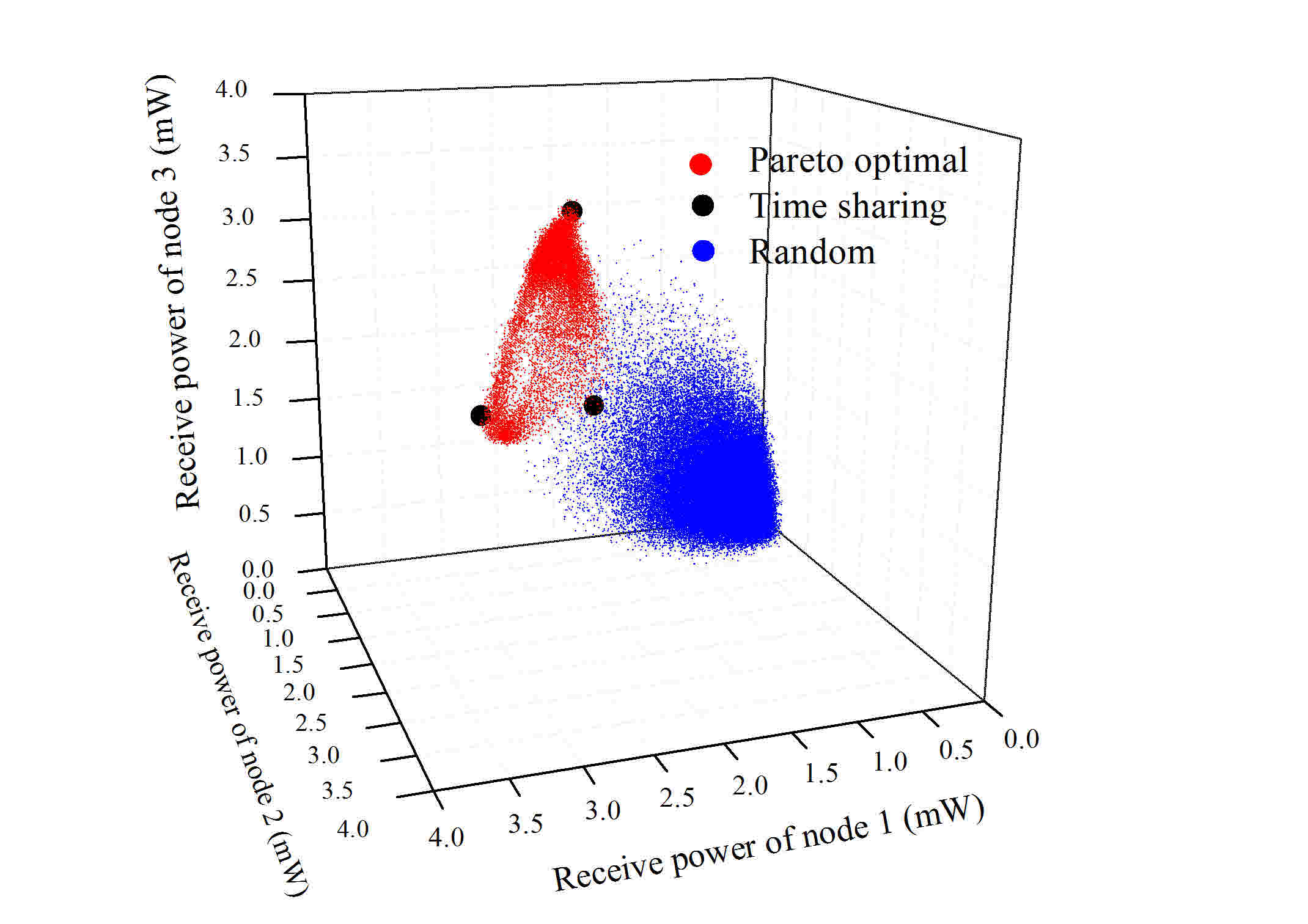}
        }\\
    \subfigure[0$^\circ$, 120$^\circ$, 240$^\circ$, 560 mW]{
        ~~\label{fig:pr3dcirc120deg2m0_5}\includegraphics[width=4.0cm, bb=1.5in 0.3in 9in 7.4in] {./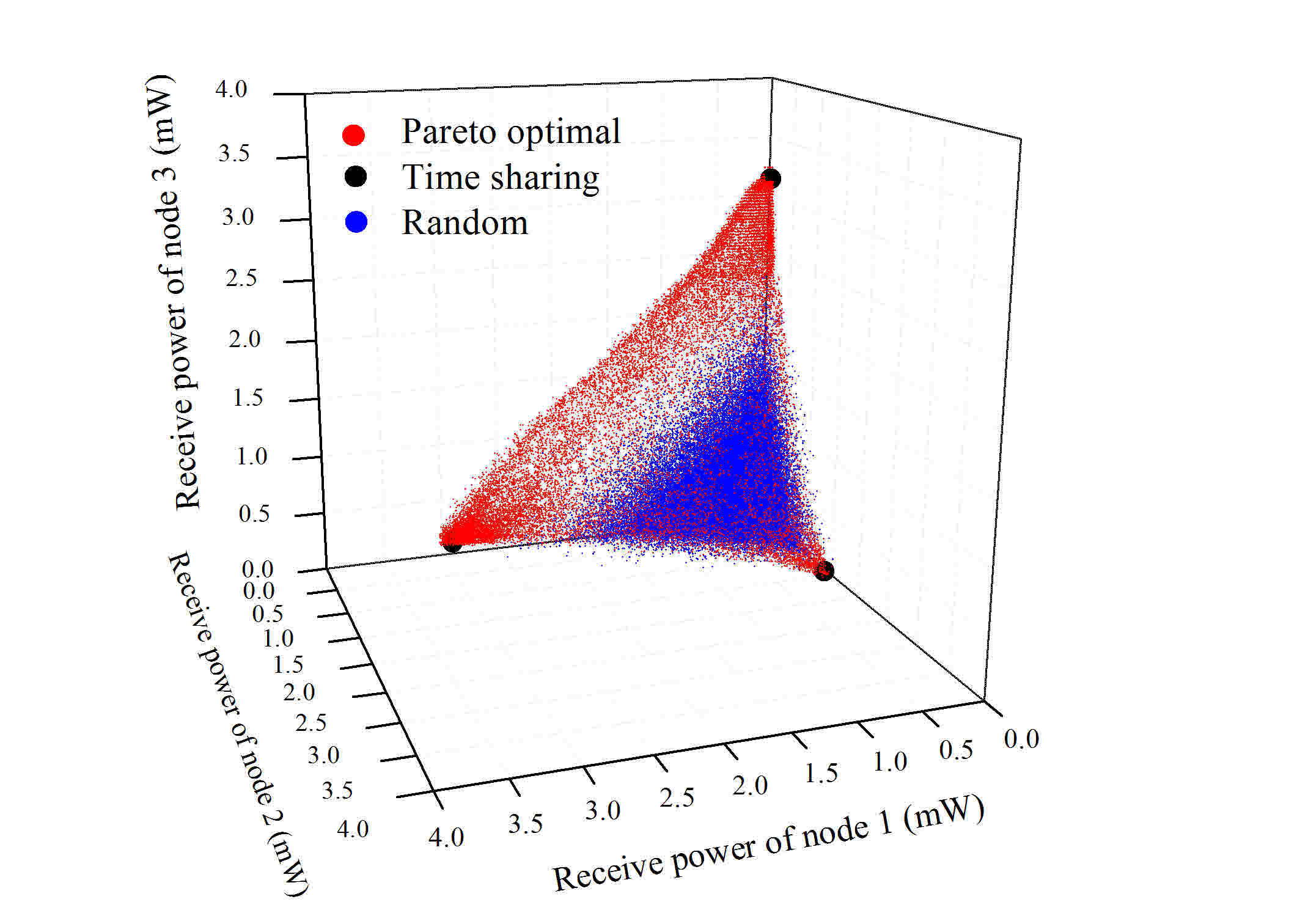}
        }~~~
    \subfigure[0$^\circ$, 120$^\circ$, 240$^\circ$, 1120 mW]{
        \label{fig:pr3dcirc120deg2m1}\includegraphics[width=4.0cm, bb=1.5in 0.3in 9in 7.4in] {./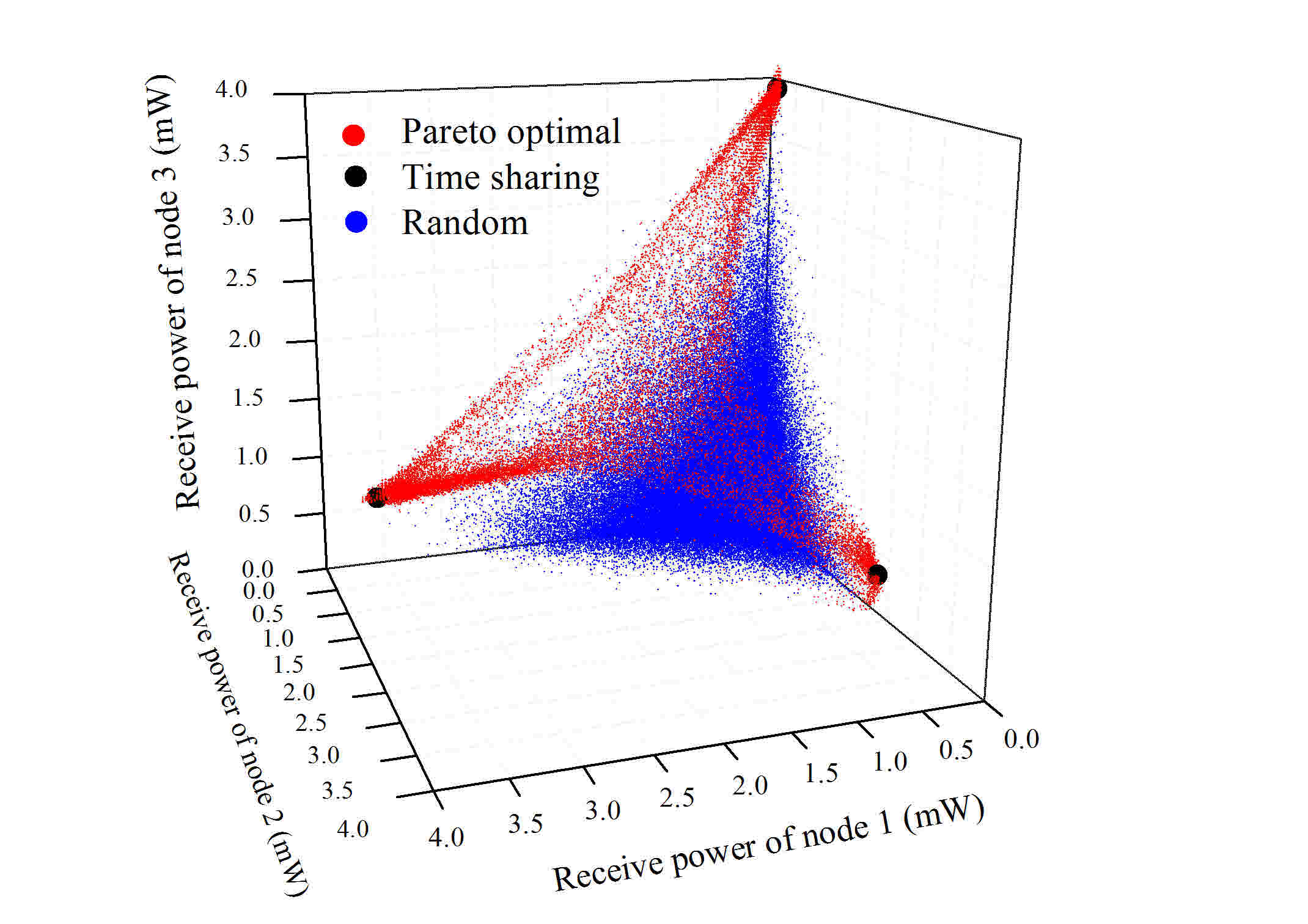}
        }
    \caption{Receive power region and Pareto frontier of three nodes with a circular antenna array.}
    \label{fig:pr3dcirc}
\end{figure}

\begin{figure}
    \centering
    \subfigure[0$^\circ$, 10$^\circ$, 20$^\circ$, 560 mW]{
        ~~\label{fig:pr3dlin10deg2m0_5}\includegraphics[width=4.0cm, bb=1.5in 0.3in 9in 7.4in] {./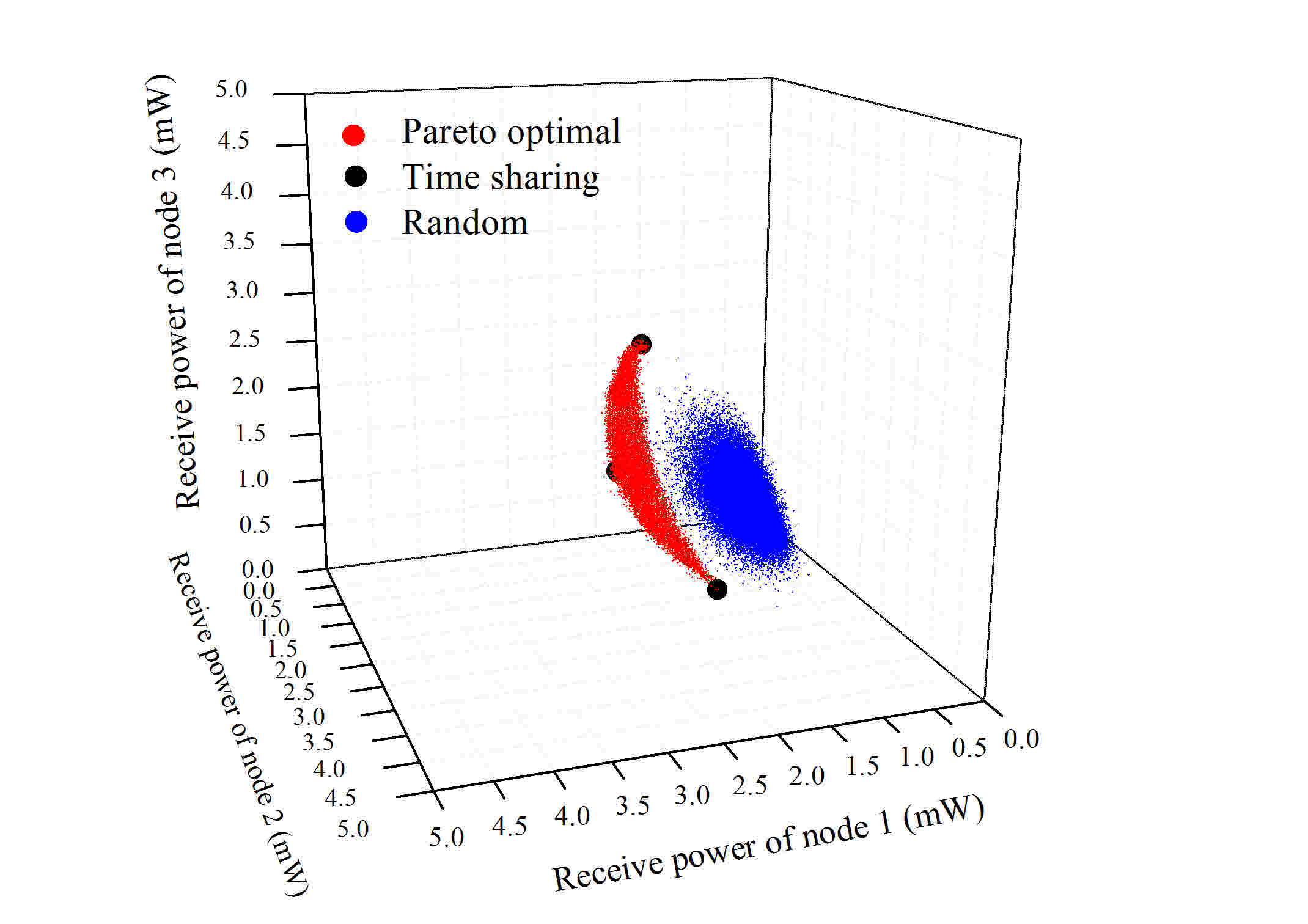}
        }~~~
    \subfigure[0$^\circ$, 10$^\circ$, 20$^\circ$, 1120 mW]{
        \label{fig:pr3dlin10deg2m1}\includegraphics[width=4.0cm, bb=1.5in 0.3in 9in 7.4in] {./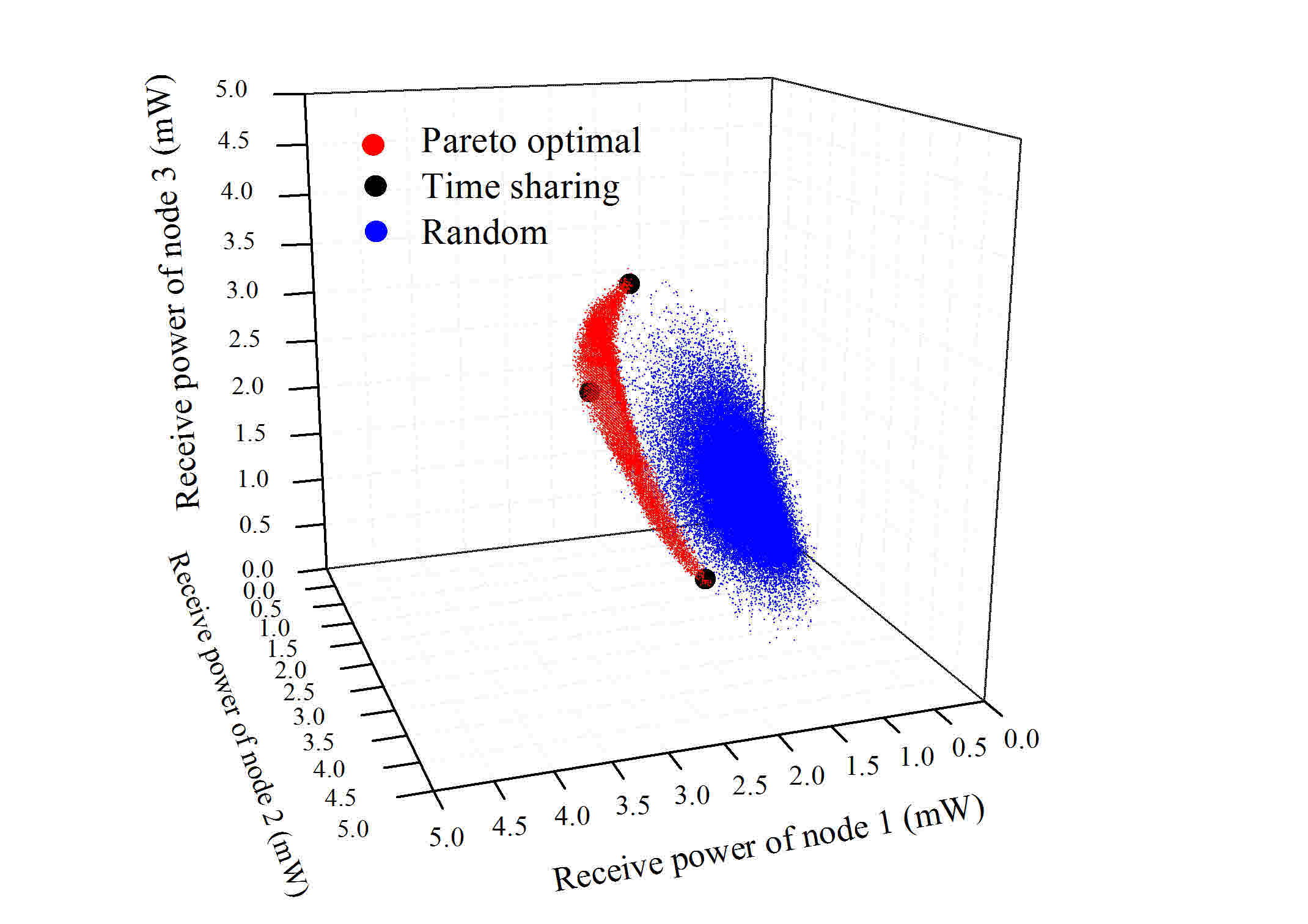}
        }\\
    \subfigure[0$^\circ$, 120$^\circ$, 240$^\circ$, 560 mW]{
        ~~\label{fig:pr3dlin120deg2m0_5}\includegraphics[width=4.0cm, bb=1.5in 0.3in 9in 7.4in] {./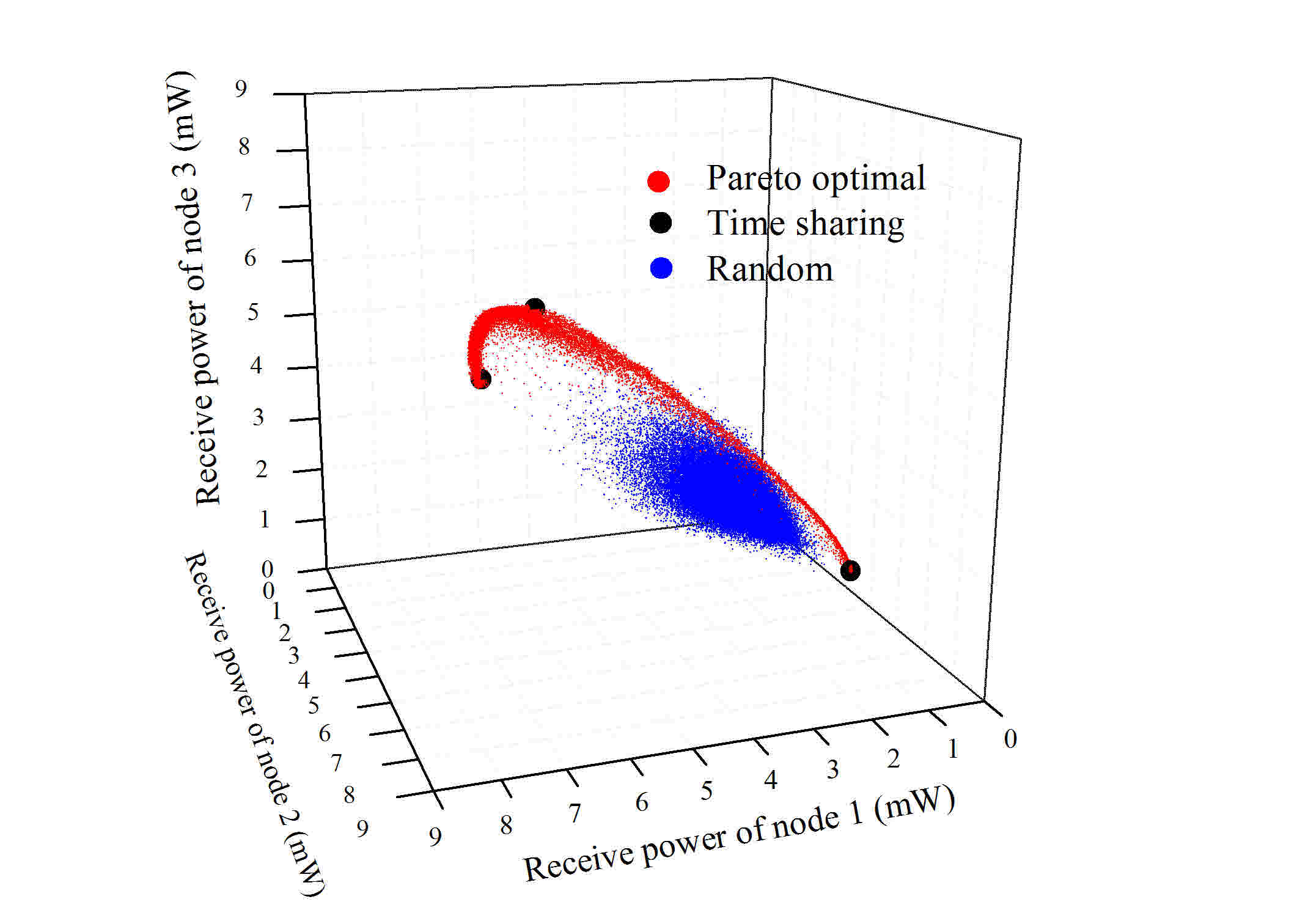}
        }~~~
    \subfigure[0$^\circ$, 120$^\circ$, 240$^\circ$, 1120 mW]{
        \label{fig:pr3dlin120deg2m1}\includegraphics[width=4.0cm, bb=1.5in 0.3in 9in 7.4in] {./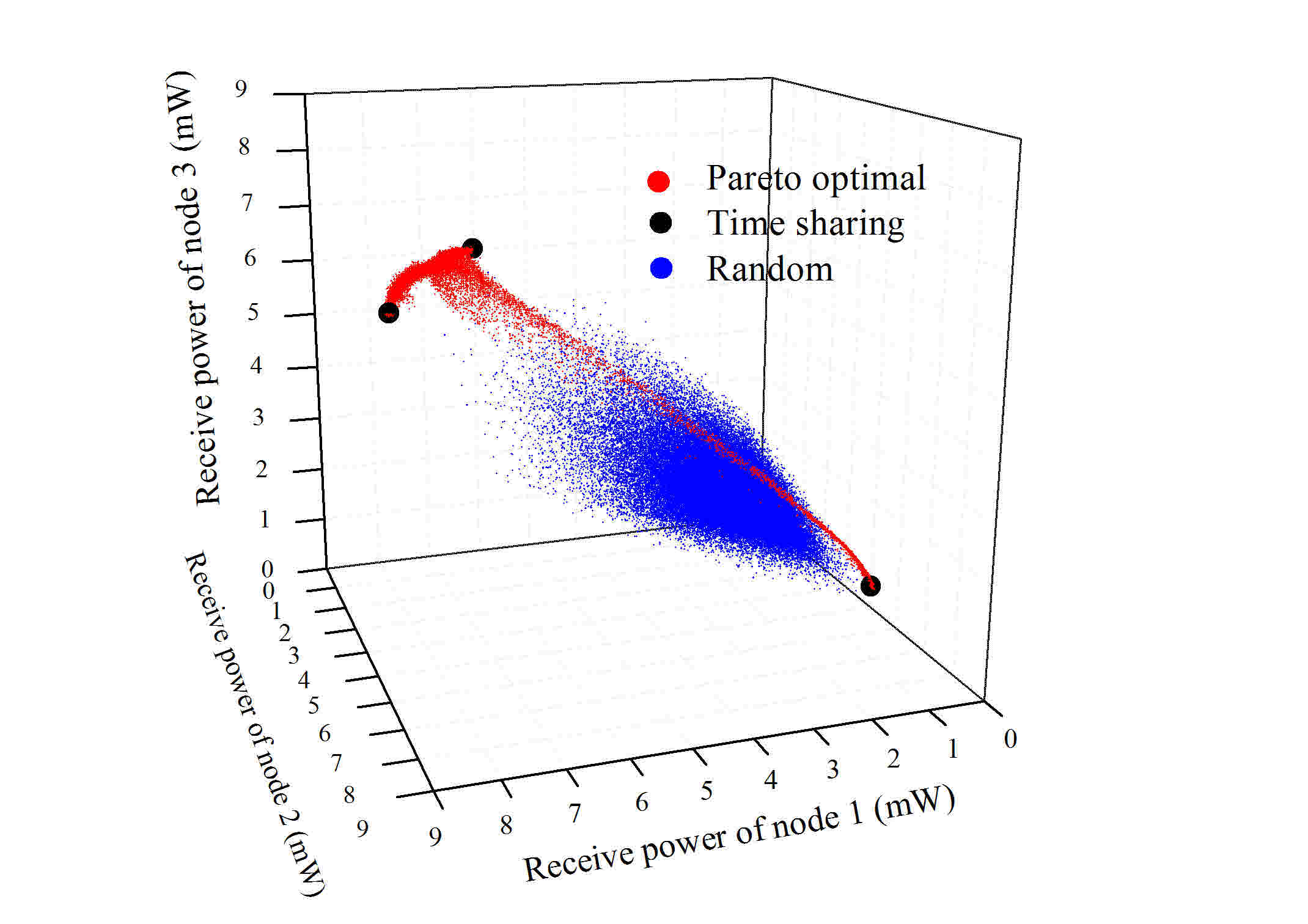}
        }
    \caption{Receive power region and Pareto frontier of three nodes with a linear antenna array.}
    \label{fig:pr3dlin}
\end{figure}

We show the beam-splitting gain in Fig.~\ref{fig:bsg} to compare the beam-splitting and time-sharing beamforming techniques.
In Fig.~\ref{fig:bsgloc2n} and \ref{fig:bsgloc3n}, we show the beam-splitting gain according to the node location.
In Fig.~\ref{fig:bsgloc2n}, there are two nodes with locations (2m, 0$^\circ$) and (2m, $x^\circ$), where $x$ is the x-axis of the graph.
In Fig.~\ref{fig:bsgloc3n}, there are three nodes with locations (2m, 0$^\circ$), (2m, $x^\circ$), and (2m, $2x^\circ$), where $x$ is the x-axis of the graph.
For Fig.~\ref{fig:bsgloc2n} and \ref{fig:bsgloc3n}, the maximum total transmit power is fixed to $P_\text{tot}=1120$ mW.
In Fig.~\ref{fig:bsgmaxpow2n} and \ref{fig:bsgmaxpow3n}, we show the beam-splitting gain according to the maximum total transmit power.
There are two nodes with locations (2m, 0$^\circ$) and (2m, $90^\circ$) in Fig.~\ref{fig:bsgmaxpow2n}, and there are three nodes with locations (2m, 0$^\circ$), (2m, 120$^\circ$), and (2m, 240$^\circ$) in Fig.~\ref{fig:bsgmaxpow3n}.
In all graphs in Fig.~\ref{fig:bsg}, we can see that the beamforming gain ranges from 1 to 1.45, which means up to 45\% gain can be achieved by the beam-splitting beamforming technique.

\begin{figure}
    \centering
    \subfigure[Two nodes, according to node locations]{
        \label{fig:bsgloc2n}\includegraphics[width=4.0cm, bb=1in 0.3in 9in 7.4in] {./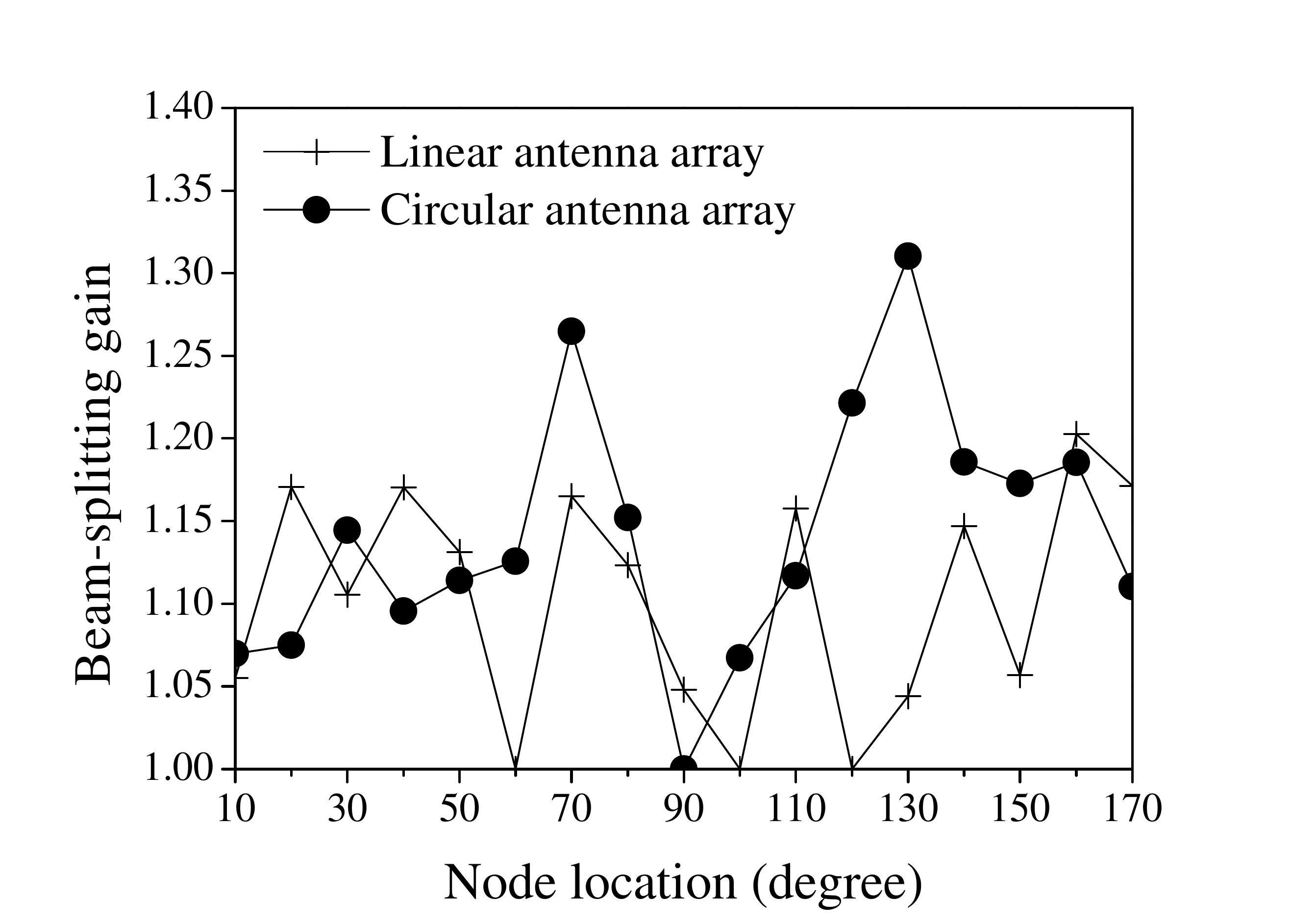}
        }~
    \subfigure[Three nodes, according to node locations]{
        \label{fig:bsgloc3n}\includegraphics[width=4.0cm, bb=1in 0.3in 9in 7.4in] {./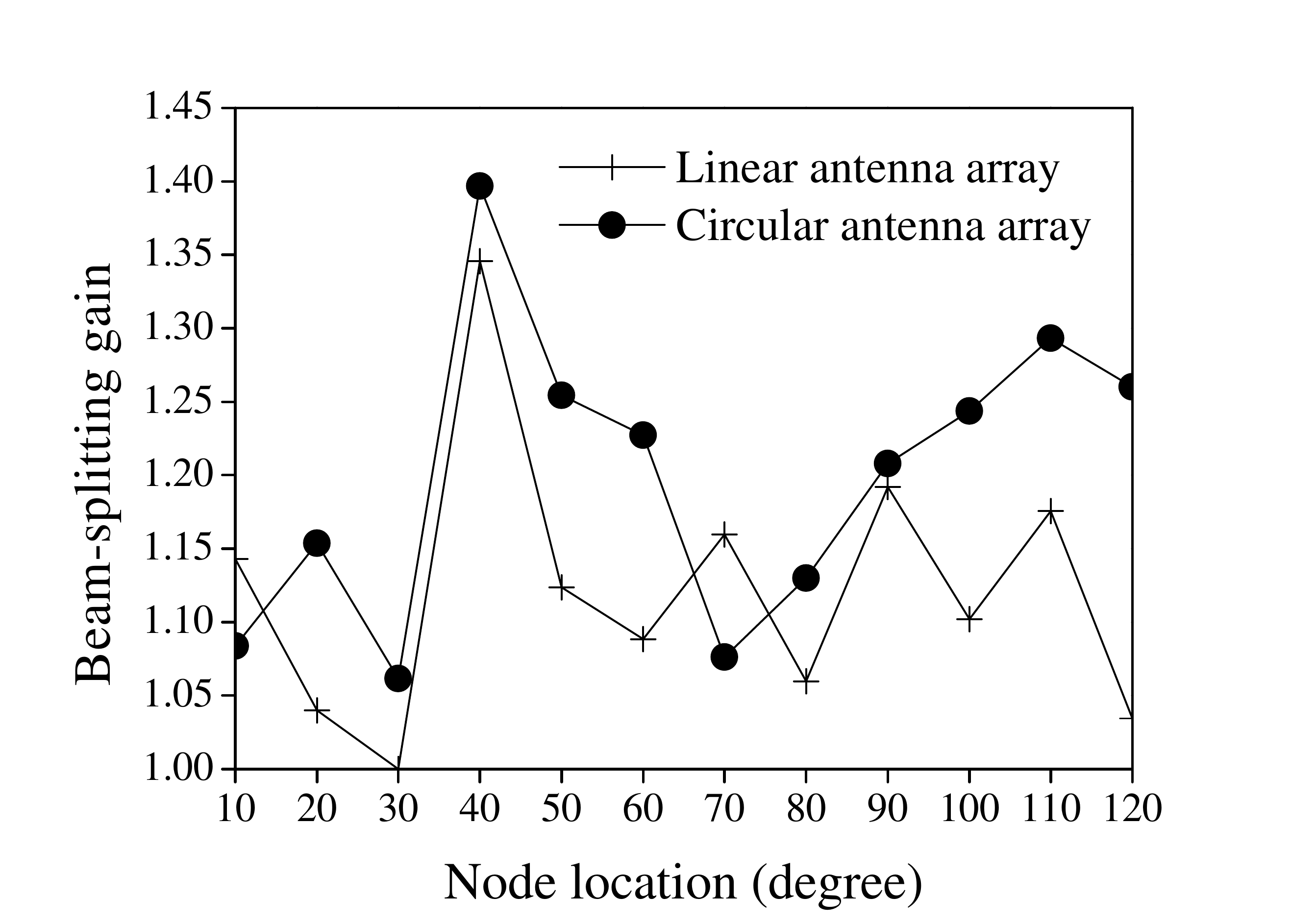}
        }\\
    \subfigure[Two nodes, according to maximum total transmit power]{
        \label{fig:bsgmaxpow2n}\includegraphics[width=4.0cm, bb=1in 0.3in 9in 7.4in] {./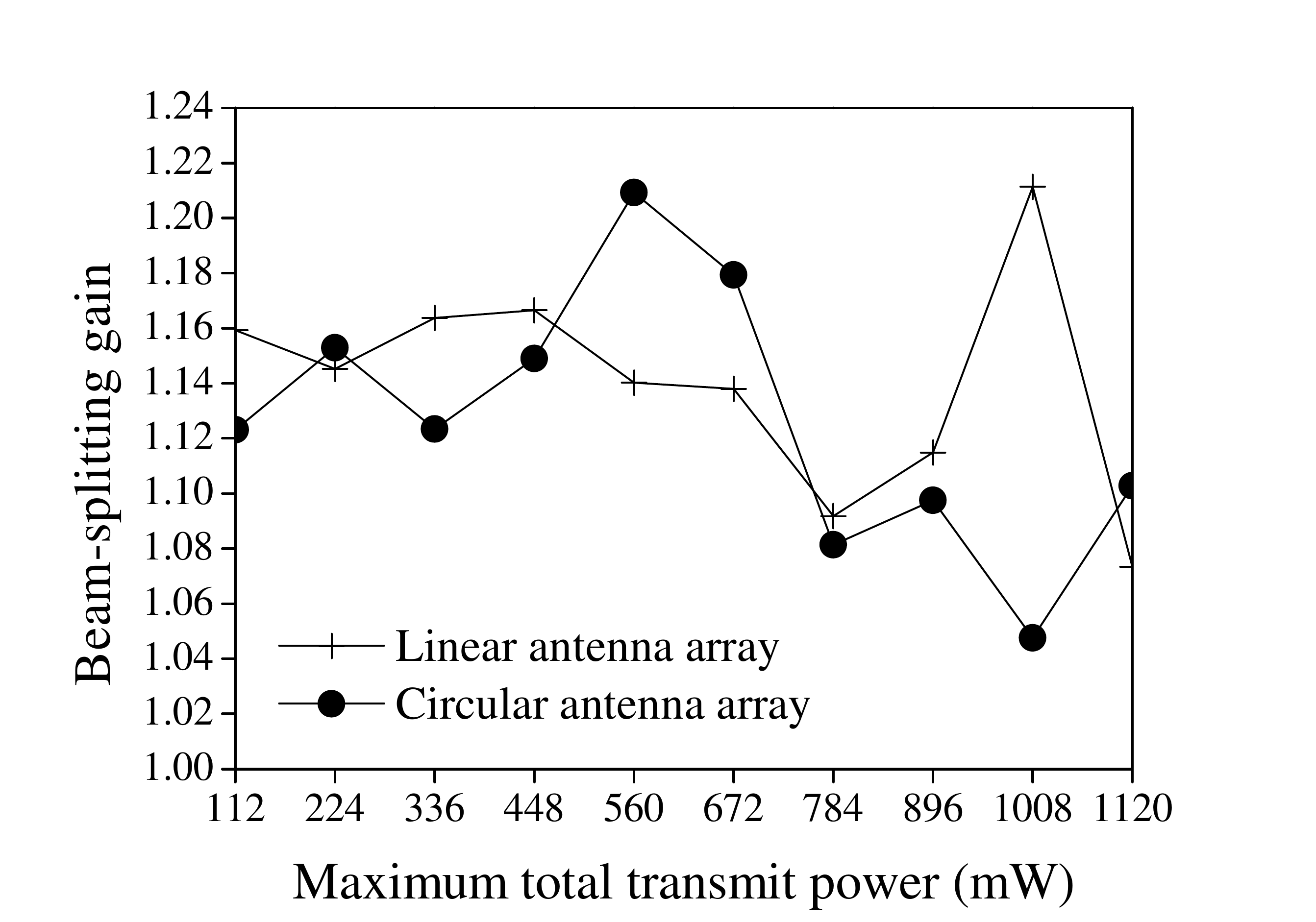}
        }~
    \subfigure[Three nodes, according to maximum total transmit power]{
        \label{fig:bsgmaxpow3n}\includegraphics[width=4.0cm, bb=1in 0.3in 9in 7.4in] {./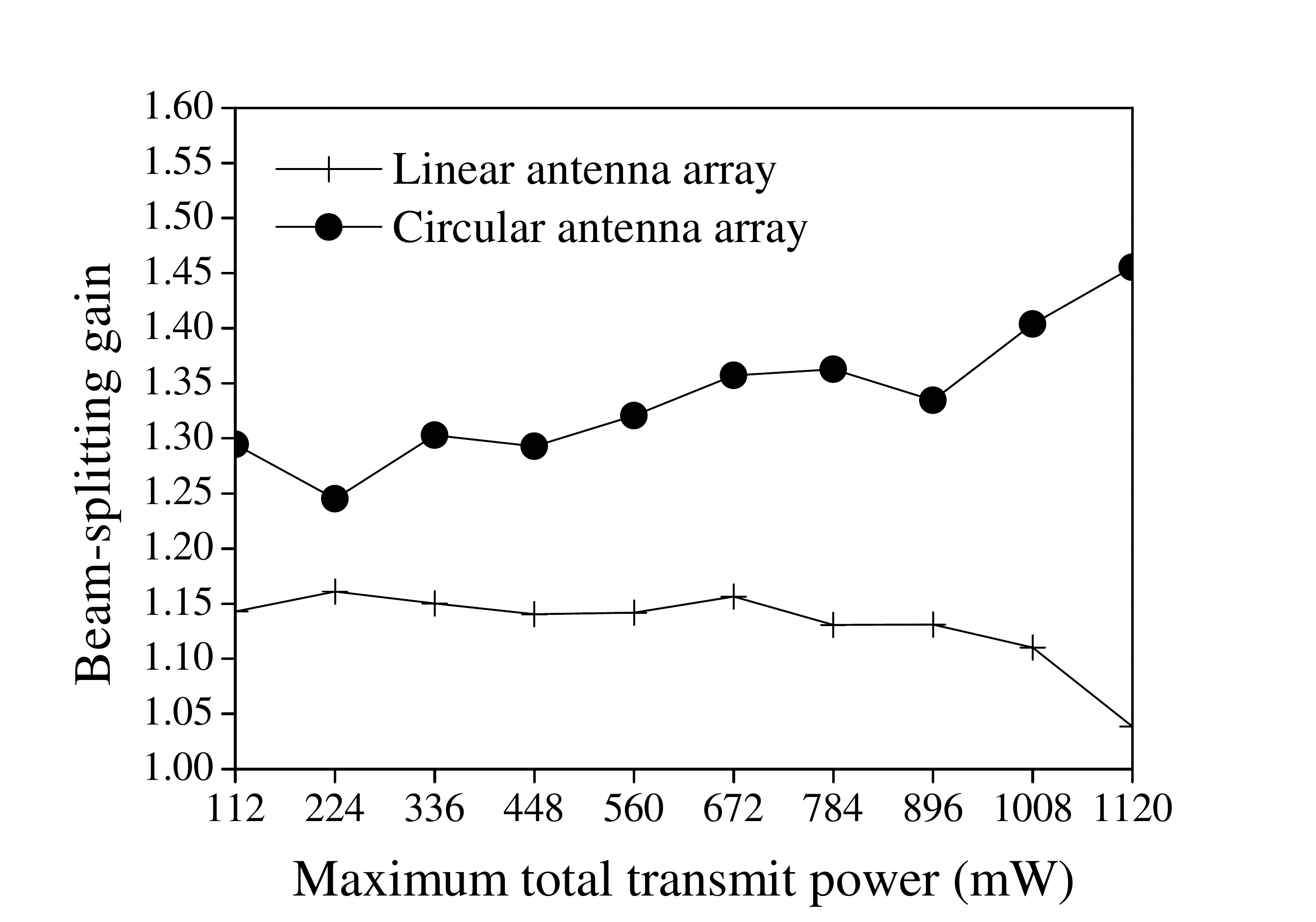}
        }
    \caption{Beam-splitting gain.}
    \label{fig:bsg}
\end{figure}

\section{Joint Beam-Splitting and Energy Neutral Control Method}\label{section:jointmethod}

\subsection{Protocol Description}

In this subsection, we introduce a multi-node multi-antenna WPSN protocol.
Note that this protocol is an extended version of the multi-antenna WPSN protocol for a single node in our previous work \cite{Choi:2017}.
The proposed protocol targets to distribute the RF power to each sensor node in a fair manner by using a beam-splitting beamforming technique.
In addition, the proposed protocol balances the consumed power and harvested power of each node so that a node is not blacked out.
To this end, the proposed protocol should be able to obtain the necessary information (e.g., channel gain and stored energy), and to adaptively control the beamforming weights and the consumed power based on the obtained information.

\begin{figure}
	\centering
    \includegraphics[width=8cm, bb=1.2in 3.7in 6.5in 7.8in]{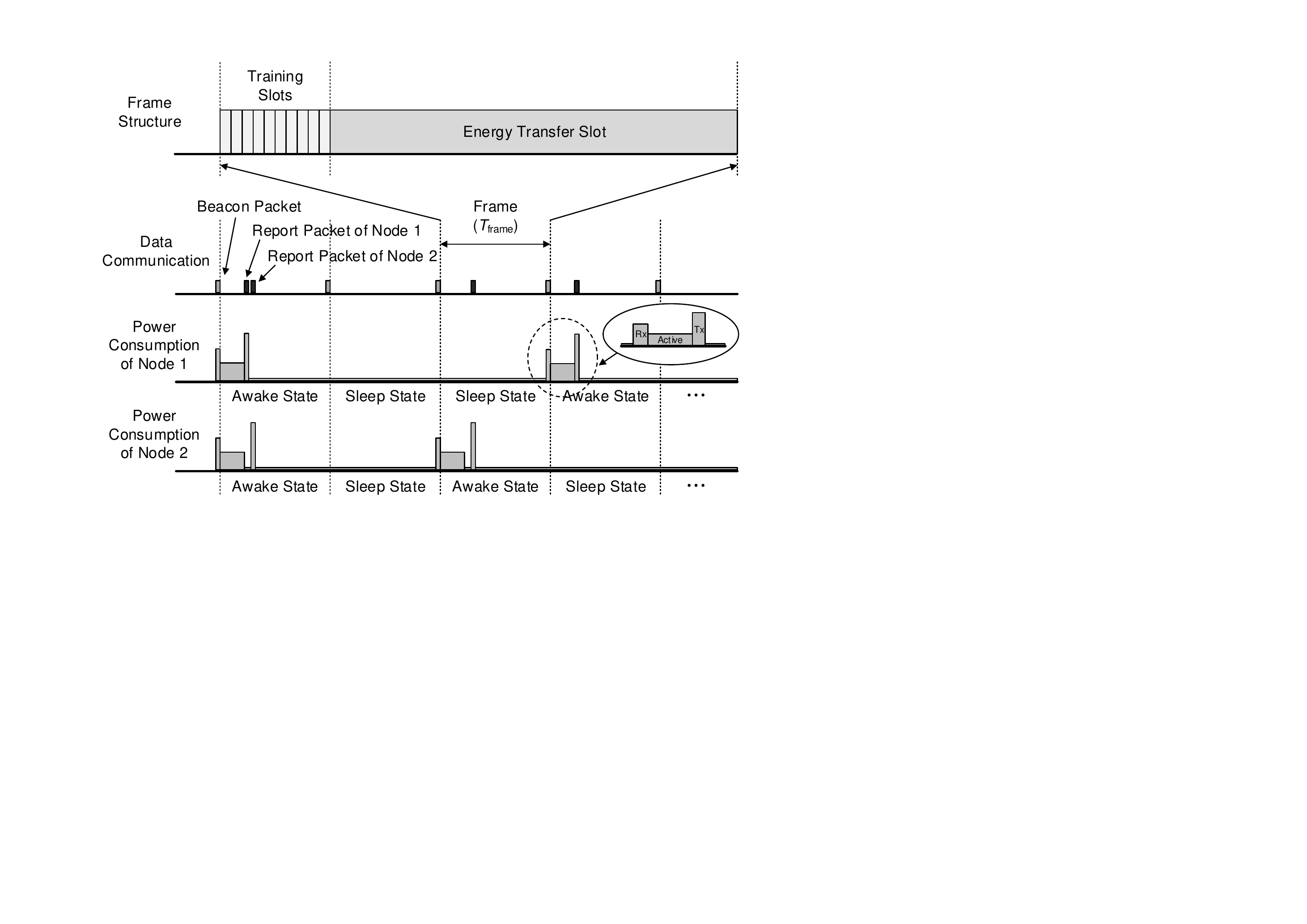}
    \caption{Multi-node multi-antenna WPSN protocol.}
    \label{fig:protocol}
\end{figure}

In the proposed protocol, time is divided into frames, each of which is indexed by $t$, as illustrated in Fig.~\ref{fig:protocol}.
The length of one frame is denoted by $T_\text{frame}$.
Within a frame, multiple training slots are followed by an energy transfer slot.
The training slots are used for estimating the channel gains $h_{k,n}$.
In \cite{Choi:2017} and \cite{Choi:2017_2}, we have proposed a channel estimation method, which sends RF signals with training beamforming weights during the training slots and estimates the channel gains based on the receive power at a node during the training slots.
In this paper, we adopt the channel estimation method in \cite{Choi:2017} and \cite{Choi:2017_2}, which is indispensable for implementing a real-life multi-node multi-antenna WPSN testbed here.

The power beacon forms energy beams to transfer RF energy to nodes during an energy transfer slot of each frame.
Let $T_\text{es}$ denote the length of an energy transfer slot.
During the energy transfer slot of frame $t$, the power beacon uses $w_n(t)$ as the beamforming weight of antenna $n$.
The beamforming weight vector in frame $t$ is denoted by ${\bold w}(t) = (w_1(t),\ldots,w_N(t))^T$.
The per-antenna and total power constraints are satisfied for the beamforming weight vector during the energy transfer slot, i.e., $|w_n(t)|^2\le P_\text{ant}$ for all $n=1,\ldots,N$ and ${\bold w}(t)^H {\bold w}(t)\le P_\text{tot}$.
We assume that the channel gain $h_{k,t}$ does not change over frame.
Therefore, the receive power at node $k$ during the energy transfer slot of frame $t$ is $r_k(t) = |{\bold h}_k^T {\bold w}(t)|^2$.

Each node controls its consumed power by adaptively switching between a sleep state and an awake state.
A node can be either in the sleep state or in the awake state during a frame.
If a sensor node is in the sleep state in frame $t$, the sensor node is in the idle mode during the whole frame $t$.
Therefore, very small power is consumed by a node in the sleep state.
On the other hand, if a sensor node is in the awake state in frame $t$, the sensor node is put into the receive, active, transmit, and idle modes during frame $t$.
The time duration in mode $m$ is denoted by $T_m$ during a frame in which a sensor node is in the awake state.
Then, it is satisfied that $\sum_{m\in {\mathcal M}} T_m = T_\text{frame}$.

Let $a_k(t)$ denote the activity of node $k$ in frame $t$.
We have $a_k(t)=0$ if node $k$ is in the sleep state in frame $t$, and $a_k(t)=1$ if node $k$ is in the awake state in frame $t$.
The average consumed power is reduced by decreasing the ratio of frames in the awake state.
The awake frame ratio $\sigma_k(t)$ is defined as the probability that node $k$ is in the awake state in frame $t$.
The awake frame ratio $\sigma_k(t)$ is determined by the power beacon, which is then notified to node $k$.
Then, node $k$ randomly decides the activity $a_k(t)$ according to the awake frame ratio so that $\sigma_k(t) = \ex[a_k(t)]$ is satisfied.

The power beacon sends a beacon packet at the start of each frame.
A beacon packet contains the awake frame ratios for all nodes.
At the start of a frame, all nodes in the awake state are put into the receive mode during $T_\text{rx}$ for receiving the beacon packet.
After the beacon packet is received, a node in the awake state changes its mode to the active mode during $T_\text{act}$.
While a node is in the active mode, the node measures the receive power in all training slots for the channel estimation.
During this period, the node can also perform sensing and computation pertaining to its own mission.
After the active mode is over, the node transmits a report packet to the power beacon, which contains the receive power measurements obtained during the training slots, the stored energy level, and other sensing results.
For transmitting the report packet, the node is put into a transmit mode during $T_\text{tx}$.
To avoid collisions between the report packets from multiple nodes, the nodes transmit the report packets after different back-off intervals, during which the node is in the idle mode.
After the report packet is sent, the node is set to the idle mode during the rest of the frame.

The power beacon receives the report packets from all the nodes in the awake state in each frame.
The power beacon estimates the channel gains from the receive power measurements in the report packet according to the channel estimation algorithm in \cite{Choi:2017}.
Then, the power beacon decides the beamforming weights and the awake frame ratios for the next frame, based on the estimated channel gains and the reported stored energy level.

\subsection{Stored Energy Evolution Model}

In this subsection, we explain the evolution of the stored energy in each node over frames.
Let $E_k(t)$ denote the stored energy in node $k$ at the start of frame $t$.
The stored energy evolution of node $k$ is governed by the following equation:
\begin{align}\label{eq:enevol}
\begin{split}
E_k(t+1) &= \min\big\{E_k(t) + \Delta^+(r_k(t)) \\
&\qquad\qquad- \Delta^-(a_k(t),E_k(t)),\ E_\text{max}\big\},
\end{split}
\end{align}
where $\Delta^+(r_k(t))$ and $\Delta^-(a_k(t),E_k(t))$ are the harvested energy and consumed energy in node $k$ during frame $t$, respectively.

From \eqref{eq:rhok}, the harvested energy during one frame when the receive power is $r$, is given by
\begin{align}
\Delta^+(r) = \eta T_\text{es}\cdot r.
\end{align}
From \eqref{eq:deltame}, the consumed energy during one frame when the activity is $a$ and the stored energy is $E$, is given by
\begin{align}\label{eq:consume}
\begin{split}
\Delta^-(a,E)
= \begin{cases}
\delta(\text{idle},E)\cdot T_\text{frame}, & \text{if }a=0\\
\sum_{m\in {\mathcal M}}\delta(m,E)\cdot T_m, & \text{if }a=1.
\end{cases}
\end{split}
\end{align}
Note that the consumed energy in the awake state is much larger than that in the sleep state, that is, $\Delta^-(1,E) \gg \Delta^-(0,E)$.

We can rewrite \eqref{eq:consume} as follows:
\begin{align}
\begin{split}
\Delta^-(a,E)& = (\Delta^-(1,E)-\Delta^-(0,E))\cdot a + \Delta^-(0)\\
& = \kappa(E)\cdot a + \varphi(E),
\end{split}
\end{align}
where $\kappa(E) = \Delta^-(1,E)-\Delta^-(0,E)$ and $\varphi(E) = \Delta^-(0,E)$.
In the above equation, $\varphi(E)$ is the energy that is always consumed due to the idle power consumption of a sensor node and the leakage power of a supercapacitor.
On the other hand, $\kappa(E)$ is the energy that is additionally consumed to activate a sensor node for a frame.

The stored energy is stabilized if the expected stored energy variation is equal to or larger than zero.
The expected stored energy variation when the receive power is $r$, the awake frame ratio is $\sigma$, and the stored energy is $E$, is
\begin{align}\label{eq:envar}
\begin{split}
&\ex[\Delta^+(r) - \Delta^-(a,E)|r,\sigma,E]\\
&\qquad= \eta T_\text{es}\cdot r - \kappa(E)\cdot \sigma - \varphi(E).
\end{split}
\end{align}
The energy neutral operation makes the expected stored energy variation in \eqref{eq:envar} no less than zero by controlling the receive power and awake frame ratio.

\subsection{Optimal Control Problem Formulation}

In this subsection, we formulate an optimization problem that maximizes the sum utility of all nodes while satisfying the energy neutrality constraint in each node.
The utility of a node is an increasing function of the awake frame ratio since sensing and reporting can be done more frequently with the higher awake frame ratio.
We define a utility function that maps the awake frame ratio to the utility obtained by each node.
We use the log utility function such that
\begin{align}\label{eq:utility}
\mu(\sigma) = \frac{\sigma^\psi - 1}{\psi},
\end{align}
where $\psi$ is a parameter for the utility function and $\sigma$ is the awake frame ratio.
If $\psi$ goes to zero, the utility function is $\mu(\sigma) = \ln(\sigma)$.
The utility obtained by node $k$ at frame $t$ is $\mu(\sigma_k(t))$.
Note that we have $\mu(\sigma)\le 0$ since $0\le \sigma \le 1$.

To maximize the sum utility of all nodes while satisfying the energy neutrality constraint, we formulate the optimization problem as
\begin{align}\label{eq:optimization}
\begin{split}
&\text{maximize}\qquad \mbox{$\sum_{k=1}^K$} \mu(\sigma_k)\\
&\text{subject to}\\
&\quad \eta T_\text{es}\cdot r_k - \kappa(E_k)\cdot \sigma_k - \varphi(E_k) \ge \epsilon,\\
&\quad\qquad \text{for all $k=1,\ldots,K$ and $E_\text{min}\le E_k\le E_\text{max}$},\\
&\quad 0\le \sigma_k \le 1,\quad\text{for all $k=1,\ldots,K$},\\
&\quad {\bold r} \in \overline{\mathcal R},
\end{split}
\end{align}
where $\epsilon \ge 0$ is a small margin for the energy neutrality constraints.
Let ${\bold r}^*(\epsilon)=(r_1^*(\epsilon),\ldots,r_K^*(\epsilon))^T$ and $\sigma_k^*(\epsilon)$ for $k=1,\ldots,K$ as the optimal solutions to \eqref{eq:optimization}.
With these optimal solutions, the expected stored energy variation satisfies that $\eta T_\text{es}\cdot r_k^*(\epsilon) - \kappa(E_k)\cdot \sigma_k^*(\epsilon) - \varphi(E_k) \ge \epsilon$ for all $k=1,\ldots,K$ and $E_\text{min}\le E_k\le E_\text{max}$.
The optimal value of the optimization problem \eqref{eq:optimization} is denoted by $U^*(\epsilon) = \sum_{k=1}^K \mu(\sigma_k^*(\epsilon))$.

\subsection{Lyapunov Optimization for Joint Beam-Splitting and Energy Neutral Control}\label{section:lyapunov}

We design the optimal control algorithm to find the optimal solution of \eqref{eq:optimization} based on the Lyapunov optimization technique.
The Lyapunov function is defined as
\begin{align}
L(t) = \frac{1}{2}\sum_{k=1}^K (E_\text{max}-E_k(t))^2,
\end{align}
where we will call $(E_\text{max}-E_k(t))$ the stored energy deficiency.
The Lyapunov function is the sum of the squared stored energy deficiencies.
The Lyapunov drift is the expected variation of the Lyapunov function such that
\begin{align}
D(t) = \ex[L(t+1)-L(t)|{\bold E}(t)],
\end{align}
where ${\bold E}(t) = (E_1(t),\ldots,E_K(t))^T$.
The proposed algorithm minimizes the following drift-plus-penalty function to maximize the sum utility while stabilizing the stored energy:
\begin{align}\label{eq:lambda}
D(t) - \lambda \sum_{k=1}^K \mu(\sigma_k(t)).
\end{align}

The following lemma holds for the drift-plus-penalty function.
\begin{lemma}\label{lemma:driftpluspenalty}
The drift-plus-penalty function satisfies the following inequality:
\begin{align}\label{eq:driftpluspenalty}
\begin{split}
&D(t) - \lambda \mbox{$\sum_{k=1}^K$} \mu(\sigma_k(t))\\
&\le -\eta T_\text{es}\mbox{$\sum_{k=1}^K$} (E_\text{max}-E_k(t))\cdot r_k(t)\\
&\quad - \lambda\mbox{$\sum_{k=1}^K$}\big\{\mu(\sigma_k(t))-(\kappa(E_k(t))/\lambda)(E_\text{max}-E_k(t)) \sigma_k(t)\big\}\\
&\quad + \mbox{$\sum_{k=1}^K$}(E_\text{max}-E_k(t))\varphi(E_k(t)) + \Upsilon,
\end{split}
\end{align}
where
\begin{align}
\begin{split}
&\Upsilon = \max_{{\bold r},\sigma_k,E_k}\Big\{\mbox{$\frac{1}{2}\sum_{k=1}^K$} \ex[(\Delta^+(r_k)-\Delta^-(a_k,E_k))^2|E_k]\Big\}\\
&=  \frac{1}{2} \max_{{\bold r},\sigma_k,E_k}
\Big\{ \mbox{$\sum_{k=1}^K$}\big\{(\eta T_\text{es}r_k)^2 - 2\eta T_\text{es}r_k(\kappa(E_k) \sigma_k+\varphi(E_k))\\
&\qquad\qquad\qquad + (\kappa(E_k)^2+2\kappa(E_k)\varphi(E_k))\sigma_k + \varphi(E_k)^2\big\}\Big\}.
\end{split}
\end{align}
\end{lemma}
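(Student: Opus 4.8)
The plan is to carry out a standard Lyapunov drift computation: bound $L(t+1)-L(t)$ pathwise by discarding the saturation operator in the stored energy recursion, expand the resulting square, take the conditional expectation given ${\bold E}(t)$, and then regroup the terms so that the $\kappa$-contribution sits next to the utility penalty.

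First I would remove the $\min\{\cdot,E_\text{max}\}$ in \eqref{eq:enevol}. Writing $A_k(t)=E_k(t)+\Delta^+(r_k(t))-\Delta^-(a_k(t),E_k(t))$, we have $E_k(t+1)=\min\{A_k(t),E_\text{max}\}$, hence $E_k(t+1)\le A_k(t)$ and $0\le E_\text{max}-E_k(t+1)$. Splitting into the case $A_k(t)\le E_\text{max}$ (where $E_\text{max}-E_k(t+1)=E_\text{max}-A_k(t)$) and the case $A_k(t)>E_\text{max}$ (where $E_\text{max}-E_k(t+1)=0\le (E_\text{max}-A_k(t))^2$) gives $(E_\text{max}-E_k(t+1))^2\le(E_\text{max}-A_k(t))^2$ in both cases. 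Expanding the right-hand side as $(E_\text{max}-E_k(t))^2-2(E_\text{max}-E_k(t))\big(\Delta^+(r_k(t))-\Delta^-(a_k(t),E_k(t))\big)+\big(\Delta^+(r_k(t))-\Delta^-(a_k(t),E_k(t))\big)^2$, summing over $k$, and dividing by two yields a pathwise upper bound on $L(t+1)-L(t)$.

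Next I would take $\ex[\cdot\,|\,{\bold E}(t)]$ of this bound. Given ${\bold E}(t)$, the beamforming weights (hence each $r_k(t)$) and the awake frame ratios $\sigma_k(t)$ are deterministic, so the only randomness is in the Bernoulli activities $a_k(t)$, which satisfy $\ex[a_k(t)]=\sigma_k(t)$ and $a_k(t)^2=a_k(t)$. With $\Delta^+(r)=\eta T_\text{es}\,r$ and $\Delta^-(a,E)=\kappa(E)a+\varphi(E)$, the linear term becomes $\ex[\Delta^+(r_k(t))-\Delta^-(a_k(t),E_k(t))\,|\,{\bold E}(t)]=\eta T_\text{es}\,r_k(t)-\kappa(E_k(t))\sigma_k(t)-\varphi(E_k(t))$, which is exactly the expected stored energy variation of \eqref{eq:envar}; and the quadratic term $\ex[(\Delta^+(r_k(t))-\Delta^-(a_k(t),E_k(t)))^2\,|\,E_k(t)]$ becomes, after using $a_k(t)^2=a_k(t)$ when squaring $\kappa(E_k(t))a_k(t)+\varphi(E_k(t))$, precisely the bracketed summand in the definition of $\Upsilon$. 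Bounding $\tfrac12\sum_k$ of this quantity by its maximum over the feasible region $\{{\bold r}\in\overline{\mathcal R},\ 0\le\sigma_k\le1,\ E_\text{min}\le E_k\le E_\text{max}\}$, which is compact so the maximum is finite, produces the constant $\Upsilon$.

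Finally I would subtract $\lambda\sum_k\mu(\sigma_k(t))$ from both sides and regroup: the term $\sum_k(E_\text{max}-E_k(t))\kappa(E_k(t))\sigma_k(t)$ coming from the $-\Delta^-$ part combines with $-\lambda\sum_k\mu(\sigma_k(t))$ into $-\lambda\sum_k\{\mu(\sigma_k(t))-(\kappa(E_k(t))/\lambda)(E_\text{max}-E_k(t))\sigma_k(t)\}$, the $\Delta^+$ part becomes $-\eta T_\text{es}\sum_k(E_\text{max}-E_k(t))r_k(t)$, the $\varphi$ part stays as $\sum_k(E_\text{max}-E_k(t))\varphi(E_k(t))$, and the quadratic remainder is at most $\Upsilon$, which reproduces \eqref{eq:driftpluspenalty}. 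I do not expect a genuine obstacle, since the argument is routine; the step requiring the most care is the second-moment expansion, where one must use $a_k(t)^2=a_k(t)$ (not $\sigma_k(t)^2$) so that the coefficient of $\sigma_k(t)$ emerges as $\kappa(E_k(t))^2+2\kappa(E_k(t))\varphi(E_k(t))$, with a secondary point being the sign bookkeeping in the regrouping so that $\kappa$ ends up divided by $\lambda$ inside the penalty bracket.
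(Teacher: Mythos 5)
Your proposal is correct and follows essentially the same route as the paper's proof: drop the $\min\{\cdot,E_\text{max}\}$ via the squared-deficiency comparison, expand the quadratic, take the conditional expectation using $\ex[a_k(t)]=\sigma_k(t)$ and $a_k(t)^2=a_k(t)$, bound the second-moment term by the constant $\Upsilon$, and regroup the $\kappa$-term with the utility penalty. The paper states this chain of inequalities without commentary, and your write-up supplies the justifications (notably the case analysis for the saturation operator and the Bernoulli second-moment computation) that the paper leaves implicit.
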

\begin{proof}
See Appendix~\ref{proof:driftpluspenalty}.
\end{proof}

The proposed algorithm minimizes the right-hand side of \eqref{eq:driftpluspenalty} in each frame.
Let $\widetilde{\bold w}(t) = (\widetilde{w}_1(t),\ldots,\widetilde{w}_N(t))^T$, $\widetilde{\bold r}(t) = (\widetilde{r}_1(t),\ldots,\widetilde{r}_K(t))^T$, and $\widetilde{\sigma}_k(t)$ denote the beamforming weight vector, receive power vector, and awake frame ratio that minimize the right-hand side of \eqref{eq:driftpluspenalty}.
We solve the following optimization problem to obtain $\widetilde{\bold r}(t)$:
\begin{align}\label{eq:optrxpow}
\begin{split}
&\text{maximize}\qquad \mbox{$\sum_{k=1}^K$} (E_\text{max}-E_k(t))\cdot r_k\\
&\text{subject to}\qquad {\bold r}\in \overline{\mathcal R}.
\end{split}
\end{align}
Let us define ${\boldsymbol \Omega}(t) = (E_\text{max}-E_1(t),\ldots,E_\text{max}-E_K(t))^T$ as the stored energy deficiency vector.
Then, the optimal beamforming weight vector for the optimization problem \eqref{eq:optrxpow} is equal to the beamforming weight vector for the beam-splitting beamforming technique with the receive power weight vector ${\boldsymbol \Omega}(t)$.
Therefore, we have
\begin{align}\label{eq:wtopt}
\widetilde{\bold w}(t) = {\bold w}^\text{BS}({\boldsymbol \Omega}(t)).
\end{align}
The corresponding optimal receive power vector is
\begin{align}
\widetilde{\bold r}(t) = |{\bold h}_k^T {\bold w}^\text{BS}({\boldsymbol \Omega}(t))|^2.
\end{align}

To obtain $\widetilde{\sigma}_k(t)$, we solve the following optimization problem:
\begin{align}\label{eq:afr}
\begin{split}
&\text{maximize}\qquad \mu(\sigma)-(\kappa(E_k(t))/\lambda)(E_\text{max}-E_k(t)) \sigma\\
&\text{subject to}\qquad 0\le \sigma \le 1.
\end{split}
\end{align}
The solution to the optimization problem \eqref{eq:afr} is
\begin{align}\label{eq:sigmaktopt}
\widetilde{\sigma}_k(t) = \min\{((\kappa(E_k(t))/\lambda)(E_\text{max}-E_k(t)))^{\frac{1}{\psi-1}},1\}.
\end{align}
In frame $t$, the proposed algorithm sets the beamforming weight to $\widetilde{\bold w}(t)$ in \eqref{eq:wtopt} and the awake frame ratio for node $k$ to $\widetilde{\sigma}_k(t)$ in \eqref{eq:sigmaktopt}.

Now, we prove that the proposed algorithm achieves the optimality in terms of the sum utility while stabilizing the stored energy.
The expected sum utility is defined as
\begin{align}
\limsup_{\tau\rightarrow\infty} \frac{1}{\tau}\sum_{t=1}^\tau \sum_{k=1}^K \ex[\mu(\sigma_k(t))].
\end{align}
In addition, the expected stored energy deficiency of node $k$ is defined as
\begin{align}
\limsup_{\tau\rightarrow\infty} \frac{1}{\tau}\sum_{t=1}^\tau \ex[E_\text{max}-E_k(t)].
\end{align}
The following theorem holds for the expected sum utility and the expected stored energy deficiency.
\begin{theorem}\label{theorem:optimality}
If the beamforming weight vector in frame $t$ is set to $\widetilde{\bold w}(t)$ and the awake frame ratio for node $k$ in frame $t$ is set to $\widetilde{\sigma}_k(t)$ for all $k=1,\ldots,K$, the lower bound of the expected sum utility is given by
\begin{align}\label{eq:optutility}
\limsup_{\tau\rightarrow\infty} \frac{1}{\tau}\sum_{t=1}^\tau \sum_{k=1}^K \ex[\mu(\sigma_k(t))]
\ge U^*(\epsilon) - \frac{\Upsilon}{\lambda}.
\end{align}
In addition, the sum of the expected stored energy deficiencies of all nodes has an upper bound as
\begin{align}\label{eq:optstoredenergy}
\sum_{k=1}^K \limsup_{\tau\rightarrow\infty} \frac{1}{\tau}\sum_{t=1}^\tau \ex[E_\text{max}-E_k(t)]
\le \frac{\Upsilon-\lambda U^*(\epsilon)}{\epsilon}.
\end{align}
\end{theorem}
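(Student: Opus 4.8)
The plan is to follow the standard Lyapunov drift-plus-penalty argument of Neely. The starting point is Lemma~\ref{lemma:driftpluspenalty}, which bounds the drift-plus-penalty $D(t) - \lambda\sum_k \mu(\sigma_k(t))$ by an expression that, up to the constant $\Upsilon$ and the $\varphi$-term, is exactly the objective that $\widetilde{\bold w}(t)$ and $\widetilde\sigma_k(t)$ are defined to minimize. Since the proposed algorithm minimizes the right-hand side of \eqref{eq:driftpluspenalty} over all admissible choices, I can compare it against the particular (suboptimal but stationary) policy that ignores the state and always plays the optimizer of \eqref{eq:optimization}: set ${\bold r}(t) = {\bold r}^*(\epsilon)$ and $\sigma_k(t) = \sigma_k^*(\epsilon)$. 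This plug-in step is the crux: it replaces the algorithm's achieved value of the bound by the value attained by the $\epsilon$-optimal stationary policy.

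Next I would substitute the stationary policy into the right-hand side of \eqref{eq:driftpluspenalty}. The key algebraic simplification is that, for the $\epsilon$-optimal policy, the energy-neutrality constraint of \eqref{eq:optimization} gives $\eta T_\text{es} r_k^*(\epsilon) - \kappa(E_k(t))\sigma_k^*(\epsilon) - \varphi(E_k(t)) \ge \epsilon$ for every admissible $E_k(t)$. Multiplying this by the nonnegative weight $(E_\text{max}-E_k(t))$ and summing over $k$, the $r_k$, $\sigma_k$, and $\varphi$ terms on the right-hand side collapse into $-\lambda U^*(\epsilon) - \epsilon\sum_k (E_\text{max}-E_k(t))$. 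Hence I obtain the clean bound
\begin{align}\label{eq:driftbound}
D(t) - \lambda\mbox{$\sum_{k=1}^K$}\mu(\sigma_k(t)) \le \Upsilon - \lambda U^*(\epsilon) - \epsilon\mbox{$\sum_{k=1}^K$}(E_\text{max}-E_k(t)),
\end{align}
valid for every frame $t$ along the trajectory generated by the proposed algorithm, because its achieved value of the bound is no larger than that of the stationary policy.

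Finally I would take iterated expectations in \eqref{eq:driftbound}, sum over $t = 1,\ldots,\tau$, and telescope the drift: $\sum_{t=1}^\tau \ex[L(t+1)-L(t)] = \ex[L(\tau+1)] - \ex[L(1)] \ge -\ex[L(1)]$ since $L\ge 0$. Dividing by $\tau$ and using $L(\tau+1)\ge 0$, one obtains
\begin{align}\label{eq:telescoped}
-\lambda\cdot\frac{1}{\tau}\mbox{$\sum_{t=1}^\tau\sum_{k=1}^K$}\ex[\mu(\sigma_k(t))] + \epsilon\cdot\frac{1}{\tau}\mbox{$\sum_{t=1}^\tau\sum_{k=1}^K$}\ex[E_\text{max}-E_k(t)] \le \Upsilon - \lambda U^*(\epsilon) + \frac{\ex[L(1)]}{\tau}.
\end{align}
To get \eqref{eq:optutility}, I drop the nonnegative energy-deficiency term on the left (it is nonnegative since $E_k(t)\le E_\text{max}$), divide by $\lambda$, and let $\tau\to\infty$ so that $\ex[L(1)]/\tau\to 0$; rearranging gives the $\limsup$ bound $\ge U^*(\epsilon) - \Upsilon/\lambda$. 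To get \eqref{eq:optstoredenergy}, I instead drop the utility term on the left after noting $\mu(\sigma_k(t))\le 0$ (stated in the text), so $-\lambda\sum_k\ex[\mu(\sigma_k(t))]\ge 0$; this leaves $\epsilon\cdot\frac{1}{\tau}\sum_t\sum_k\ex[E_\text{max}-E_k(t)] \le \Upsilon - \lambda U^*(\epsilon) + \ex[L(1)]/\tau$, and dividing by $\epsilon$ and taking $\tau\to\infty$ yields the claimed bound.

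The main obstacle is not any single hard estimate but getting the comparison step exactly right: one must verify that the stationary $\epsilon$-optimal policy is indeed feasible for the per-frame optimization the algorithm solves (it satisfies the per-antenna and total power constraints through ${\bold r}^*(\epsilon)\in\overline{\mathcal R}$, and $0\le\sigma_k^*(\epsilon)\le 1$), so that ``the algorithm's bound value $\le$ the stationary policy's bound value'' is legitimate, and that the energy-neutrality inequality in \eqref{eq:optimization} holds for \emph{every} value of $E_k(t)$ in $[E_\text{min},E_\text{max}]$ (this is built into the problem statement), which is what lets the $(E_\text{max}-E_k(t))$-weighted sum telescope without knowing the actual state. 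A secondary subtlety is the sign bookkeeping when discarding terms: discarding the deficiency term needs $E_k(t)\le E_\text{max}$ (true by \eqref{eq:enevol}), and discarding the utility term needs $\mu(\sigma)\le 0$ (true since $\sigma\in[0,1]$).
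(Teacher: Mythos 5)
Your proposal is correct and follows essentially the same route as the paper's proof: plug the stationary $\epsilon$-optimal policy $({\bold r}^*(\epsilon),\sigma_k^*(\epsilon))$ into the minimized right-hand side of Lemma~\ref{lemma:driftpluspenalty}, use the energy-neutrality constraint weighted by $(E_\text{max}-E_k(t))\ge 0$ to collapse the bound to $\Upsilon-\lambda U^*(\epsilon)-\epsilon\sum_k(E_\text{max}-E_k(t))$, then take expectations, telescope, and drop the appropriate nonnegative term ($\mu\le 0$ or $E_k\le E_\text{max}$) to obtain each bound. Your version is actually slightly more careful than the paper's, which omits the explicit $\ex[L(1)]/\tau$ telescoping term and the feasibility check of the comparison policy.
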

\begin{proof}
See Appendix~\ref{proof:optimality}.
\end{proof}

From \eqref{eq:optutility}, the expected sum utility converges to the optimal value of the optimization problem \eqref{eq:optimization} (i.e., $U^*(\epsilon)$) as $\lambda$ increases.
With a very small $\epsilon$, the proposed algorithm achieves the optimality in terms of the expected sum utility.
From \eqref{eq:optutility}, we can see that the expected stored energy deficiency is bounded.
As $\lambda$ decreases, the upper bound of the expected stored energy deficiency is reduced.
Therefore, the stored energy can be kept above $E_\text{min}$ by using a sufficiently small $\lambda$.

\section{Experimental Result}\label{section:result}

In this section, we present the experimental results that show the performance of the proposed joint beam-splitting and energy neutral control method.
Besides the proposed method, we have tested the joint time-sharing and energy neutral control method for the purpose of comparison.
This time-sharing control method is easily designed by slightly modifying the beam-splitting control method.
The time-sharing control method maximizes the target function in \eqref{eq:optrxpow} over the time-sharing receive power vectors ${\bold r}^{\text{TS},i}$.
For all results in this section, we use the circular antenna array, and the maximum total transmit power is set to $P_\text{tot}=1120$ mW.
Unless noted otherwise, the parameters for the control methods are set to $\psi=0$, $\lambda = 5\times 10^{-6}$ J$^2$, and $\kappa(E) = 2.77\times 10^{-4}$ J.

In Figs.~\ref{fig:timebs} and \ref{fig:timets}, we show the operation of the beam-splitting and time-sharing control methods over time.
For these figures, nodes 1, 2, and 3 are located at (1.5m, 0$^\circ$), (1.5m, 120$^\circ$), and (1.5m, 240$^\circ$), respectively, at the start.
At 20 minutes after the start, we move away node 3 from the transmit antenna to the location (2m, 240$^\circ$) to see how the proposed method adapts to the location change.

Fig.~\ref{fig:timebs} shows the beamforming weight, receive power, stored energy deficiency, and awake frame ratio of the beam-splitting control method.
Until 20 minutes, the stored energy deficiency is stably maintained thanks to the adaptive control of the beamforming weight and the awake frame ratio.
At 20 minutes, the system is agitated due to the movement of node 3 and it starts to adapt to the new channel condition.
The receive power of node 3 is reduced since node 3 is moved away from the transmit antenna.
The beamforming weights are changed as well to the values suitable for the new channel gains of node 3.
The receive powers of nodes 1 and 2 are slightly reduced to focus more power on node 3.
Due to the reduced receive power, the stored energy deficiency of node 3 increases over time.
However, the control method lowers the awake frame ratio of node 3 to reduce the power consumption, and the system is stabilized within a few minutes.

\begin{figure}
    \centering
    \subfigure[Beamforming weight]{
        \label{fig:timeweightbs}\includegraphics[width=4.0cm, bb=1in 0.3in 9in 7.4in] {./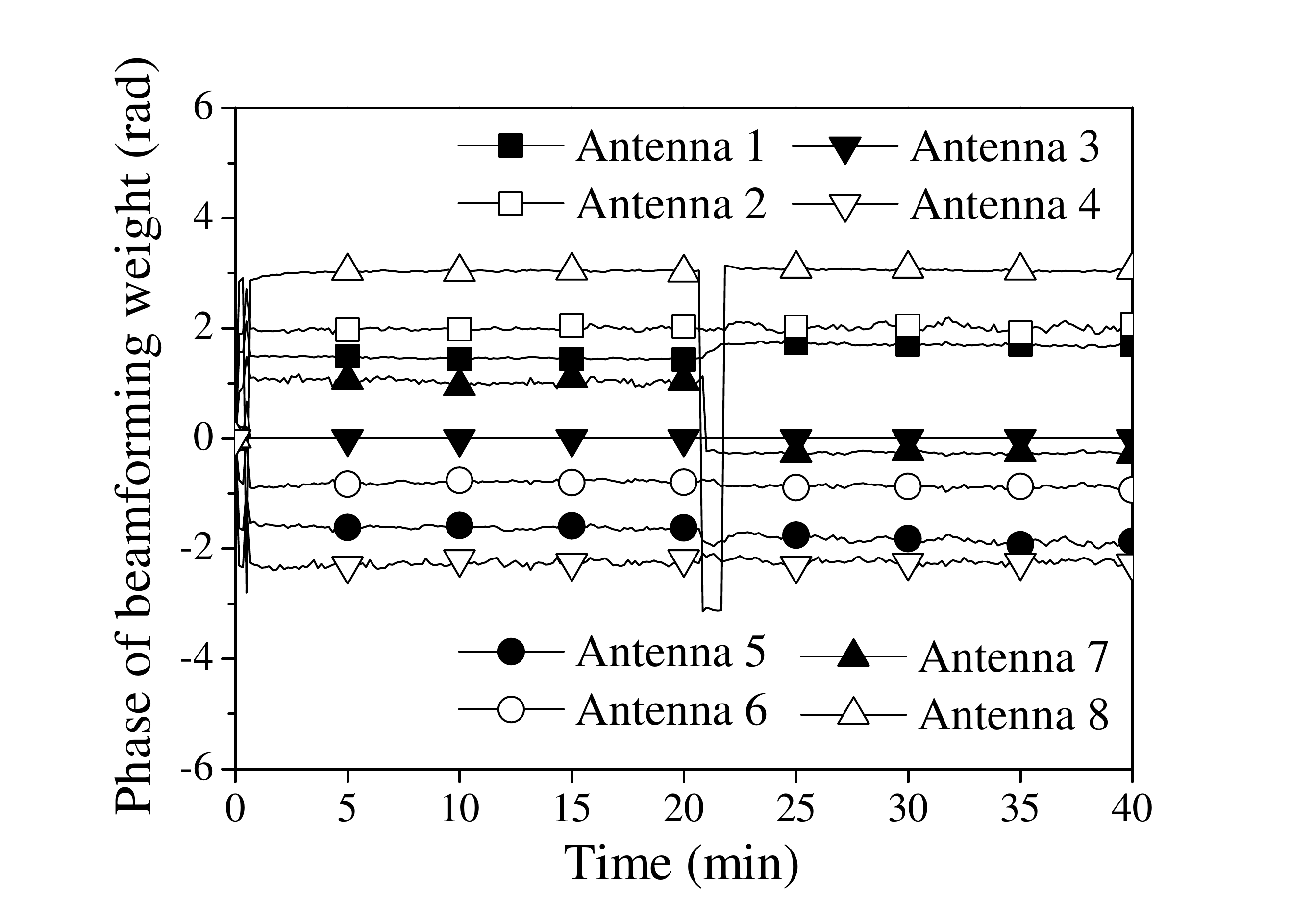}
        }~
    \subfigure[Receive power]{
        \label{fig:timerxpowbs}\includegraphics[width=4.0cm, bb=1in 0.3in 9in 7.4in] {./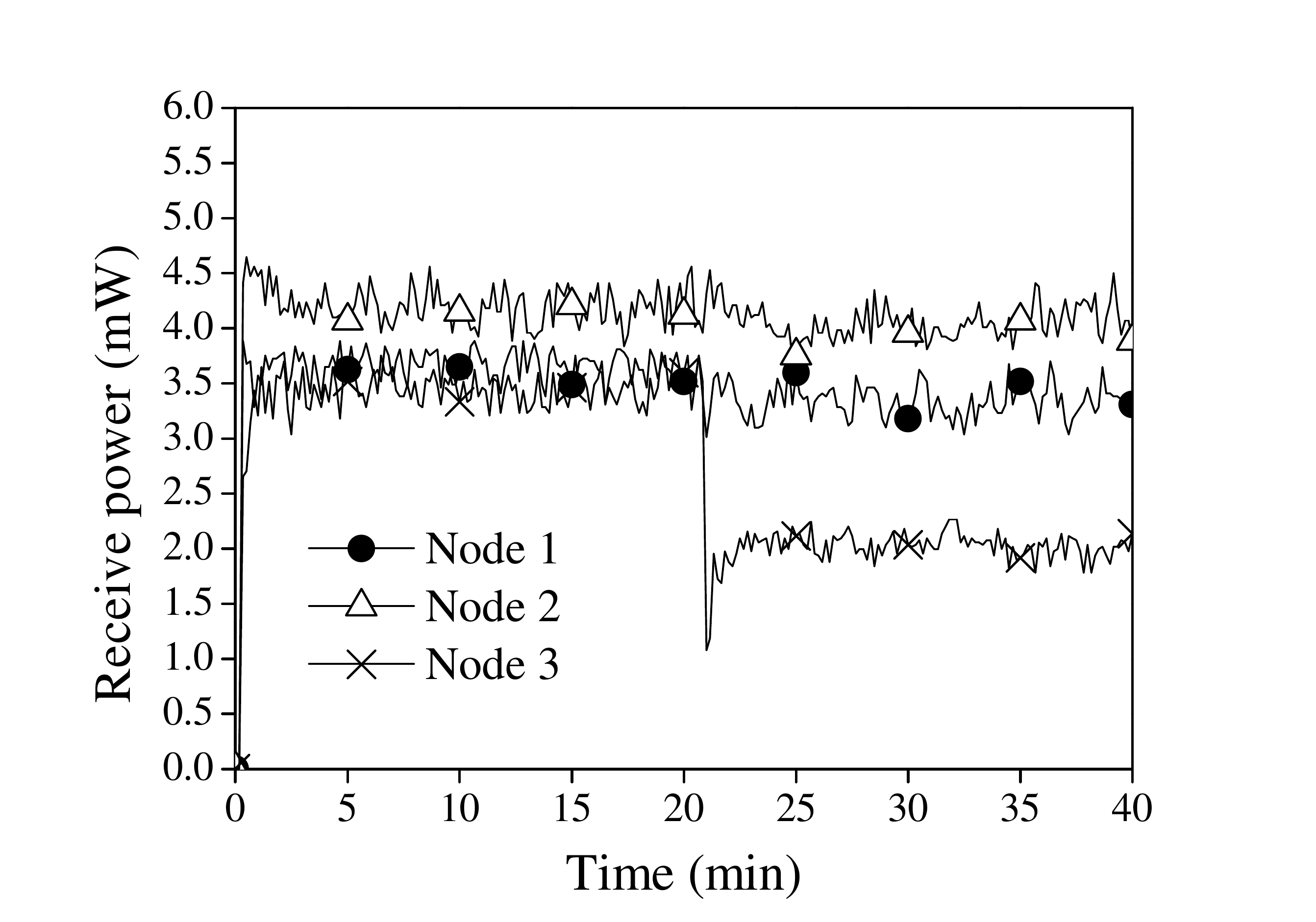}
        }\\
    \subfigure[Stored energy deficiency]{
        \label{fig:timedeficiencybs}\includegraphics[width=4.0cm, bb=1in 0.3in 9in 7.4in] {./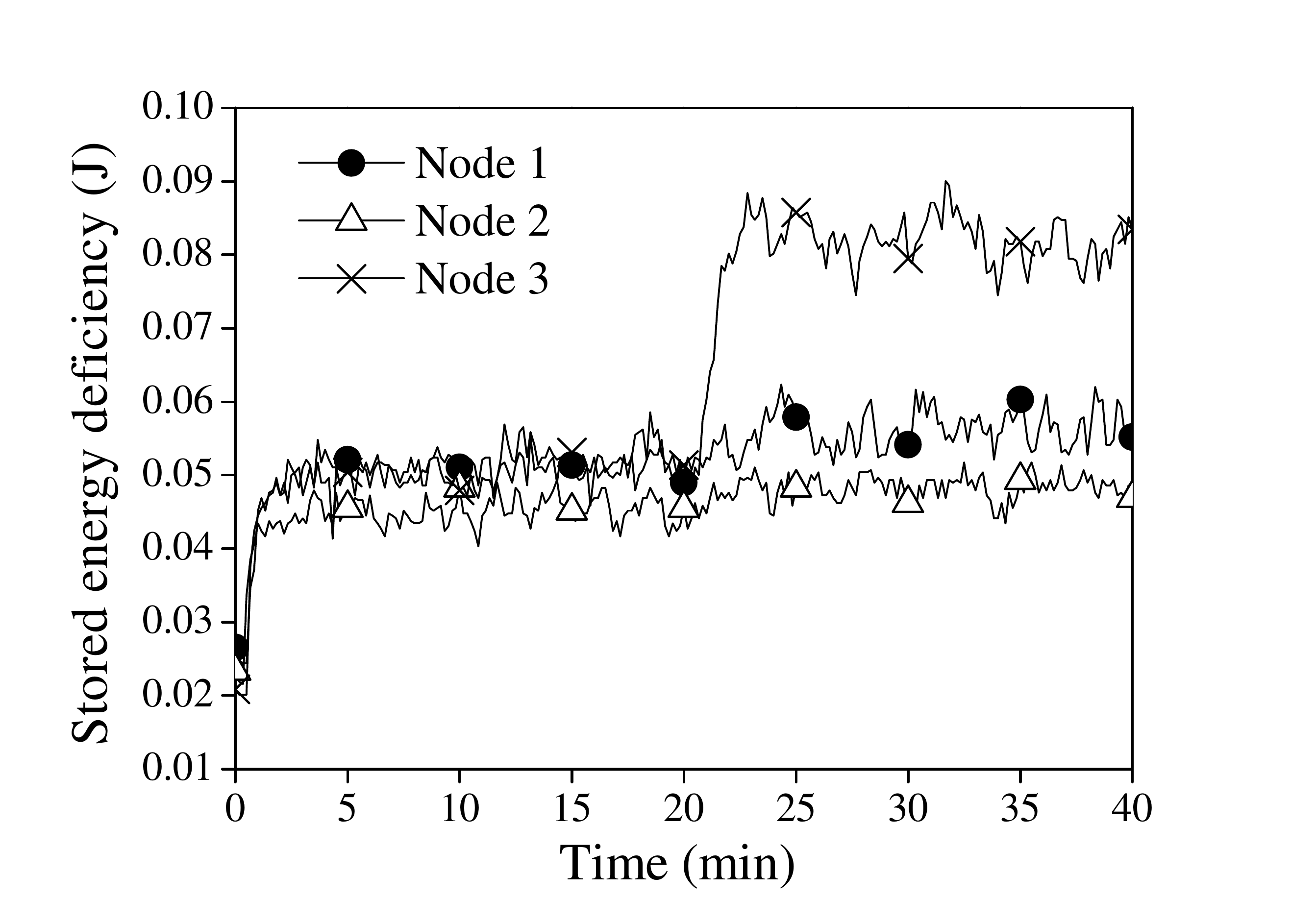}
        }~
    \subfigure[Awake frame ratio]{
        \label{fig:timeafrbs}\includegraphics[width=4.0cm, bb=1in 0.3in 9in 7.4in] {./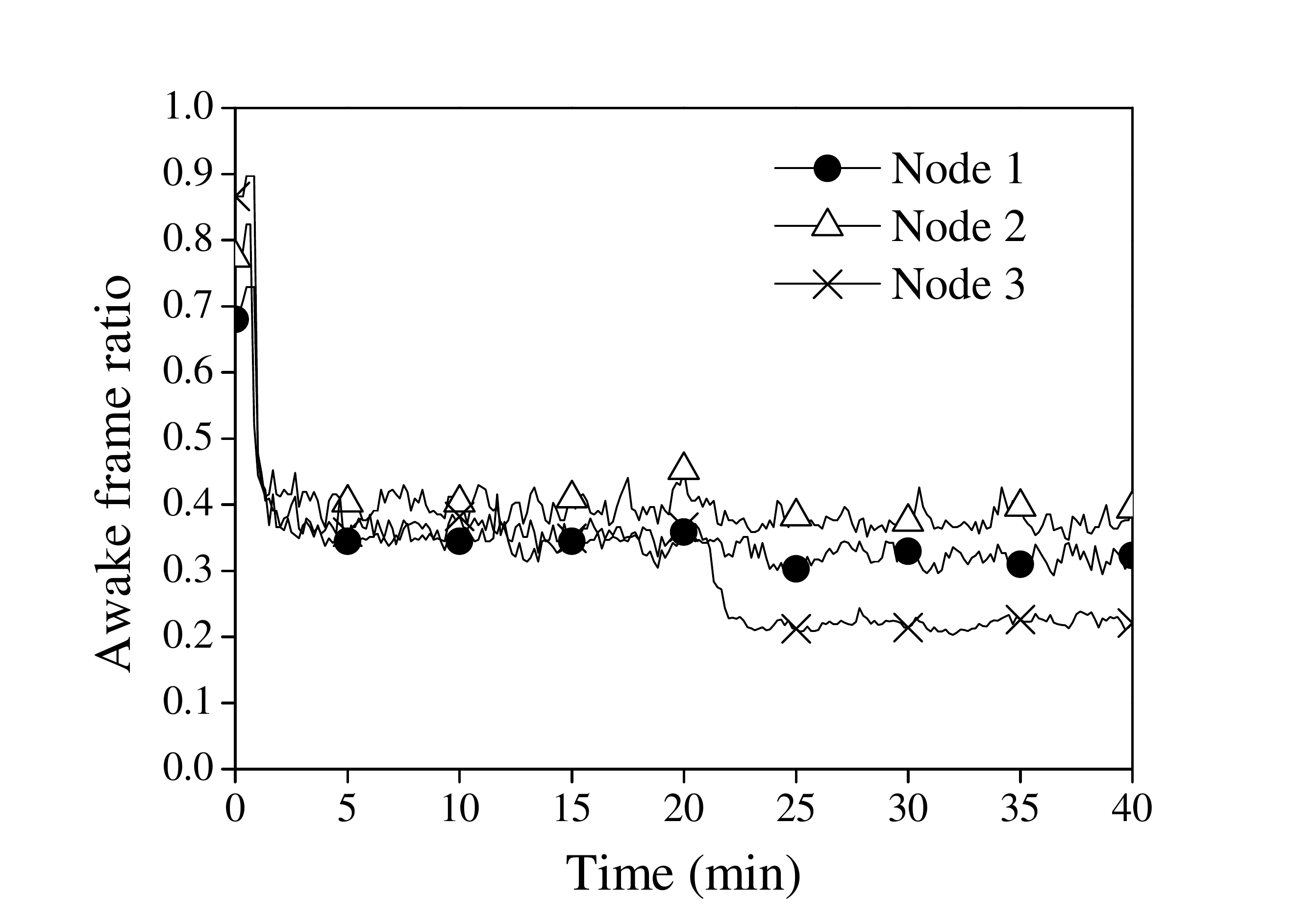}
        }
    \caption{Operation of the joint beam-splitting and energy neutral control method.}
    \label{fig:timebs}
\end{figure}

The operation of the time-sharing control method in Fig.~\ref{fig:timets} is very similar to the one of the beam-splitting control method, except that the performance of the time-sharing control method is slightly lower (i.e., higher stored energy deficiency and lower awake frame ratio) due to its inefficiency.
In Fig.~\ref{fig:timets}, we show the beamforming weight and receive power during a very short period (i.e., 10 seconds) since they change very rapidly.
In Fig.~\ref{fig:timerxpowts}, we can see that the power is focused on one node at a time, and each node is alternately selected in a time-sharing manner.

\begin{figure}
    \centering
    \subfigure[Beamforming weight]{
        \label{fig:timeweightts}\includegraphics[width=4.0cm, bb=1in 0.3in 9in 7.4in] {./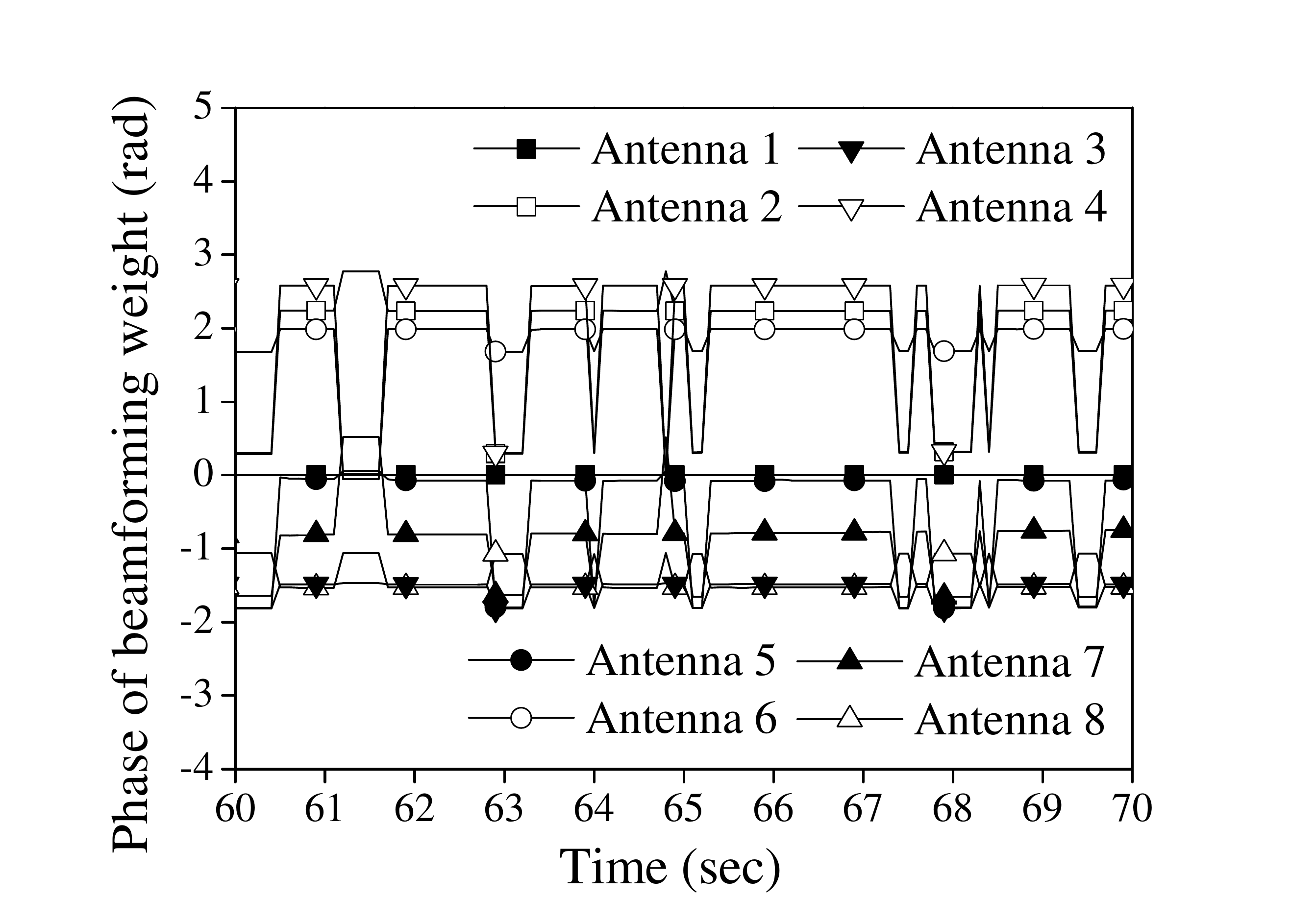}
        }~
    \subfigure[Receive power]{
        \label{fig:timerxpowts}\includegraphics[width=4.0cm, bb=1in 0.3in 9in 7.4in] {./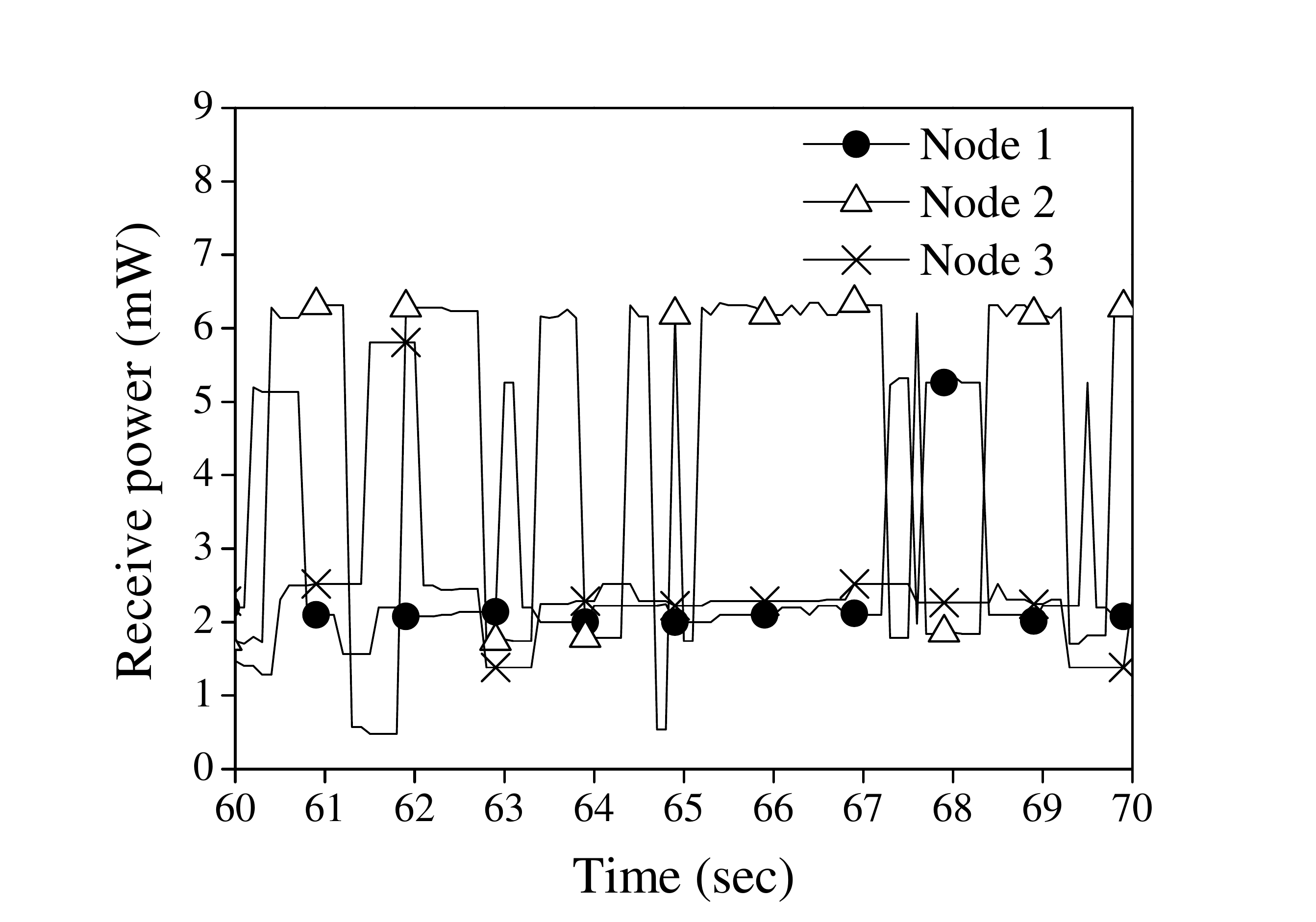}
        }\\
    \subfigure[Stored energy deficiency]{
        \label{fig:timedeficiencyts}\includegraphics[width=4.0cm, bb=1in 0.3in 9in 7.4in] {./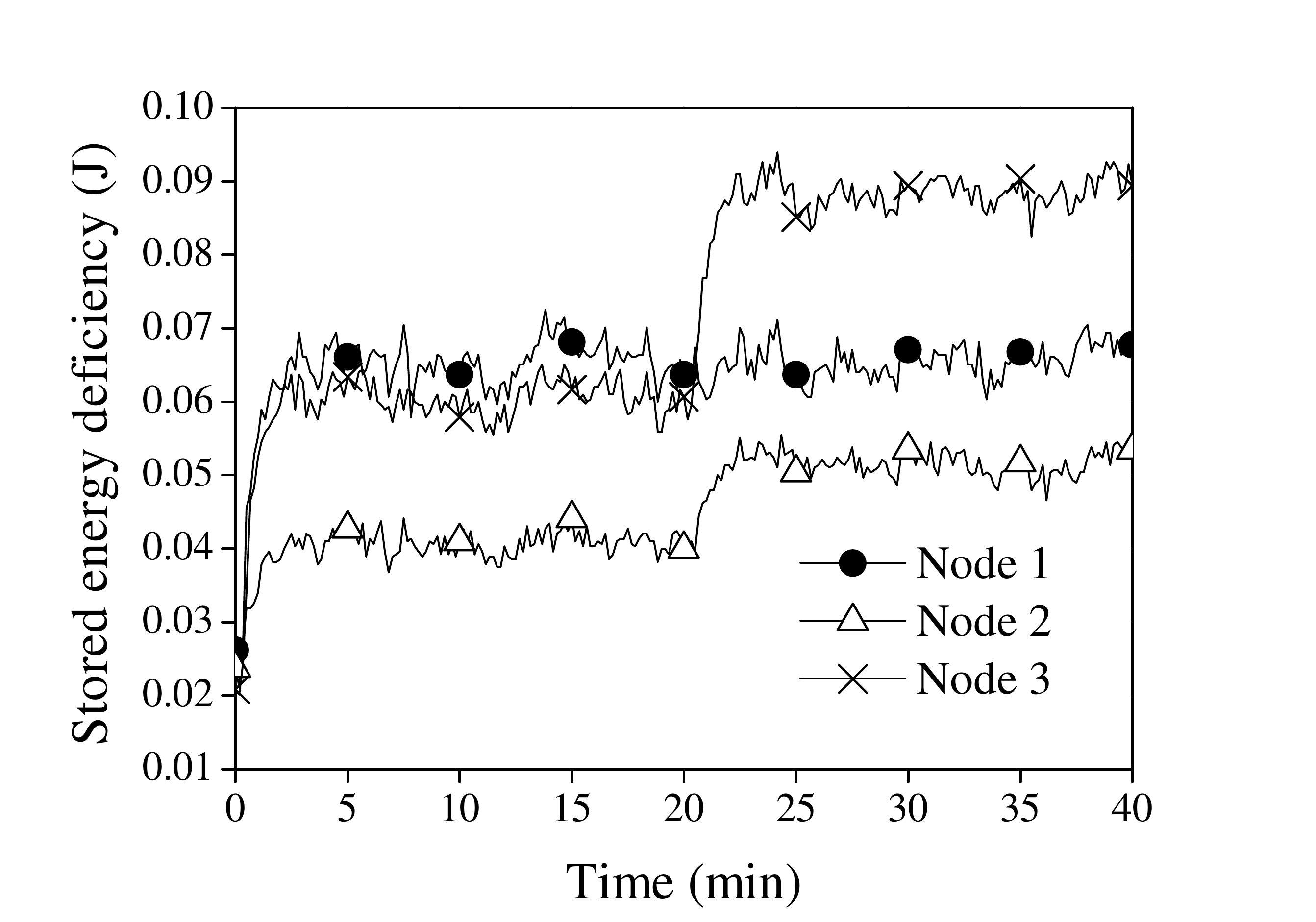}
        }~
    \subfigure[Awake frame ratio]{
        \label{fig:timeafrts}\includegraphics[width=4.0cm, bb=1in 0.3in 9in 7.4in] {./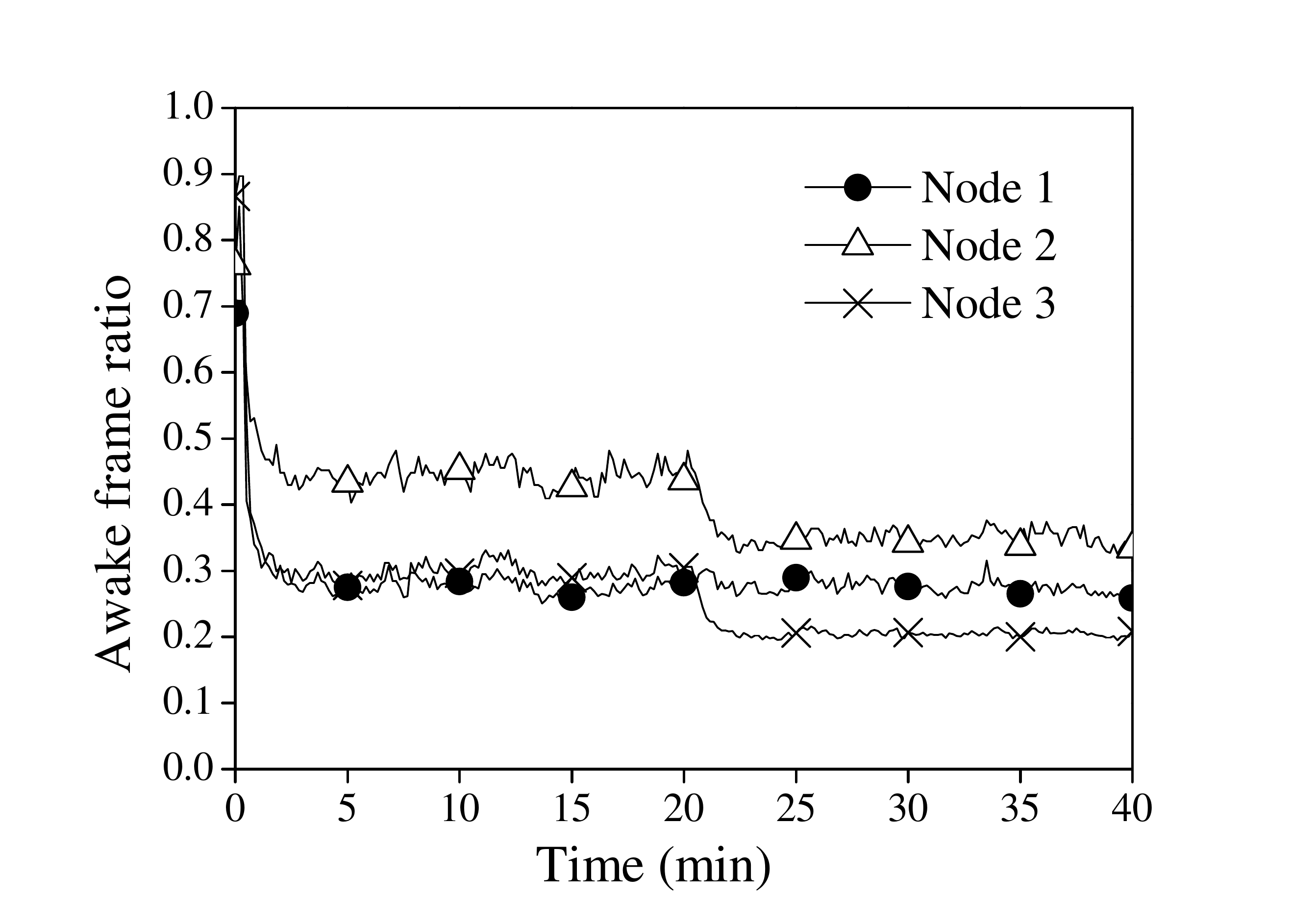}
        }
    \caption{Operation of the joint time-sharing and energy neutral control method.}
    \label{fig:timets}
\end{figure}

Fig.~\ref{fig:psi} shows the awake frame ratio and stored energy deficiency according to $\psi$.
The parameter $\psi$ is used for the utility function as given in \eqref{eq:utility}, and controls the fairness of the awake frame ratio.
In this figure, nodes 1, 2, and 3 are located at (1m, 0$^\circ$), (1.5m, 120$^\circ$), and (2m, 240$^\circ$), respectively.
To obtain the awake frame ratio and stored energy deficiency, we take an average over more than 100 seconds after they are stabilized.
In Fig.~\ref{fig:psiafr}, we can see that the awake frame ratios become less fairly controlled as we use higher $\psi$.
On the other hand, the higher $\psi$ enhances overall system efficiency in sacrifice of fairness, which results in lower stored energy deficiency as seen in Fig.~\ref{fig:psideficiency}.
We can also see that the beam-splitting control method (i.e., BS) generally has higher awake frame ratio and lower stored energy deficiency than the time-sharing control method (i.e., TS) has.

\begin{figure}
    \centering
    \subfigure[Awake frame ratio]{
        \label{fig:psiafr}\includegraphics[width=4.0cm, bb=1in 0.3in 9in 7.4in] {./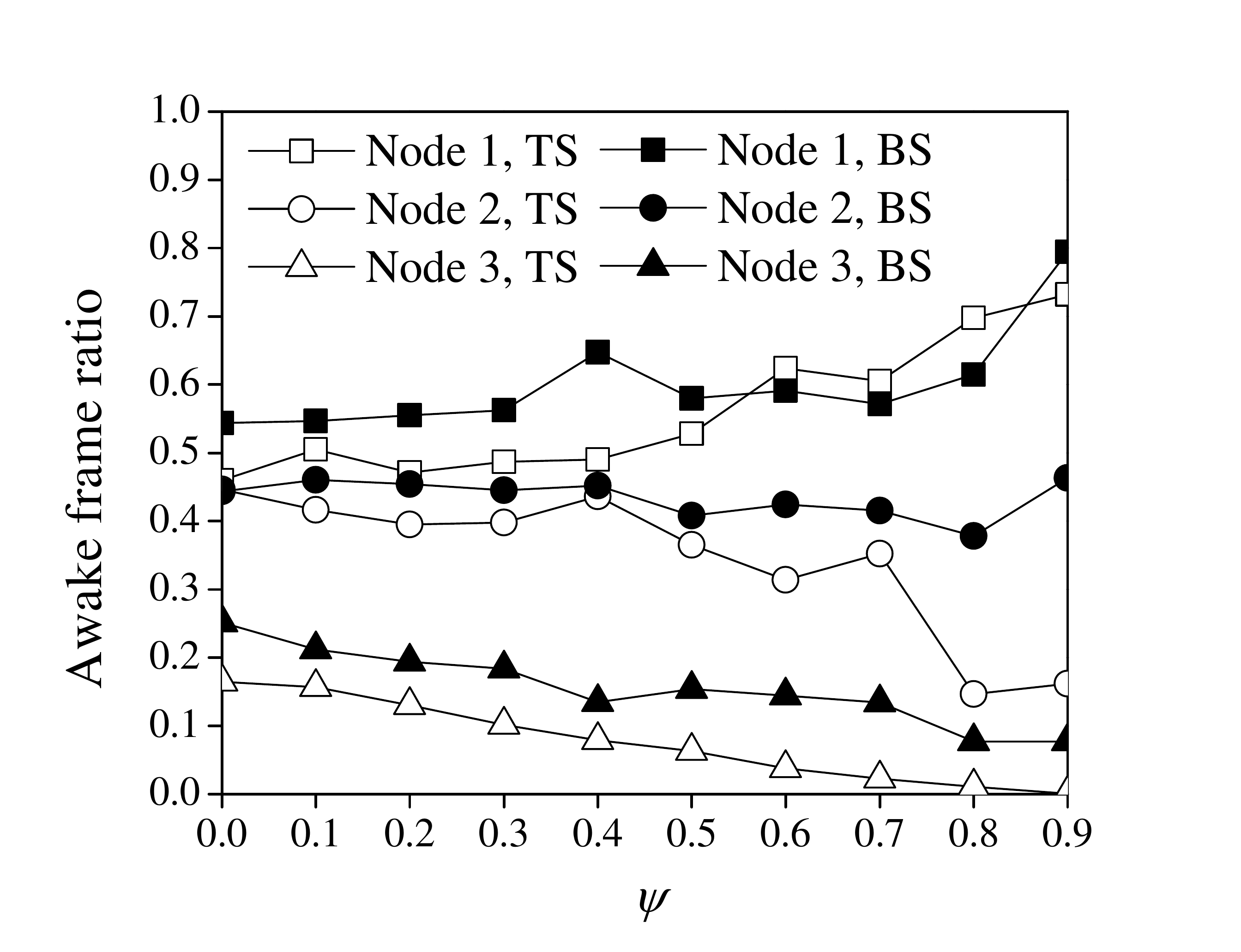}
        }~
    \subfigure[Stored energy deficiency]{
        \label{fig:psideficiency}\includegraphics[width=4.0cm, bb=1in 0.3in 9in 7.4in] {./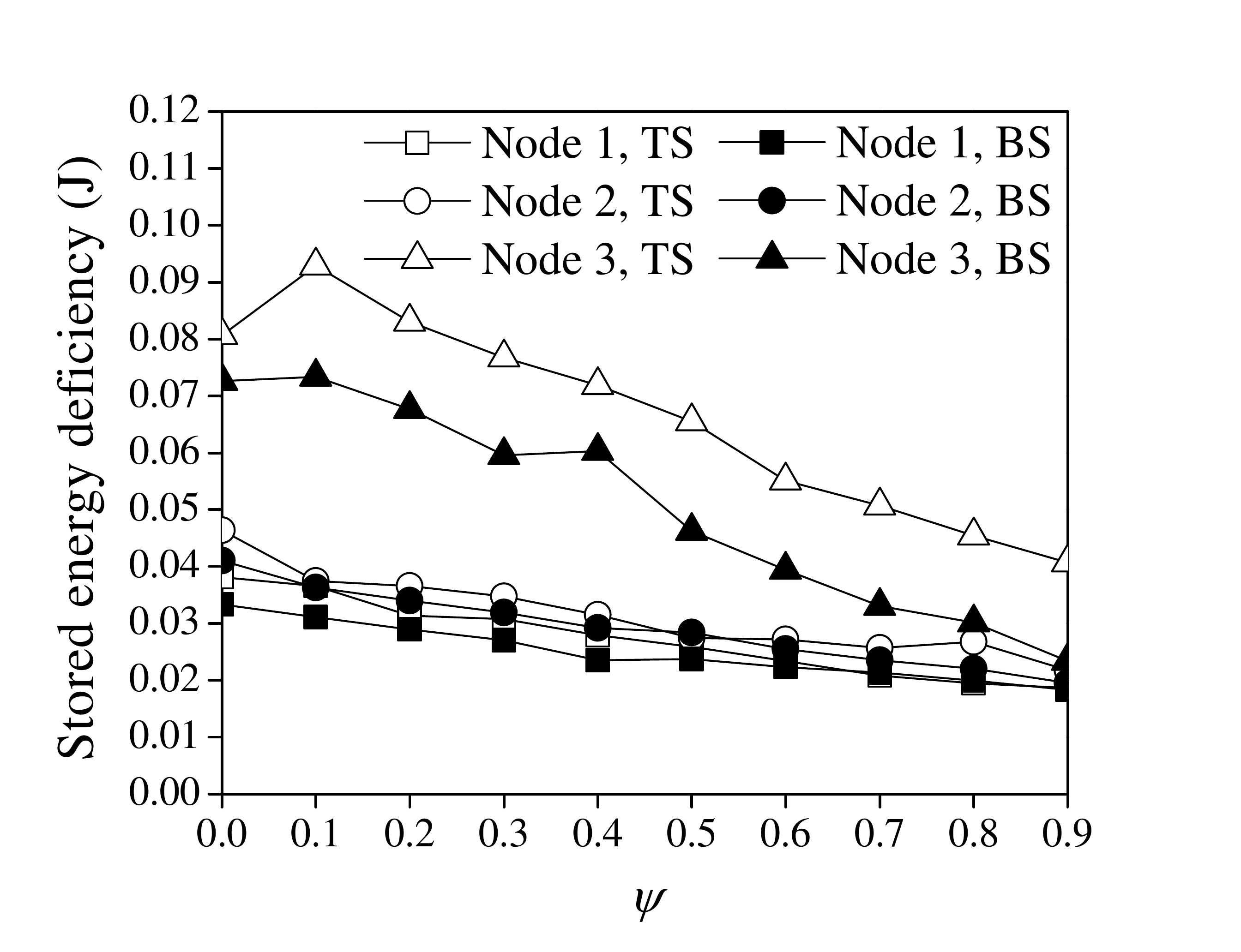}
        }
    \caption{Awake frame ratio and stored energy deficiency according to $\psi$.}
    \label{fig:psi}
\end{figure}

Fig.~\ref{fig:lambda} shows the utility and stored energy deficiency as a function of $\lambda$.
For this graph, the nodes are placed in the same locations as in Fig.~\ref{fig:psi}.
As seen in \eqref{eq:lambda}, $\lambda$ represents the importance of the sum utility in relation to the drift of the stored energy deficiency.
In Fig.~\ref{fig:lambda}, we can see that higher $\lambda$ results in higher utility at the cost of increased stored energy deficiency.
This figure also shows that the beam-splitting control method outperforms the time-sharing control method.

\begin{figure}
    \centering
    \subfigure[Utility]{
        \label{fig:lambdautility}\includegraphics[width=4.0cm, bb=1in 0.3in 9in 7.4in] {./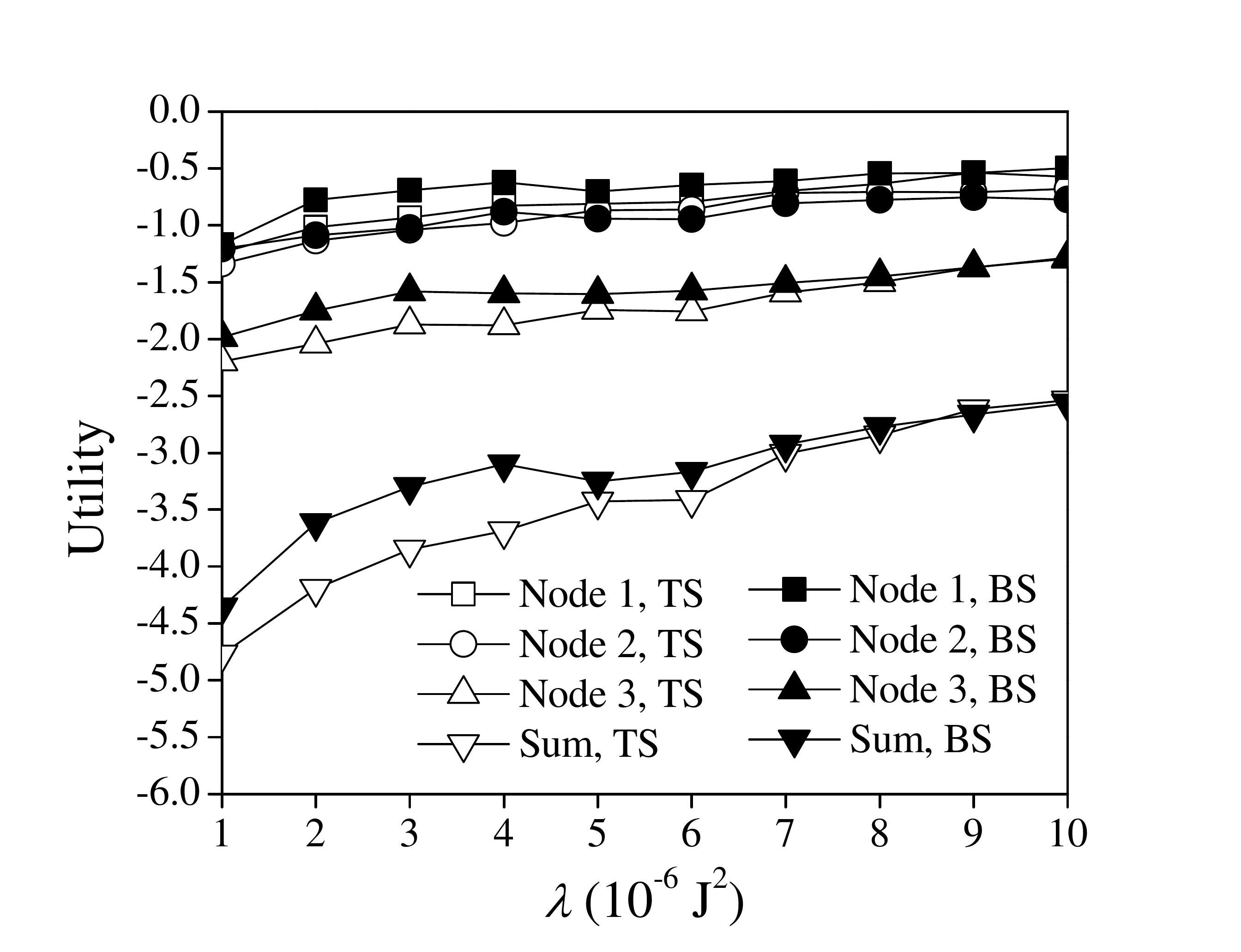}
        }~
    \subfigure[Stored energy deficiency]{
        \label{fig:lambdadeficiency}\includegraphics[width=4.0cm, bb=1in 0.3in 9in 7.4in] {./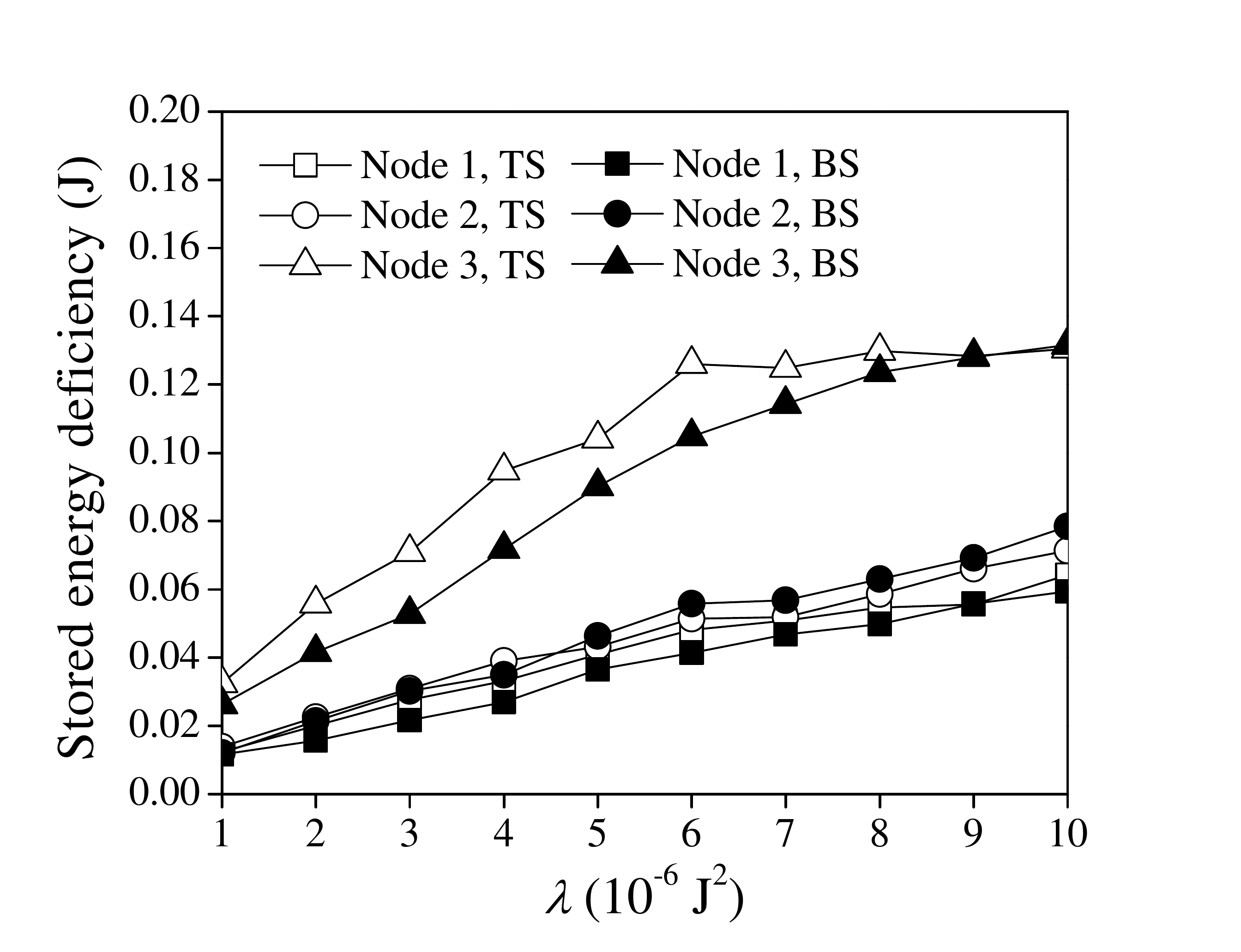}
        }
    \caption{Utility and stored energy deficiency according to $\lambda$.}
    \label{fig:lambda}
\end{figure}

\section{Conclusion}\label{section:conclusion}

In this paper, we have designed the joint beam-splitting and energy neutral control algorithm for the multi-node multi-antenna WPSN.
By using this proposed algorithm, we have solved two most important problems in the design of the multi-node multi-antenna WPSN: how to distribute RF power to multiple nodes and how to keep them alive for a perpetual operation.
The contribution of this work is distinguished from previous theoretical works in that the proposed algorithm is validated by the experiments in the real testbed.

\appendices

\section{Proof of Lemma~\ref{lemma:driftpluspenalty}}\label{proof:driftpluspenalty}

The drift-plus-penalty function satisfies that
\begin{align}\label{eq:proofdpp}
\begin{split}
&D(t) - \lambda \mbox{$\sum_{k=1}^K$} \mu(\sigma_k(t))\\
&\le\mbox{$\frac{1}{2}\sum_{k=1}^K$}\ex\big[ (E_\text{max}-\min\big\{E_k(t) + \Delta^+(r_k(t))\\
&\qquad\qquad\qquad\qquad\qquad\qquad - \Delta^-(a_k(t),E_k(t)),\ E_\text{max}\big\})^2\\
&\qquad\qquad\qquad- (E_\text{max}-E_k(t))^2\big|{\bold E}(t)] - \lambda \mbox{$\sum_{k=1}^K$} \mu(\sigma_k(t))\\
&\le -\mbox{$\sum_{k=1}^K$}(E_\text{max}-E_k(t))\\
&\qquad\times\ex[\Delta^+(r_k(t))-\Delta^-(a_k(t),E_k(t))|{\bold E}(t)]\\
&\quad+\mbox{$\frac{1}{2}\sum_{k=1}^K$}\ex[(\Delta^+(r_k(t))-\Delta^-(a_k(t),E_k(t)))^2|{\bold E}(t)]\\
&\quad- \lambda \mbox{$\sum_{k=1}^K$} \mu(\sigma_k(t))\\
&\le -\eta T_\text{es}\mbox{$\sum_{k=1}^K$} (E_\text{max}-E_k(t))\cdot r_k(t)\\
&\quad- \lambda\mbox{$\sum_{k=1}^K$}\big\{\mu(\sigma_k(t))-(\kappa(E_k(t))/\lambda)(E_\text{max}-E_k(t)) \sigma_k(t)\big\}\\
&\quad+ \mbox{$\sum_{k=1}^K$}(E_\text{max}-E_k(t))\varphi(E_k(t)) + \Upsilon.
\end{split}
\end{align}

\section{Proof of Theorem~\ref{theorem:optimality}}\label{proof:optimality}

If the proposed optimal control algorithm is used, we have the following inequality from \eqref{eq:proofdpp}:
\begin{align}\label{eq:proofopt1}
\begin{split}
&D(t) - \lambda \mbox{$\sum_{k=1}^K$} \mu(\sigma_k(t))\\
&\quad\le -\mbox{$\sum_{k=1}^K$}
(E_\text{max}-E_k(t))(\eta T_\text{es}\cdot \widetilde{r}_k(t) \\
&\quad\qquad\qquad- \kappa(E_k(t))\cdot \widetilde{\sigma}_k(t) - \varphi(E_k(t))) \\
&\quad\quad- \lambda \mbox{$\sum_{k=1}^K$} \mu(\widetilde{\sigma}_k(t)) + \Upsilon\\
&\quad\le -\mbox{$\sum_{k=1}^K$}
(E_\text{max}-E_k(t))(\eta T_\text{es}\cdot r^*_k(\epsilon) \\
&\quad\qquad\qquad- \kappa(E_k(t))\cdot \sigma^*_k(\epsilon) - \varphi(E_k(t))) \\
&\quad\quad- \lambda \mbox{$\sum_{k=1}^K$} \mu(\sigma^*_k(\epsilon)) + \Upsilon\\
&\quad\le -\mbox{$\sum_{k=1}^K$}
(E_\text{max}-E_k(t))\epsilon - \lambda U^*(\epsilon) + \Upsilon
\end{split}
\end{align}
By taking $\limsup_{\tau\rightarrow \infty}\frac{1}{\tau}\sum_{t=1}^\tau \ex$ on the both sides of \eqref{eq:proofopt1}, we have the following inequality:
\begin{align}\label{eq:proofopt2}
\begin{split}
&- \lambda \limsup_{\tau\rightarrow\infty} \frac{1}{\tau}\sum_{t=1}^\tau \sum_{k=1}^K \ex[\mu(\sigma_k(t))]\\
&\qquad + \epsilon\cdot \sum_{k=1}^K \limsup_{\tau\rightarrow\infty} \frac{1}{\tau}\sum_{t=1}^\tau
\ex[E_\text{max}-E_k(t)] \le \Upsilon - \lambda U^*(\epsilon).
\end{split}
\end{align}
From \eqref{eq:proofopt2}, we can obtain \eqref{eq:optutility} and \eqref{eq:optstoredenergy}.

%\nocite{*}
\bibliographystyle{IEEEtran}
\bibliography{IEEEabrv,MWPSN_BIB}

\end{document}